\documentclass[runningheads,envcountsame]{llncs}

\setcounter{tocdepth}{3}
\usepackage{graphicx}

\usepackage{hyperref}
\urldef{\mailAndreNicole}\path|{afrochaux,schweika}@informatik.uni-frankfurt.de|    
\urldef{\mailMartinGrohe}\path|grohe@informatik.rwth-aachen.de|

\usepackage{amsmath}

\usepackage{amssymb}
\usepackage{amsfonts}
\usepackage{stmaryrd}
\usepackage{enumerate}
\usepackage{multicol}
\usepackage[lined,boxed]{algorithm2e}

\usepackage{tikz}
\usetikzlibrary{arrows}

\usetikzlibrary{arrows,positioning} 
\tikzset{
    >=stealth',
    punkt/.style={

           text width=13em,
           minimum height=2em,
           text centered},
    pil/.style={
           ->,
           thick,
           shorten <=2pt,
           shorten >=2pt,}
}

\newcounter{ruleBoxNo}

\usepackage{framed, color}

\definecolor{shadecolor}{gray}{.95}
\definecolor{framecolor}{gray}{.65}

{ \rightline{\textit{Box \theruleBoxNo}} \endMakeFramed}

          \newcommand{\EXPTIME}{\textnormal{\textnormal{\textsc{Exptime}}}}
                    \newcommand{\TwoEXPTIME}{\textnormal{\textnormal{\textsc{2Exptime}}}}
          \newcommand{\T}{\ensuremath{\mathcal{T}}}

\newcommand{\AF}[1]{{#1}}

          \newcommand{\lang}{\ensuremath{\mathcal{L}}}

          \newtheorem{Theorem}[theorem]{Theorem}
          \newtheorem{Proposition}[theorem]{Proposition}
          
          \newtheorem{Fact}[theorem]{Fact}
          \newtheorem{Lemma}[theorem]{Lemma}
          
          \newtheorem{Corollary}[theorem]{Corollary}

          \newtheorem{Def}[theorem]{Definition}

          \newtheorem{Claim}{Claim} 
		  
          \DeclareMathOperator{\MSO}{\textnormal{MSO}}
          
          \DeclareMathOperator{\mDatalog}{\textnormal{mDatalog}}

          \DeclareMathOperator{\Root}{\textnormal{\textbf{root}}}

          \DeclareMathOperator{\Lc}{\textnormal{\textbf{lc}}}

          \DeclareMathOperator{\Rc}{\textnormal{\textbf{rc}}}
          \DeclareMathOperator{\Label}{\textnormal{\textbf{label}}}
          \DeclareMathOperator{\Child}{\textnormal{\textbf{child}}}

          \DeclareMathOperator{\Desc}{\textnormal{\textbf{desc}}}

          \DeclareMathOperator{\Fc}{\textnormal{\textbf{fc}}}

          \DeclareMathOperator{\Ns}{\textnormal{\textbf{ns}}}

          \DeclareMathOperator{\Ls}{\textnormal{\textbf{ls}}}
          \DeclareMathOperator{\Leaf}{\textnormal{\textbf{leaf}}}

          \DeclareMathOperator{\Hnrc}{\textnormal{\textbf{has\_no\_rc}}}
          \DeclareMathOperator{\Hnlc}{\textnormal{\textbf{has\_no\_lc}}}
          \DeclareMathOperator{\Isrc}{\textnormal{\textbf{is\_rc}}}
          \DeclareMathOperator{\Islc}{\textnormal{\textbf{is\_lc}}}

          \DeclareMathOperator{\Accept}{\textnormal{\textbf{accept}}}
          \DeclareMathOperator{\Reject}{\textnormal{\textbf{reject}}}

          \DeclareMathOperator{\idb}{\textnormal{idb}}
          \DeclareMathOperator{\edb}{\textnormal{edb}}
          \DeclareMathOperator{\Var}{\textnormal{var}}

          %problemboxen
          \newsavebox{\probbox}  
          \newlength{\problength}  
          \newenvironment{Problem}[1]{        
            \begin{center}  
            \setlength{\problength}{0.9\textwidth} 
            \begin{lrbox}{\probbox}        
            \begin{minipage}{\problength}   
               \textnormal{\textsc{#1}}     
                \begin{quote}    
                \begin{description}  
                }{     
                \end{description}         
                  \end{quote}  
            \end{minipage}\end{lrbox}\noindent\fbox{\usebox{\probbox}}   
            \end{center}    
           }                                                                            
          %items f�r die 'Problembox'                                                    
          \newcommand{\In}{\item[\textnormal{\textit{Input:}}]}

          \newcommand{\Quest}{\item[\textnormal{\textit{Question:}}]}

          \newcommand{\Yes}{\ensuremath{\textnormal{\textbf{yes}}}} 
          \newcommand{\No}{\ensuremath{\textnormal{\textbf{no}}}} 
          \newcommand{\True}{\ensuremath{\textnormal{\textbf{true}}}} 
          \newcommand{\False}{\ensuremath{\textnormal{\textbf{false}}}}

          \renewcommand{\phi}{\varphi}

          \newenvironment{mi}{\begin{itemize}}{\end{itemize}}
          \newenvironment{me}{\begin{enumerate}[(1)]}{\end{enumerate}}
          \newenvironment{mea}{\begin{enumerate}[(a)]}{\end{enumerate}}
          \newenvironment{mei}{\begin{enumerate}[(i)]}{\end{enumerate}}

          \newenvironment{proofsketch}{\begin{proof}[sketch]}{\end{proof}}

          \newenvironment{proofof}[1]{ \bigskip \noindent\textbf{Proof
              of #1:}\\ }{\qed}

          \newcommand{\nc}{\newcommand}
          \nc{\rnc}{\renewcommand}

          \rnc{\geq}{\ensuremath{\geqslant}}
          \rnc{\leq}{\ensuremath{\leqslant}}

          \nc{\deff}{:=}

          \nc{\set}[1]{\ensuremath{\{ #1 \}}}
          \nc{\setc}[2]{\set{#1 \,:\, #2}}

          \nc{\isom}{\ensuremath{\cong}}

          \nc{\dist}{\ensuremath{\textit{dist}}}

          \nc{\ov}[1]{\ensuremath{\overline{#1}}}
          \nc{\mymod}{\ensuremath{\textup{ mod }}}

          \nc{\NN}{\ensuremath{\mathbb{N}}}
          \nc{\Z}{\ensuremath{\mathbb{Z}}} 
          \nc{\NNpos}{\ensuremath{\NN_{\mbox{\tiny $\scriptscriptstyle \geq 1$}}}}

          \nc{\und}{\ensuremath{\wedge}}
          \nc{\Und}{\ensuremath{\bigwedge}}
          \nc{\oder}{\ensuremath{\vee}}
          \nc{\Oder}{\ensuremath{\bigvee}}
          \nc{\nicht}{\ensuremath{\neg}}
          \nc{\impl}{\ensuremath{\rightarrow}}
          \nc{\gdw}{\ensuremath{\leftrightarrow}}

          \nc{\free}{\ensuremath{\textit{free}}}
          \nc{\ar}{\ensuremath{\textit{ar}}} 
          \nc{\qr}{\ensuremath{\textit{qr}}}

          \nc{\A}{\ensuremath{\mathcal{A}}}
          \nc{\B}{\ensuremath{\mathcal{B}}}
          \nc{\C}{\ensuremath{\mathcal{C}}}
          \nc{\PP}{\ensuremath{\mathcal{P}}}
          \rnc{\S}{\ensuremath{\mathcal{S}}}

          \nc{\SGK}{\ensuremath{\S_{\textit{GK}}}}

          \nc{\tauGK}{\ensuremath{\tau_{\textit{GK}}}}
          \nc{\tauGKAlph}[1]{\ensuremath{\tau_{\textit{GK},#1}}}
          \nc{\tauGKSigma}{\tauGKAlph{\Sigma}}

          \nc{\tauo}{\ensuremath{\tau_o}}
          \nc{\tauoAlph}[1]{\ensuremath{\tau_{o,#1}}}
          \nc{\tauoSigma}{\tauoAlph{\Sigma}}

          \nc{\tauu}{\ensuremath{\tau_u}}
          \nc{\tauuAlph}[1]{\ensuremath{\tau_{u,#1}}}
          \nc{\tauuSigma}{\tauuAlph{\Sigma}}

          \nc{\taub}{\ensuremath{\tau_b}}
          \nc{\taubAlph}[1]{\ensuremath{\tau_{b,#1}}}
          \nc{\taubSigma}{\taubAlph{\Sigma}}

          \nc{\Class}{\ensuremath{\mathfrak{C}}}

          \nc{\atoms}{\ensuremath{\textit{atoms}}}

          \nc{\rroot}{\ensuremath{\textit{root}}}

\nc{\IDB}[1]{\ensuremath{\textit{#1}}}
\nc{\ANS}{\IDB{Ans}}
\nc{\WHITE}{\IDB{White}}
\nc{\size}[1]{\ensuremath{|\!|#1|\!|}}

\nc{\Bool}[1]{\ensuremath{#1_{\textit{Bool}}}}
\nc{\Aut}[1]{\ensuremath{\texttt{\upshape #1}}} 
\nc{\ATA}[1]{\ensuremath{\hat{\texttt{\upshape #1}}}} 
\nc{\ATAA}{\ATA{A}}

\nc{\PT}{\ensuremath{\textit{PT}}}
\nc{\myparagraph}[1]{ \smallskip \textbf{#1.} }
\nc{\poly}{\ensuremath{\textup{poly}}}

\nc{\Bin}[1]{\ensuremath{\textit{bin}(#1)}}

\nc{\Proj}{\ensuremath{\textit{proj}}}

\nc{\emptystring}{\ensuremath{\varepsilon}}

\nc{\Up}{\ensuremath{\textit{up}}}
\nc{\Stay}{\ensuremath{\textit{stay}}}
\nc{\DownLeft}{\ensuremath{\textit{left}}}
\nc{\DownRight}{\ensuremath{\textit{right}}}

\nc{\Op}{\ensuremath{\textit{Op}}}

\nc{\KK}{\ensuremath{\mathcal{K}}}
\nc{\DD}{\ensuremath{\mathcal{D}}}

\nc{\TPCT}{\ensuremath{\text{TPCT}}}

\nc{\Relevant}{\ensuremath{\textit{Relevant}}}
\nc{\Candidate}{\ensuremath{\textit{Candidate}}}
\nc{\Column}{\ensuremath{\textit{Column}}}
\nc{\Buggy}{\ensuremath{\textit{Buggy}}}
\nc{\Okay}{\ensuremath{\textit{Okay}}}

\nc{\PPrel}{\ensuremath{\PP_{\Relevant}}}
\let\doendproof\endproof \renewcommand\endproof{~\hfill$\qed$\doendproof} 

\begin{document}

\mainmatter 
\title{Monadic Datalog Containment on Trees\thanks{This article is the full version of \cite{FGS}.}}
\author{Andr\'e Frochaux\inst{1} \and Martin Grohe\inst{2} \and Nicole Schweikardt\inst{1}}
\institute{
 Goethe-Universit\"at Frankfurt am Main, \mailAndreNicole
 \and RWTH Aachen University, \mailMartinGrohe}

\maketitle

\begin{abstract}
We show that the query containment problem for monadic datalog on
finite unranked labeled trees can be solved in 2-fold exponential time when
(a) considering unordered trees using the axes \emph{child} and
\emph{descendant}, 
and when (b) considering ordered trees using the axes
\emph{firstchild}, \emph{nextsibling},  \emph{child}, and \emph{descendant}. 
When omitting the \emph{descendant}-axis, we obtain that in both cases
the problem is $\EXPTIME$-complete.
\end{abstract}

\section{Introduction}\label{section:Introduction}

The query containment problem (QCP) is a fundamental problem that has been
studied for various query languages. Datalog is a standard tool for
expressing queries with recursion. From Cosmadakis et al.\ \cite{CGKV}
and Benedikt et al.\ \cite{DBLP:conf/icalp/BenediktBS12} it is known
that the QCP for \emph{monadic} datalog queries on the class of
all finite relational structures is $\TwoEXPTIME$-complete.
Restricting attention to finite unranked labeled trees,
Gottlob and Koch \cite{GottlobKoch}
showed that on \emph{ordered} trees the QCP for monadic datalog is 
$\EXPTIME$-hard and decidable, leaving open the question of a tight bound.

Here we show a matching $\EXPTIME$ upper bound for the QCP for monadic
datalog on ordered trees using the axes
\emph{firstchild}, \emph{nextsibling}, and \emph{child}. When adding
the \emph{descendant}-axis, we obtain a $\TwoEXPTIME$ upper bound. 
This, in particular, also yields a $\TwoEXPTIME$ upper bound for the
QCP for monadic datalog 
on \emph{unordered} trees using the axes \emph{child} and
\emph{descendant}, and an $\EXPTIME$ upper bound for unordered trees
using only the \emph{child}-axis. The former result answers a 
question posed by Abiteboul et al.\ in \cite{AbiteboulBMW13}. We
complement the latter result by a matching lower bound.

The paper is organised as follows. 
Section~\ref{section:Preliminaries} fixes the basic notation concerning
datalog queries, (unordered and ordered) trees and their representations as
logical structures, and summarises basic properties of monadic datalog
on trees.
Section~\ref{section:QCPofMonadicDatalog} presents our main results
regarding the query containment problem for monadic datalog on trees. 
Due to space limitations, most technical details had to be deferred to
the appendix of this paper.

\section{Trees and Monadic Datalog ($\mDatalog$)}\label{section:Preliminaries}\label{section:Datalog}

Throughout this paper, $\Sigma$ will
always denote a finite non-empty alphabet.
\\
By $\NN$ we denote the set of non-negative integers, and we let $\NNpos\deff\NN\setminus\set{0}$.

\myparagraph{Relational Structures}
As usual, a \emph{schema} $\tau$ consists of a
finite number of relation symbols $R$, each of a fixed
\emph{arity} $\ar(R)\in\NNpos$.
A \emph{$\tau$-structure} $\A$ consists of a \emph{finite} non-empty
set $A$ called the \emph{domain}
of $\A$, and a relation
$R^{\A}\subseteq A^{\ar(R)}$ for each relation symbol $R\in\tau$. 
It will often be convenient to
identify $\A$ with the \emph{set of atomic facts of $\A$},
i.e., the set $\atoms(\A)$ consisting of all facts $R(a_1,\ldots,a_{\ar(r)})$
for all relation symbols $R\in\tau$ and all tuples $(a_1,\ldots,a_{\ar(R)})\in R^\A$.

If $\tau$ is a schema and $\ell$ is a list of relation symbols, we
write $\tau^\ell$ to denote the extension of the schema $\tau$ by the
relation symbols in $\ell$.
Furthermore, $\tau_\Sigma$ denotes the extension of $\tau$ by
new unary relation symbols $\Label_\alpha$, for all $\alpha\in\Sigma$.

\myparagraph{Unordered Trees}
An \emph{unordered $\Sigma$-labeled tree} $T=(V^T,\lambda^T,E^T)$
consists of a finite set $V^T$ of nodes, a function  
$\lambda^T: V^T\to \Sigma$ assigning to each node $v$ of $T$ a label
$\lambda(v)\in\Sigma$, and a set $E^T\subseteq V^T\times V^T$ of directed edges such
that the graph $(V^T,E^T)$ is a rooted tree where edges are directed
from the root towards the leaves.
We represent such a tree $T$ as a relational structure of domain $V^T$
with unary and binary relations: For each label $\alpha\in
\Sigma$, $\Label_\alpha(x)$ expresses that $x$ is a node with label
$\alpha$;
$\Child(x,y)$ expresses that $y$ is a child of node $x$; 
$\Root(x)$ expresses that $x$ is the tree's root node;
$\Leaf(x)$ expresses that $x$ is a leaf; and $\Desc(x,y)$ expresses
that $y$ is a descendant of $x$ (i.e., $y$ is a child or a grandchild
or \ldots of $x$).
We denote this relational structure
representing $T$ by $\S_u(T)$, but when no confusion
arises we simply write $T$ instead of $\S_u(T)$.

The queries we consider for unordered trees are allowed to make use of at least the
predicates $\Label_\alpha$ and $\Child$.
We fix the schema
\[
  \tauu \deff \set{\Child}.
\]
The representation of unordered $\Sigma$-labeled trees as $\tauuSigma$-structures was
considered, e.g., in \cite{AbiteboulBMW13}.

\myparagraph{Ordered Trees}
An \emph{ordered} $\Sigma$-labeled tree $T=(V^T,\lambda^T,E^T,\textit{order}^T)$
has the same components as an unordered $\Sigma$-labeled tree and,
in addition, $\textit{order}^T$ fixes for each node $u$ of $T$,
a strict linear order of all the children of $u$ in $T$.

To represent such a tree as a relational structure, we use the same
domain and the same 
predicates as for unordered $\Sigma$-labeled trees, along with three further
predicates $\Fc$ (``first-child''), $\Ns$ (``next-sibling''), and
$\Ls$ (``last sibling''), where $\Fc(x,y)$ expresses that $y$
is the first child of node $x$ (w.r.t.\ the linear order of the
children of $x$ induced by $\textit{order}^T$); $\Ns(x,y)$
expresses that $y$ is the right sibling of $x$ (i.e., $x$ and $y$ have
the same parent $p$, and $y$ is the immediate successor of $x$ in the
linear order of $p$'s children given by $\textit{order}^T$); and 
$\Ls(x)$ expresses that $x$ is the rightmost sibling (w.r.t.\ the
linear order of the children of $x$'s parent given by $\textit{order}^T$).
We denote this relational structure representing $T$ by $\S_o(T)$,
but when no confusion arises we simply write $T$ instead of $\S_o(T)$.

The queries we consider for ordered trees are allowed to make use of
at least the predicates $\Label_\alpha$, $\Fc$, and $\Ns$.
We fix the schemas
\[
  \tauo \deff \set{\Fc, \, \Ns}
  \qquad and \qquad
  \tauGK\deff\tauo^{\Root,\Leaf,\Ls}.
\]
In \cite{GottlobKoch}, Gottlob and Koch represented ordered
$\Sigma$-labeled trees 
as $\tauGKSigma$-structures.

\myparagraph{Datalog}
We assume that the reader is familiar with the syntax and semantics of
\emph{datalog} (cf., e.g., \cite{Datalog-Bible,GottlobKoch}).
Predicates that occur in the head of
some rule of 
a datalog program
$\PP$ are called \emph{intensional}, whereas predicates that only
occur in the body of rules of $\PP$ are called \emph{extensional}.
By $\idb(\PP)$ and
$\edb(\PP)$ we denote the sets of intensional and extensional predicates
of $\PP$, resp. We say that $\PP$ \emph{is of schema $\tau$}
if $\edb(\PP)\subseteq \tau$.
We write $\T_{\PP}$ to denote the \emph{immediate
  consequence operator} associated with a datalog program $\PP$.
Recall that $\T_{\PP}$ maps a set $C$ of atomic facts to the set of all
atomic facts that are derivable from $C$ by at most one
application of the rules of $\PP$ (see e.g.\ \cite{Datalog-Bible,GottlobKoch}). 
The monotonicity of $\T_{\PP}$
implies that for each finite set $C$, the iterated application of
$\T_{\PP}$ to $C$ leads to a fixed point, denoted by
$\T_{\PP}^\omega(C)$, which is reached after a finite number of
iterations.

\myparagraph{Monadic datalog queries}
A datalog program belongs to \emph{monadic datalog}
($\mDatalog$, for short), 
if all its \emph{intensional} predicates have arity~1. 

A \emph{unary monadic datalog query} of schema $\tau$ is a tuple $Q=(\PP,P)$
where $\PP$ is a monadic datalog program of schema $\tau$ and $P$ is an
intensional predicate of $\PP$.
$\PP$ and $P$ are called the \emph{program} and the \emph{query
  predicate} of $Q$.
When evaluated in a finite $\tau$-structure $\A$ that represents a
labeled tree
$T$, the query $Q$ results in the unary relation
\,$\AF{Q}(T) \, \deff \,
  \setc{ a\in A\,}{\,  P(a) \,\in\,
    \T_\PP^\omega(\atoms(\A))\, }$.

The \emph{Boolean monadic datalog query} $\Bool{Q}$ specified by
$Q=(\PP,P)$ is the Boolean query with $\AF{\Bool{Q}}(T)=\Yes$ iff the
tree's root node belongs to $\AF{Q}(T)$.

The \emph{size} $\size{Q}$ of a monadic datalog query $Q$ is the
length of $Q=(\PP,P)$ viewed as a string over a suitable alphabet.

\myparagraph{Expressive power of monadic datalog on trees}
From Gottlob and Koch \cite{GottlobKoch} we know that on \emph{ordered}
$\Sigma$-labeled trees represented as $\tauGKSigma$-structures, 
monadic datalog can express exactly the same unary queries as monadic
second-order logic --- for short, we will say ``$\mDatalog(\tauGK)=\MSO(\tauGK)$ on
ordered trees''. Since the $\Child$ and $\Desc$ relations are
definable in $\MSO(\tauGK)$, this implies that
$\mDatalog(\tauGK)=\mDatalog(\tauGK^{\Child,\Desc})$ on ordered trees.

On the other hand, using the monotonicity of the immediate
consequence operator, 
one obtains
that removing any of the predicates $\Root,
\Leaf,\Ls$ from $\tauGK$ strictly decreases the expressive power of
$\mDatalog$ on ordered trees (see \cite{DBLP:journals/corr/FroSchw13}). 
By a similar reasoning
one also obtains that on \emph{unordered} trees,
represented as $\tauuSigma^{\Root,\Leaf,\Desc}$-structures, monadic
datalog is strictly less expressive than monadic
second-order logic, and omitting any of the predicates $\Root$, $\Leaf$ 
further reduces the expressiveness of
monadic datalog on unordered trees \cite{DBLP:journals/corr/FroSchw13}.

\section{Query Containment for Monadic Datalog on Trees}\label{section:QCPofMonadicDatalog}

Let $\tau_\Sigma$ be one of the schemas introduced in
Section~\ref{section:Preliminaries} for representing (ordered or
unordered) $\Sigma$-labeled trees as relational structures.
For two unary queries $Q_1$ and $Q_2$ of schema $\tau_\Sigma$
we write $Q_1 \subseteq Q_2$ 
to indicate that for every  $\Sigma$-labeled
tree $T$ we have $\AF{Q_1}(T) \subseteq \AF{Q_2}(T)$.
Similarly, if $Q_1$ and $Q_2$ are \emph{Boolean} queries of schema $\tau_\Sigma$, we write
$Q_1\subseteq Q_2$ to indicate that for every $\Sigma$-labeled tree
$T$, if $\AF{Q_1}(T)=\Yes$ then also $\AF{Q_2}(T)=\Yes$.
We write $Q_1\not\subseteq Q_2$ to indicate that
$Q_1\subseteq Q_2$ does not hold. 
The \emph{query containment problem} (QCP, for short) is defined as follows:

\begin{Problem}{The QCP for $\mDatalog(\tau)$ on trees}
	\In A finite alphabet $\Sigma$ and\\ two (unary or Boolean) $\mDatalog(\tau_\Sigma)$-queries $Q_1$ and
        $Q_2$. 
        \Quest Is $Q_1\subseteq Q_2$\,?
\end{Problem}

It is not difficult to see that
this problem is decidable: the first step is to observe that monadic
datalog can effectively be embedded into monadic second-order logic, the second
step then applies the well-known result that the monadic second-order
theory of finite labeled trees is decidable (cf., e.g., \cite{WThomas-Handbook-survey,tata2008}).

Regarding ordered trees represented as $\tauGK$-structures, in
\cite{GottlobKoch} it was shown that the QCP for unary
$\mDatalog(\tauGK)$-queries on trees is $\EXPTIME$-hard.
Our first main result generalises this to unordered trees represented
as $\tau_u$-structures:

\newcounter{Counter_Thm:unorderedExptimeHard}
\setcounter{Counter_Thm:unorderedExptimeHard}{\value{theorem}}

\begin{Theorem}\label{Thm:unorderedExptimeHard}\ \\
 The QCP for Boolean $\mDatalog(\tau_u)$ on unordered trees is $\EXPTIME$-hard.
\end{Theorem}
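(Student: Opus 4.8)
The plan is to reduce in polynomial time from the QCP for \emph{unary} $\mDatalog(\tauGK)$-queries on \emph{ordered} trees, which is $\EXPTIME$-hard by Gottlob and Koch~\cite{GottlobKoch}. That problem has a negative instance precisely when there exist an ordered $\Sigma$-tree $T$ and a node $v$ of $T$ with $v\in Q_1^{\mathrm{GK}}(T)$ but $v\notin Q_2^{\mathrm{GK}}(T)$, so it suffices to encode each such pair $(T,v)$ as an unordered $\Sigma'$-labelled tree, with $\Sigma'$ obtained from $\Sigma$ by adding a constant number of fresh symbols, and to construct from $Q_1^{\mathrm{GK}},Q_2^{\mathrm{GK}}$ two \emph{Boolean} $\mDatalog(\tau_u)$-queries $Q_1,Q_2$ such that $Q_1\subseteq Q_2$ on unordered trees if and only if $Q_1^{\mathrm{GK}}\subseteq Q_2^{\mathrm{GK}}$ on ordered trees.

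The core is an encoding $\mathit{enc}(T,v)$ that keeps a \emph{real} copy of every node of $T$ --- carrying its original $\Sigma$-label and, for the node $v$, additionally equipped with a child carrying a fresh label $\mathsf{sel}$ --- and hangs below each real copy a bounded number of auxiliary \emph{marker} nodes whose only purpose is to make the five predicates $\Fc,\Ns,\Root,\Leaf,\Ls$ of $\tauGK$ recoverable from $\Child$ alone, and, crucially, in a \emph{positive} way. Concretely, the first child of a node in $T$ is placed below a child labelled $\mathsf{first}$, while a child labelled $\mathsf{leaf}$ witnesses that the node is a leaf of $T$; symmetrically, a child labelled $\mathsf{next}$ carries the next sibling and a child labelled $\mathsf{last}$ witnesses a last sibling; and a fresh $\mathsf{root}$-labelled node is placed on top of the encoding of $T$'s root. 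One then replaces $\Fc(x,y)$ by $\Child(x,z)\und\Label_{\mathsf{first}}(z)\und\Child(z,y)\und\mathit{Real}(y)$, $\Leaf(x)$ by $\exists z\,(\Child(x,z)\und\Label_{\mathsf{leaf}}(z))$, $\Root(x)$ by $\exists z\,(\Child(z,x)\und\Label_{\mathsf{root}}(z))$, and $\Ns,\Ls$ analogously, where $\mathit{Real}$ is a fresh unary idb with the rules $\mathit{Real}(x)\leftarrow\Label_\alpha(x)$ for $\alpha\in\Sigma$. Carrying out this substitution in the programs of $Q_1^{\mathrm{GK}},Q_2^{\mathrm{GK}}$ is a legal monadic datalog rewriting --- it only introduces fresh existentially quantified variables --- and yields, for each $i$, a $\tau_u$-program computing a unary predicate $S_i$; adding the rules $P_i(x)\leftarrow S_i(x),\Child(x,z),\Label_{\mathsf{sel}}(z)$ and $P_i(x)\leftarrow\Child(x,y),P_i(y)$ makes the Boolean query $Q_i$ with query predicate $P_i$ fire at the root exactly when some $\mathsf{sel}$-marked node is selected by $S_i$, and a routine induction on the stages of the immediate-consequence operator shows that on a well-formed input $\mathit{enc}(T,v)$ this happens if and only if $v\in Q_i^{\mathrm{GK}}(T)$.

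The step requiring care --- and, I expect, the bulk of the work --- is correctness on unordered $\Sigma'$-trees that are \emph{not} of the form $\mathit{enc}(T,v)$: there the containment must hold as well, yet over $\tau_u$ one has neither sibling order nor any of the ``negative'' predicates $\Root,\Leaf,\Ls$, so a plain $\mDatalog(\tau_u)$-program cannot say ``this node has no parent / no child / no next sibling'', cannot count children, and cannot even assert that two children of a node are distinct. The plan is to fix a finite set of \emph{defect patterns}, each a small connected conjunctive query --- for instance two stacked marker nodes, a marker carrying a wrongly labelled child, a real node sitting directly below a real node, or a $\mathsf{root}$-labelled node that has a parent --- so that ``some defect occurs'' is a union of conjunctive queries, which monadic datalog recognises by deriving a flag at any match and percolating it along $\Child$ up to the root; one then lets $Q_2$ additionally fire at its root on every tree that contains a defect or whose root is not $\mathsf{root}$-labelled, so that on every non-encoding the containment $Q_1\subseteq Q_2$ is vacuous while on genuine encodings $Q_1,Q_2$ faithfully mirror $Q_1^{\mathrm{GK}},Q_2^{\mathrm{GK}}$. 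The delicate obstacle is to set up the marker conventions and the defect set together so that a $\mathsf{root}$-rooted, defect-free tree really is $\mathit{enc}(T,v)$ for a unique pair $(T,v)$. In particular, no required marker may be \emph{silently missing}, since the absence of a substructure is exactly what a positive datalog program cannot notice; getting this bookkeeping right, rather than overcoming a conceptual hurdle, will be the heart of the construction.
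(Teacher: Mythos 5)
Your proposal takes a genuinely different route from the paper, which does \emph{not} reduce from the ordered QCP but directly from the two-person corridor tiling problem, designing $Q_1,Q_2$ from scratch. Unfortunately, the step you defer as ``bookkeeping'' is a fundamental obstruction on which the approach founders. The issue is not merely that a positive program cannot \emph{notice} a silently missing marker; it is that $\mDatalog(\tau_u)$ is monotone under adding atomic facts, while a valid encoding can always be shrunk to an invalid one by deleting a marker leaf or a whole encoded subtree. Concretely, let $\Sigma=\set{a}$, let $Q_1^{\mathrm{GK}}$ consist of the rule $P(x)\leftarrow\Label_a(x)$, and let $Q_2^{\mathrm{GK}}$ consist of the rules $P(x)\leftarrow\Leaf(x),\Leaf(x)$ and $P(x)\leftarrow\Fc(x,y),\Label_a(y)$. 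Since every node of an ordered tree is a leaf or has a first child, $Q_1^{\mathrm{GK}}\subseteq Q_2^{\mathrm{GK}}$. Now take the valid encoding $E=\mathit{enc}(T,r)$ of the two-node tree $T$ (root $r$ with one child $w$, both labelled $a$, with $r$ selected) and delete from $E$ the entire portion encoding $w$ together with $r$'s $\mathsf{first}$-marker; the result $E^-$ is still a tree with $\atoms(E^-)\subseteq\atoms(E)$. Then (i) no defect pattern matches $E^-$, because any homomorphism of a positive pattern into $E^-$ is also one into $E$, and your defect patterns must avoid every valid encoding (otherwise $Q_2$ would fire on every genuine witness and the backward direction of your reduction would break); (ii) $\AF{Q_1}(E^-)=\Yes$, since the derivation of $S_1$ at $r$ uses only $\Label_a(r)$ and the $\mathsf{sel}$-marker, both still present; (iii) $\AF{Q_2}(E^-)=\No$, since $r$ carries no leaf-witness, its first-child witness has been deleted, and monotonicity yields no other derivation. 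Hence $Q_1\not\subseteq Q_2$ although $Q_1^{\mathrm{GK}}\subseteq Q_2^{\mathrm{GK}}$. No choice of marker conventions or defect set escapes this: the deletion argument only uses that $E^-$ is a substructure of \emph{some} valid encoding, and every containment on ordered trees that rests on a totality axiom (every node is a leaf or has a first child, is last or has a next sibling) is destroyed by such deletions, which are invisible to positive patterns.

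The paper sidesteps this entirely by not attempting a generic query-preserving encoding: it reduces from $\TPCT$, and the target queries are engineered so that a counterexample tree need not be a faithful encoding at all --- the \emph{candidate}/\emph{relevant} predicates extract a well-behaved subtree from an arbitrary tree on which $Q_1$ fires, and only \emph{lower bounds} on the children of a node (``has a candidate child labelled $d$ for each $d$''), which are positive and hence expressible, are ever required; exact multiplicities and totality conditions are deliberately avoided via the notion of ``almost-good'' trees. To salvage a reduction from the ordered QCP you would have to restrict the source instances so that no containment relies on totality or exactness axioms of ordered trees, at which point you are effectively re-proving hardness for a tailored family of queries, i.e., doing what the paper does.
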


Our proof proceeds via a reduction from the $\EXPTIME$-complete
\emph{two person corridor tiling} (TPCT) problem
\cite{Chle86}: For a given instance $I$ of the TPCT-problem we
construct (in polynomial time) an alphabet $\Sigma$ and two Boolean $\mDatalog(\tauuSigma)$-queries
$Q_1$, $Q_2$ which enforce that any tree $T$ witnessing that
$Q_1\not\subseteq Q_2$, contains an encoding of a winning strategy for 
the first player
of the TPCT-game associated with~$I$.
Using Theorem~\ref{Thm:unorderedExptimeHard} along with a method of
\cite{GottlobKoch} for replacing the $\Child$-predicate by means of
the predicates $\Fc,\Ns$, we can transfer the hardness result to
ordered trees represented by $\tau_o$-structures:

\newcounter{Counter_Cor:orderedExptimeHard}
\setcounter{Counter_Cor:orderedExptimeHard}{\value{theorem}}

\begin{Corollary}\label{Cor:orderedExptimeHard}\ \\
 The QCP for Boolean $\mDatalog(\tau_o)$ on ordered trees is $\EXPTIME$-hard.
\end{Corollary}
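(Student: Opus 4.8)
The plan is to give a polynomial-time reduction from the QCP for $\mDatalog(\tauu)$ on unordered trees --- which is $\EXPTIME$-hard by Theorem~\ref{Thm:unorderedExptimeHard} --- to the QCP for $\mDatalog(\tauo)$ on ordered trees. So let a finite alphabet $\Sigma$ and two Boolean $\mDatalog(\tauuSigma)$-queries $Q_1=(\PP_1,P_1)$ and $Q_2=(\PP_2,P_2)$ be given. From each $\PP_i$ I would build a $\mDatalog(\tauoSigma)$-program $\PP_i'$ by eliminating every occurrence of the binary predicate $\Child$ in favour of $\Fc$ and $\Ns$, following the method of Gottlob and Koch~\cite{GottlobKoch}: first bring $\PP_i$ into a tree-marking normal form in which $\Child$ occurs only in rules of the two shapes $p(x)\leftarrow q(y)\wedge\Child(y,x)$ and $p(x)\leftarrow q(y)\wedge\Child(x,y)$ with $p,q$ unary; then replace a rule of the first shape by the rules $\IDB{aux}(x)\leftarrow q(y)\wedge\Fc(y,x)$, $\IDB{aux}(x)\leftarrow\IDB{aux}(z)\wedge\Ns(z,x)$ and $p(x)\leftarrow\IDB{aux}(x)$ with a fresh unary predicate $\IDB{aux}$ (so that $\IDB{aux}$ collects exactly the children of $q$-nodes), and symmetrically replace a rule of the second shape by rules that propagate a marking leftward along $\Ns$ among siblings and then move up along $\Fc$. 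The key point is that these gadgets mention only $\Fc$, $\Ns$ and fresh unary intensional predicates --- in particular $\Root$, $\Leaf$ and $\Ls$ are not needed --- so $\PP_i'$ is genuinely of schema $\tauoSigma$, and the construction runs in polynomial time. Put $Q_i'=(\PP_i',P_i)$.

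For correctness, associate with every ordered $\Sigma$-labeled tree $T$ the unordered $\Sigma$-labeled tree $\hat T$ obtained by forgetting the linear orders of children. The child relation of $\hat T$ --- ``$y$ is a child of $x$'' --- is precisely the relation that the auxiliary predicates above reconstruct from $\Fc^T$ and $\Ns^T$, while $\S_o(T)$ and $\S_u(\hat T)$ share the same domain and the same label relations. Since $\PP_i$ has no edb predicate other than $\Child$ and the label predicates, and $\PP_i'$ arises from $\PP_i$ by the purely local replacement just described, an induction on the stages of the immediate consequence operator shows that $\T_{\PP_i'}^\omega(\atoms(\S_o(T)))$ and $\T_{\PP_i}^\omega(\atoms(\S_u(\hat T)))$ agree on every intensional predicate of $\PP_i$; in particular $\AF{Q_i'}(T)=\AF{Q_i}(\hat T)$ as sets of nodes. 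As $T$ and $\hat T$ have the same root node, $\AF{Q_i'}(T)$ contains the root of $T$ if and only if $\AF{Q_i}(\hat T)$ contains the root of $\hat T$. Finally, every unordered $\Sigma$-labeled tree equals $\hat T$ for some ordered $\Sigma$-labeled tree $T$ (choose an arbitrary linear order on each node's children). Putting these facts together, we obtain that, viewed as Boolean queries, $Q_1\subseteq Q_2$ on unordered trees if and only if $Q_1'\subseteq Q_2'$ on ordered trees; hence the reduction is correct, and the $\EXPTIME$-hardness of Theorem~\ref{Thm:unorderedExptimeHard} transfers.

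The step requiring genuine care is the $\Child$-elimination: one must verify that Gottlob and Koch's normalisation together with the sibling-navigation gadgets (a) introduces no navigational predicate beyond $\Fc$ and $\Ns$, (b) is computable in polynomial time, and (c) preserves the computed unary relation on \emph{every} ordered tree, not merely up to logical equivalence. All three points are, however, essentially contained in --- or are direct adaptations of --- the development in~\cite{GottlobKoch}; the remaining ingredients, namely the ordered/unordered correspondence and the fixed-point induction, are routine.
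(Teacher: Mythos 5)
Your proposal is correct and follows essentially the same route as the paper's proof: a polynomial-time reduction from the unordered QCP of Theorem~\ref{Thm:unorderedExptimeHard}, obtained by eliminating $\Child$ in favour of $\Fc$ and $\Ns$ via the Gottlob--Koch normalisation (which the paper invokes as Corollary~\ref{remark_1} and you unfold into explicit sibling-traversal gadgets), combined with the observation that a $\tau_u$-query takes the same value on an ordered tree and on its unordered version, and that every unordered tree arises as such a version.
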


Our second main result provides a matching $\EXPTIME$ upper bound for
the QCP on ordered trees, even in the presence of all predicates in
$\tauGK^{\Child}$:

\newcounter{Counter_Thm:orderedInExptime}
\setcounter{Counter_Thm:orderedInExptime}{\value{theorem}}

\begin{Theorem}\label{Thm:orderedInExptime}\ \\
 The QCP for unary
 $\mDatalog(\tauGK^{\Child})$ on ordered trees belongs to $\EXPTIME$.
\end{Theorem}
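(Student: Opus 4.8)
The goal is to decide, given two unary $\mDatalog(\tauGK^{\Child})$-queries $Q_1,Q_2$, whether $Q_1\subseteq Q_2$ holds on all ordered $\Sigma$-labeled trees, in exponential time. The natural approach is the automata-theoretic one: reduce the containment question to an emptiness question for a suitable tree automaton whose size is polynomial in $\size{Q_1}+\size{Q_2}$ but which runs over a "product alphabet'' of exponential size, so that emptiness is testable in exponential time. Concretely, $Q_1\not\subseteq Q_2$ holds iff there is an ordered tree $T$ and a node $v$ with $v\in Q_1(T)$ and $v\notin Q_2(T)$. The plan is to build an automaton $\mathcal{A}$ that runs over trees in which every node is additionally annotated with (i) a flag marking the distinguished node $v$, and (ii) for each query $Q_i=(\PP_i,P_i)$, the set of intensional predicates of $\PP_i$ that hold at that node in the minimal model $\T_{\PP_i}^\omega(\atoms(\S_o(T)))$. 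Since all intensional predicates are monadic, such an annotation at a node is just a subset of $\idb(\PP_1)\cup\idb(\PP_2)$, so the annotation alphabet has size $2^{O(\size{Q_1}+\size{Q_2})}$ — exponential, as required.

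\smallskip

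The automaton $\mathcal{A}$ must check three things. \emph{First}, that the $\Sigma$-labeled tree is consistent — here one should work with the $\Fc,\Ns$-representation (binary-tree encoding) of the unranked ordered tree, following Gottlob and Koch, and use the fact (quoted in Section~\ref{section:Preliminaries}) that $\Root,\Leaf,\Ls$ and also $\Child,\Desc$ are all $\MSO$-definable over $\tauo$ on ordered trees; hence all the edb-predicates appearing in $\PP_1,\PP_2$ can be recovered by a fixed automaton of polynomial size. \emph{Second}, that the annotation is \emph{closed} under the rules of $\PP_1$ and of $\PP_2$: for each rule, the body is a conjunction of atoms over a bounded number of variables, and since the tree is a (binary-encoded) tree, any satisfiable conjunction of the relevant binary axis-atoms forces the variables into a bounded-radius configuration around one node; so a tree automaton can, in a single bottom-up (plus a second top-down) pass, verify rule closure locally. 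This is the step where the $\Desc$-axis would normally cause trouble — a $\Desc$-atom in a rule body relates two nodes at unbounded distance — but since we have precomputed the $\Desc$-relation as an $\MSO$-definable predicate and folded it into the edb-annotation, closure under such a rule again becomes a local check. \emph{Third}, that the annotation is not \emph{too large}: we need $P_1\in$ annotation$(v)$ and $P_2\notin$ annotation$(v)$. Largeness is the subtle direction: closure alone does not guarantee the annotation equals the \emph{least} fixed point, only that it is \emph{some} fixed point above $\atoms(\S_o(T))$; but this is exactly what we want, because a least-fixed-point membership $P_1(v)$ follows from membership in \emph{every} closed set, so asserting $P_1\in$ annotation$(v)$ for a closed annotation only \emph{under}-states $Q_1(T)$ — wait, that is the wrong direction. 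The correct handling: for $Q_1$ we must ensure $v$ really is in the \emph{least} model, which we get by also guessing, for the $P_1$-annotations, a well-founded derivation order (a rank function into $\{1,\dots,\size{\PP_1}\cdot|T|\}$, or more cheaply a "stratification by derivation'' witnessed locally) — this is the standard trick for capturing least fixed points with automata and only multiplies the state space polynomially once ranks are encoded in the annotation. For $Q_2$, by contrast, we need $v\notin$ least model, i.e.\ $v\notin$ \emph{some} closed set, so here closure of the annotation together with $P_2\notin$ annotation$(v)$ suffices with no rank bookkeeping. Combining: $\mathcal{A}$ guesses an annotation, checks edb-consistency, checks closure of both programs, checks the rank-witness for the $P_1$-part, and checks $P_1\in$ ann$(v)$, $P_2\notin$ ann$(v)$; then $Q_1\not\subseteq Q_2$ iff $L(\mathcal{A})\neq\emptyset$.

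\smallskip

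Finally, one bounds the resources. $\mathcal{A}$ is a (two-way, or alternating) tree automaton over the binary encoding; its alphabet is $\Sigma$ times the annotation set, of size $2^{O(\size{Q_1}+\size{Q_2})}$; its state set is of size polynomial in $\size{Q_1}+\size{Q_2}$ for the rule-closure and edb-checks, times a polynomial factor for rank-tracking. Converting such an automaton to an ordinary bottom-up nondeterministic tree automaton and testing emptiness costs time polynomial in (number of states) $\times$ (alphabet size) $=2^{O(\size{Q_1}+\size{Q_2})}$, i.e.\ exponential time overall. This yields the claimed $\EXPTIME$ upper bound.

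\smallskip

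\noindent\textbf{Main obstacle.} The delicate point is the asymmetry between the two queries: for $Q_1$ we must certify membership in the \emph{least} fixed point (not merely some closed set), which requires threading a well-foundedness witness for $P_1$-derivations through the automaton without blowing the state space beyond polynomial; getting the rank encoding and its local verifiability right — especially in the presence of $\Desc$-rules, where a single derivation step can "jump'' along a precomputed $\Desc$-edge — is where the real care is needed. Everything else (edb-definability, locality of rule bodies on binary-encoded trees, the product construction, and the final emptiness test) is routine given the facts already quoted from \cite{GottlobKoch} and the $\MSO$-definability of the auxiliary axes.
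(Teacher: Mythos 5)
Your overall architecture is the same as the paper's: reduce to Boolean queries over the binary ($\Fc,\Ns$) encoding, build one automaton certifying $v\in Q_1(T)$ and one certifying $v\notin Q_2(T)$, intersect, and test emptiness; and you correctly identify the asymmetry between the two directions. Your treatment of the $Q_2$ side is sound and is exactly what the paper does (its automaton $\Aut{A}_2^{\No}$ comes from the $\MSO$-sentence ``there exist sets $X_1,\dots,X_n$ closed under the rules with $P_2\notin X$ at $v$'', i.e.\ precisely ``guess a closed annotation omitting $P_2(v)$''). But the $Q_1$ side --- which is the entire content of the theorem, since the naive complementation of the No-automaton costs a second exponential --- is not actually established by your argument. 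Your proposed certificate of least-fixed-point membership is a rank function with range $\{1,\dots,\size{\PP_1}\cdot|T|\}$; that range grows with the input tree, so it cannot be encoded in the annotation alphabet of a tree automaton, and the fallback phrase ``a stratification by derivation witnessed locally'' is exactly the missing construction, not a known routine one. Well-foundedness of a derivation whose support edges run between a node and its parent/children is a genuinely global condition (a cycle can be an arbitrarily long up-and-down walk in the tree), and verifying it with a one-way automaton of single-exponential size requires the two-way-to-one-way composition technique. The paper supplies precisely this: it translates the TMNF query in \emph{polynomial} time into a two-way alternating tree automaton whose accepting runs are in bijection with datalog proof trees (Proposition~\ref{query22wata}), and then invokes the known single-exponential 2ATA-to-NBTA conversion (Theorem~\ref{Thm:2ATAtoNBTA}). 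Without that step, or an equivalent substitute, your construction certifies only membership in \emph{some} closed set, which trivially always holds, so the containment test would be unsound.

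A secondary point: the theorem concerns $\tauGK^{\Child}$, which does not contain $\Desc$, yet you repeatedly lean on ``precomputing the $\Desc$-relation and folding it into the edb-annotation''. That cannot work as stated --- $\Desc$ is a binary relation on pairs of nodes and cannot be recorded in a per-node label of bounded size --- and if it did work it would resolve the paper's explicitly open question of whether the $\EXPTIME$ bound extends to $\Desc$ (the paper only obtains $\TwoEXPTIME$ there, by eliminating $\Desc$ at exponential cost). You should drop $\Desc$ from the argument entirely for this theorem. A minor slip: your closure-checking automaton cannot have a state set polynomial in $\size{Q_1}+\size{Q_2}$, since it must carry (part of) the guessed node annotation in its state to compare against the parent; this still leaves the state set single-exponential, so it does not affect the claimed bound.
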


\begin{proofsketch} 
Consider a schema $\tau\subseteq \tauGK^{\Child,\Desc}$.
By using the automata-theoretic approach \cite{CGKV},
a canonical method for
deciding the QCP for unary $\mDatalog(\tau)$ proceeds as
follows:
\begin{me}
 \item\label{item:Step1}
   Transform the input queries $Q_1$ and $Q_2$ into \emph{Boolean}
   queries $Q'_1$ and $Q'_2$ on \emph{binary} trees, such that $Q_1\subseteq Q_2$ iff
   $Q'_1\subseteq Q'_2$.\vspace{0.5mm}
 \item\label{item:Step2}
   Construct tree automata $\Aut{A}_1^{\Yes}$ and $\Aut{A}_2^{\No}$ such that
   $\Aut{A}_1^{\Yes}$ (resp.\ $\Aut{A}_2^{\No}$) accepts exactly those trees $T$ with $Q'_1(T)=\Yes$
   (resp.\ $Q'_2(T)=\No$).\vspace{0.5mm}
 \item\label{item:Step3}
   Construct the product automaton $\Aut{B}$ of $\Aut{A}_1^{\Yes}$ and
   $\Aut{A}_2^{\No}$, such that $\Aut{B}$ accepts exactly those trees that are
   accepted by $\Aut{A}_1^{\Yes}$ and by $\Aut{A}_2^{\No}$.
   Afterwards, check if the tree language recognised by $\Aut{B}$ is
   empty.
   Note that this is the case if, and only if, $Q_1\subseteq Q_2$.
\end{me}
Using time polynomial in the size of $Q_1$ and $Q_2$, Step~\eqref{item:Step1} can be achieved in a standard way by
appropriately extending the labelling alphabet $\Sigma$. 

For Step~\eqref{item:Step3}, if $\Aut{A}_1^{\Yes}$ and
$\Aut{A}_2^{\No}$ are nondeterministic bottom-up tree automata, 
the construction of $\Aut{B}$ takes time polynomial in the sizes of
$\Aut{A}_1^{\Yes}$ and $\Aut{A}_2^{\No}$, and the
emptiness test can be done in time polynomial in the size of $\Aut{B}$
(see e.g. \cite{tata2008}).

The first idea for tackling Step~\eqref{item:Step2} is to use a standard translation of
Boolean monadic datalog queries into monadic second-order ($\MSO$) sentences:
It is not difficult to see 
(cf., e.g.\ \cite{GottlobKoch}) that any
Boolean $\mDatalog(\tau)$-query $Q$ can be translated in polynomial time
into an equivalent $\MSO$-sentence $\varphi_{Q}$ of the form
\[\textstyle
  \forall X_1\cdots\forall X_n \ \exists z_1\cdots \exists z_\ell \ 
    \Oder_{j=1}^m \gamma_j
\]
where $n$ is the number of intensional predicates of $Q$'s
monadic datalog program $\PP$,
$\ell$ and $m$ are linear in the size of $Q$, and each $\gamma_j$ is a
conjunction of at most $b$ atoms or negated atoms, where $b$ is linear
in the maximum number of atoms occurring in the body of a rule of $\PP$.
Applying the standard method for translating $\MSO$-sentences into
tree automata (cf., e.g., \cite{WThomas-Handbook-survey}), 
we can translate the sentence $\nicht\varphi_Q$ 
into a nondeterministic bottom-up
tree-automaton $\Aut{A}^{\No}$ that accepts a tree $T$ iff
$\AF{Q}(T)=\No$.
This automaton has
$2^{(m'\cdot c^{b'})}$ states, where $m'$ and $b'$ are linear in $m$ and
$b$, resp., 
and $c$ is a constant not depending on $Q$ or $\Sigma$; and $\Aut{A}^{\No}$ can be
constructed in time polynomial in 
$|\Sigma|{\cdot}2^{n+\ell+ m'{\cdot}c^{b'}}$.

Using the subset construction, one obtains
an automaton $\Aut{A}^{\Yes}$ which accepts
a tree $T$ iff $\AF{Q}(T)=\Yes$; and this automaton has $2^{2^{(m'\cdot c^{b'})}}$ 
states. 

Note that, a priori, $b'$ might be linearly related to the size of
$Q$.
Thus, the approach described so far leads to a \emph{3-fold exponential}
algorithm that solves the QCP for unary $\mDatalog(\tau)$-queries.

In case that $\tau$ does \emph{not} contain the $\Desc$-predicate, we
obtain a 2-fold exponential algorithm as follows: At the end of
Step~\eqref{item:Step1} we rewrite $Q'_1$ and $Q'_2$ into queries that
do not contain the $\Child$-predicate ,
and we transform both queries into \emph{tree
  marking normal form} (TMNF), i.e., a normal form in which bodies of
rules consist of at most two atoms, at least one of which is unary.  
From \cite{GottlobKoch} we obtain that
these transformations can be done in time polynomial in the size of $Q'_1$ and $Q'_2$.
Note that for TMNF-queries, the parameters $b$ and $b'$ are constant
(i.e., they do not depend on the query), and thus the above description shows
that for TMNF-queries the automaton $\Aut{A}_2^{\No}$ can be constructed
in 1-fold exponential time, and $\Aut{A}_1^{\Yes}$ can be constructed in
2-fold exponential time.

Finally, the key idea to obtain a \emph{1-fold} exponential algorithm
solving the QCP is to use a different construction for the
automaton $\Aut{A}_1^{\Yes}$, which does not use the detour via an
$\MSO$-formula but, instead, takes a detour via a \emph{two-way
  alternating tree automaton} (2ATA):
We show that a Boolean TMNF-query can be translated, in polynomial
time, into a 2ATA $\hat{\Aut{A}}_1^{\Yes}$
that accepts a tree $T$ iff $\AF{Q_1}(T)=\Yes$.
It is known that, within 1-fold exponential time, a 2ATA
can be transformed into an equivalent nondeterministic bottom-up tree
automaton (this was claimed already in
\cite{CGKV}; detailed proofs of more general results can be found in \cite{Vardi98,Maneth10TypeChecking}).
In summary, this leads to a 1-fold exponential algorithm for solving
the QCP for $\mDatalog(\tauGK^{\Child})$ on ordered trees.
\end{proofsketch}

Since $\tau_u^{\Root,\Leaf}\subseteq \tauGK^{\Child}$, Theorem~\ref{Thm:orderedInExptime} immediately implies:

\newcounter{Counter_Cor:unorderedInExptime}
\setcounter{Counter_Cor:unorderedInExptime}{\value{theorem}}

\begin{Corollary} \label{Cor:unorderedInExptime}
 The QCP for unary
 $\mDatalog(\tau_u^{\Root,\Leaf})$ on unordered trees belongs to $\EXPTIME$.
\end{Corollary}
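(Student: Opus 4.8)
The plan is to observe that the QCP for unordered trees is, up to a trivial change of viewpoint, already an instance of the QCP handled by Theorem~\ref{Thm:orderedInExptime}. First I would record the schema inclusion $\tauu^{\Root,\Leaf}=\set{\Child,\Root,\Leaf}\subseteq\set{\Fc,\Ns,\Ls,\Root,\Leaf,\Child}=\tauGK^{\Child}$. Hence every $\mDatalog(\tauuSigma^{\Root,\Leaf})$-query is also a $\mDatalog(\tauGKSigma^{\Child})$-query, and any instance $(\Sigma,Q_1,Q_2)$ of the QCP for $\mDatalog(\tauu^{\Root,\Leaf})$ on unordered trees is, verbatim and without any transformation, an instance of the QCP for $\mDatalog(\tauGK^{\Child})$ on ordered trees.

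Next I would show that the two notions of containment agree on such an instance, using order-independence. For an ordered $\Sigma$-labeled tree $T$ let $\ov{T}$ be the underlying unordered tree obtained by forgetting $\textit{order}^T$; then $\S_o(T)$ and $\S_u(\ov{T})$ have the same domain and agree on all relations $\Label_\alpha$, $\Child$, $\Root$, $\Leaf$, differing only in the $\Fc$-, $\Ns$-, and $\Ls$-atoms. Since the program $\PP_i$ of $Q_i$ satisfies $\edb(\PP_i)\subseteq\tauuSigma^{\Root,\Leaf}$, no rule body of $\PP_i$ mentions $\Fc$, $\Ns$, or $\Ls$, so a routine induction on the stages of the immediate consequence operator shows that $\T_{\PP_i}^\omega(\atoms(\S_o(T)))$ and $\T_{\PP_i}^\omega(\atoms(\S_u(\ov{T})))$ contain exactly the same intensional atoms; hence $\AF{Q_i}(T)=\AF{Q_i}(\ov{T})$ for $i\in\set{1,2}$. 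Consequently, if $Q_1\subseteq Q_2$ holds on all unordered trees, then for every ordered tree $T$ we get $\AF{Q_1}(T)=\AF{Q_1}(\ov{T})\subseteq\AF{Q_2}(\ov{T})=\AF{Q_2}(T)$; and conversely, if $Q_1\subseteq Q_2$ holds on all ordered trees, then any unordered tree $T'$ can be equipped with an arbitrary sibling order to yield an ordered tree $T$ with $\ov{T}=T'$, giving $\AF{Q_1}(T')\subseteq\AF{Q_2}(T')$. Thus $Q_1\subseteq Q_2$ on unordered trees if and only if $Q_1\subseteq Q_2$ on ordered trees.

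Finally I would invoke Theorem~\ref{Thm:orderedInExptime}, which places the QCP for unary $\mDatalog(\tauGK^{\Child})$ on ordered trees in $\EXPTIME$: the given unordered-tree instance has the same yes/no answer as the corresponding ordered-tree instance and is obtained from it by the identity map, so the QCP for unary $\mDatalog(\tauu^{\Root,\Leaf})$ on unordered trees lies in $\EXPTIME$ as well. There is no genuine obstacle beyond making the order-independence argument precise; the only point worth double-checking is that $\tauu^{\Root,\Leaf}$ contains neither $\Desc$ nor any sibling predicate, so that the $\EXPTIME$ bound of Theorem~\ref{Thm:orderedInExptime} — rather than the $\TwoEXPTIME$ bound that arises in the presence of $\Desc$ — is the one that applies.
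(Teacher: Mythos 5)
Your proposal is correct and follows essentially the same route as the paper, which derives the corollary directly from the inclusion $\tau_u^{\Root,\Leaf}\subseteq\tauGK^{\Child}$ and Theorem~\ref{Thm:orderedInExptime}. The order-independence argument you spell out (that queries of schema $\tauuSigma^{\Root,\Leaf}$ give the same answer on an ordered tree and on its underlying unordered tree, and that every unordered tree arises from some ordered one) is exactly the step the paper leaves implicit with ``immediately implies,'' and it is the same observation the paper makes explicitly in its proof of Corollary~\ref{Cor:orderedExptimeHard}.
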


It remains open if the $\EXPTIME$-membership results of
Theorem~\ref{Thm:orderedInExptime} and
Corollary~\ref{Cor:unorderedInExptime} can be generalised to queries
that also use the descendant predicate $\Desc$. However, the first approach
described in the proof of Theorem~\ref{Thm:orderedInExptime} yields a 3-fold exponential algorithm.
We can improve this by using methods and results from \cite{GottlobKoch} and
\cite{GottlobKochSchulz} to eliminate the $\Desc$-predicate at the
expense of an exponential blow-up of the query size. Afterwards, we
apply the algorithms provided by Theorem~\ref{Thm:orderedInExptime} and
Corollary~\ref{Cor:unorderedInExptime}. This leads to the following:

\newcounter{Counter_Thm:usingDesc}
\setcounter{Counter_Thm:usingDesc}{\value{theorem}}

\begin{Theorem}\label{Thm:usingDesc}
  The QCP for unary $\mDatalog(\tau_u^{\Root,\Leaf,\Desc})$ on
  unordered trees and for unary $\mDatalog(\tauGK^{\Child,\Desc})$ on
  ordered trees can
  be solved in 2-fold exponential time.
\end{Theorem}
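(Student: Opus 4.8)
The plan is to remove the $\Desc$-predicate at the price of an exponential blow-up of the query size and then invoke Theorem~\ref{Thm:orderedInExptime} and Corollary~\ref{Cor:unorderedInExptime}. Since a $\mDatalog(\tau_u^{\Root,\Leaf,\Desc})$-query cannot refer to $\Fc$ or $\Ns$, its result on an ordered tree does not depend on the sibling order, and because $\tau_u^{\Root,\Leaf,\Desc}\subseteq\tauGK^{\Child,\Desc}$, containment of such queries over unordered trees coincides with containment over ordered trees; hence it suffices to treat unary $\mDatalog(\tauGK^{\Child,\Desc})$-queries on ordered trees.

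The core is a transformation that, given a unary $\mDatalog(\tauGK^{\Child,\Desc})$-query $Q=(\PP,P)$, produces an equivalent unary $\mDatalog(\tauGK^{\Child})$-query $Q'=(\PP',P')$ --- that is, $\AF{Q}(T)=\AF{Q'}(T)$ for every ordered tree $T$ --- with $\size{Q'}\leq 2^{\poly(\size{Q})}$. Granting this, one applies the transformation to the two input queries $Q_1,Q_2$, obtaining $Q_1',Q_2'$ with $Q_1\subseteq Q_2$ iff $Q_1'\subseteq Q_2'$, and runs the algorithm of Theorem~\ref{Thm:orderedInExptime} on $Q_1',Q_2'$; its running time $2^{\poly(\size{Q_1'}+\size{Q_2'})}$ is bounded by $2^{2^{\poly(\size{Q_1}+\size{Q_2})}}$, i.e.\ it is 2-fold exponential in the input size. (For the unordered statement one may equally well invoke Corollary~\ref{Cor:unorderedInExptime} in place of Theorem~\ref{Thm:orderedInExptime}, or simply the reduction above.)

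It remains to eliminate $\Desc$, which is where the actual work lies. As $\Desc$ is binary it cannot itself be an intensional predicate of a monadic program, so it must be simulated. Following the methods of \cite{GottlobKoch,GottlobKochSchulz}, I would first put $\PP$ into an intermediate normal form that keeps the number of variables per rule body polynomially bounded while controlling how $\Child$- and $\Desc$-atoms may interact. For a rule whose body is a conjunctive query over $\Child$, $\Desc$, the predicates $\Label_\alpha$, $\Root$, $\Leaf$, $\Ls$ and unary intensional predicates, I would then guess, for every pair of body variables, their relative position in the tree --- ancestor, descendant, or incomparable, and which pairs are linked by a $\Child$-step --- and split the rule into one rule per consistent guess; within a fixed guess each atom $\Desc(x,y)$ is replaced by the requirement that a suitable marking of $y$ be propagated up to $x$, realized by fresh unary intensional predicates with rules such as $\hat R(x)\leftarrow\Child(x,z)\wedge R(z)$ and $\hat R(x)\leftarrow\Child(x,z)\wedge\hat R(z)$ (and symmetric gadgets for information that must travel downwards). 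The exponential blow-up is precisely the number of such guesses over the $\poly(\size{Q})$ variables of one body. The step I expect to be the main obstacle is establishing correctness in the presence of recursion: one must show that adding these propagation gadgets on top of the recursion already present in $\PP$ changes neither the set of derivable facts $P(a)$ nor, more generally, the relevant part of the least fixed point, and that the interaction of $\Desc$ with the unary predicates $\Root,\Leaf,\Ls$ creates no spurious derivations --- it is to make this bookkeeping manageable, and to keep the blow-up at a single exponential, that the intermediate normal form is needed.
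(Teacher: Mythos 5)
Your top-level plan is exactly the paper's: observe that $\tau_u^{\Root,\Leaf,\Desc}\subseteq\tauGK^{\Child,\Desc}$ so only the ordered case needs an algorithm, eliminate $\Desc$ at the cost of a single-exponential blow-up (this is Lemma~\ref{Lemma:GettingRidOfDesc} in the paper), and then run the $\EXPTIME$ algorithm of Theorem~\ref{Thm:orderedInExptime} on the rewritten queries. The running-time accounting is also the same. So the question is entirely whether your $\Desc$-elimination sketch goes through.

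Here the paper's route is more structured than yours, and the difference exposes the gap in your sketch. The paper proceeds in three steps: (1) Lemma~\ref{Lemma:GKS}, an extension of Gottlob--Koch--Schulz to bodies that may also contain $\Fc$, iteratively rewrites each rule into a union of \emph{acyclic} rules (rules whose underlying multigraph of binary atoms has no undirected cycle), by repeatedly picking two binary atoms $R(x,z)$, $S(y,z)$ with a common target on a cycle and replacing their conjunction by an equivalent disjunction of simpler conjunctions (e.g.\ $\Desc(x,z)\wedge\Desc(y,z)$ splits into the three cases $x$ above $y$, $y$ above $x$, $y=z$); this is where the exponential blow-up occurs. (2) Acyclic rules are decomposed in linear time into rules with two-atom bodies (Lemma~\ref{remark_2}). (3) Each resulting two-atom rule $X(x)\leftarrow\Desc(x,y),Y(y)$ is replaced by the $\Child$-recursion $X(x)\leftarrow\Child(x,y),Y(y)$ and $X(x)\leftarrow\Child(x,y),X(y)$ (Fact~\ref{Fact:TMNF-Desc2Child}). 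Your ``guess all pairwise relative positions'' device is a coarser version of step (1), and your propagation gadgets are exactly step (3). What your sketch is missing is the bridge: you must argue that, once a consistent guess is fixed, contradictory rules can be dropped and implied $\Desc$-atoms pruned so that the remaining body is \emph{tree-shaped} and hence decomposable into two-atom rules --- only then can each $\Desc$-atom be isolated into a rule with a \emph{fresh} head predicate to which the propagation gadget applies. Without acyclicity the gadget replacement is not even well-defined (a $\Desc$-atom sitting inside a cyclic body cannot be factored out through a single unary predicate). Conversely, the obstacle you flag as the main one --- correctness of the gadgets under recursion --- is not actually an obstacle: because the propagation predicates are fresh and have no other defining rules, and the immediate consequence operator is monotone, the least fixed point restricted to the original intensional predicates is unchanged; likewise the unary atoms $\Root,\Leaf,\Ls$ play no role in the rule-body graph and cause no spurious derivations. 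So the proposal is the right approach with the hard combinatorial step (the acyclicization and its single-exponential termination bound, including the extra cases needed for $\Fc$ interacting with $\Ns$, $\Child$, $\Desc$) left open.
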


\noindent\textbf{Open Question.} It remains open to close the gap between the \EXPTIME\ lower and the \TwoEXPTIME\ upper bound for the case where the \emph{descendant}-axis is involved.
\smallskip

\noindent\textbf{Acknowledgment.} The first author would like to thank Mariano Zelke for countless inspiring discussions and helpful hints
on and off the topic.

\enlargethispage{1cm}

\bibliographystyle{abbrv}
\bibliography{MonadicDatalogContainmentOnTrees}

\begin{thebibliography}{10}

\bibitem{AbiteboulBMW13}
S.~Abiteboul, P.~Bourhis, A.~Muscholl, and Z.~Wu.
\newblock Recursive queries on trees and data trees.
\newblock In {\em Proc.\ ICDT'13}, pages 93--104, 2013.

\bibitem{AHV}
S.~Abiteboul, R.~Hull, and V.~Vianu.
\newblock {\em Foundations of Databases}.
\newblock Addison-Wesley, 1995.

\bibitem{DBLP:conf/icalp/BenediktBS12}
M.~Benedikt, P.~Bourhis, and P.~Senellart.
\newblock Monadic datalog containment.
\newblock In {\em Proc.\ ICALP'12}, pages 79--91, 2012.

\bibitem{Chle86}
B.~S. Chlebus.
\newblock Domino-tiling games.
\newblock {\em J. Comput. Syst. Sci.}, 32(3):374--392, 1986.

\bibitem{tata2008}
H.~Comon, M.~Dauchet, R.~Gilleron, C.~L\"oding, F.~Jacquemard, D.~Lugiez,
  S.~Tison, and M.~Tommasi.
\newblock Tree automata techniques and applications.
\newblock Available at \url{http://www.grappa.univ-lille3.fr/tata}, 2008.
\newblock release November, 18th 2008.

\bibitem{CGKV}
S.~Cosmadakis, H.~Gaifman, P.~Kanellakis, and M.~Vardi.
\newblock Decidable optimization problems for database logic programs.
\newblock In {\em Proc.\ STOC'88}, pages 477--490, 1988.

\bibitem{Datalog-Bible}
E.~Dantsin, T.~Eiter, G.~Gottlob, and A.~Voronkov.
\newblock Complexity and expressive power of logic programming.
\newblock {\em ACM Comput. Surv.}, 33(3):374--425, 2001.

\bibitem{FlumGrohe-ParamKompl}
J.~Flum and M.~Grohe.
\newblock {\em Parameterized Complexity Theory}.
\newblock Springer, 2005.

\bibitem{FGS}
A.~Frochaux, M.~Grohe, and N.~Schweikardt.
\newblock Monadic datalog containment on trees.
\newblock In {\em Proceedings of the 8th Alberto Mendelzon International
  Workshop on Foundations of Data Management (AMW 2014), Cartagena, Colombia,
  June 2-6, 2014}, CEUR Workshop Proceedings. CEUR-WS.org, 2014.

\bibitem{DBLP:journals/corr/FroSchw13}
A.~Frochaux and N.~Schweikardt.
\newblock A note on monadic datalog on unranked trees.
\newblock {\em Technical Report, available at CoRR}, abs/1310.1316, 2013.

\bibitem{GottlobKoch}
G.~Gottlob and C.~Koch.
\newblock Monadic datalog and the expressive power of languages for web
  information extraction.
\newblock {\em J. ACM}, 51(1):74--113, 2004.

\bibitem{GottlobKochSchulz}
G.~Gottlob, C.~Koch, and K.~Schulz.
\newblock Conjunctive queries over trees.
\newblock {\em J. ACM}, 53(2):238--272, 2006.

\bibitem{Loeding-survey}
C.~L\"oding.
\newblock Basics on tree automata.
\newblock In D.~D'Souza and P.~Shankar, editors, {\em Modern Applications of
  Automata Theory}. World Scientific, 2012.

\bibitem{Maneth10TypeChecking}
S.~Maneth, S.~Friese, and H.~Seidl.
\newblock Type-{C}hecking {T}ree {W}alking {T}ransducers.
\newblock In D.~D'Souza and P.~Shankar, editors, {\em Modern applications of
  automata theory}, volume~2 of {\em IISc Research Monographs}. World
  Scientific, 2010.

\bibitem{DBLP:conf/csl/Neven02}
F.~Neven.
\newblock {Automata, Logic, and XML}.
\newblock In {\em Proc.\ CSL'02}, pages 2--26, 2002.

\bibitem{WThomas-Handbook-survey}
W.~Thomas.
\newblock Languages, automata, and logic.
\newblock In {\em Handbook of Formal Languages}, volume~3, pages 389--455.
  Springer-Verlag, 1997.

\bibitem{Vardi98}
M.~Vardi.
\newblock Reasoning about the past with two-way automata.
\newblock In {\em Proc.\ ICALP'98}, pages 628--641, 1998.

\end{thebibliography}

\clearpage

\appendix

\section*{APPENDIX}

\newcounter{restoreAppTheorem}

\noindent
This appendix contains technical details which were omitted in the main part of the paper.

\begin{mi}
\item
Appendix~\ref{appendix:SyntaxAndSemantics-mDatalog} contains further
basic notation, including a precise
definition of the syntax and semantics of datalog.
\item
Appendix~\ref{appendix:ProofOfThmUnorderedExptimeHard} 
gives a detailed proof of
Theorem~\ref{Thm:unorderedExptimeHard}.
\item
Appendix~\ref{appendix:ProofOfCorUnrderedExptimeHard} provides a proof of
Corollary~\ref{Cor:orderedExptimeHard}.
\item
Appendix~\ref{appendix:orderedInExptime} gives a detailed proof of
Theorem~\ref{Thm:orderedInExptime}.
\item
Appendix~\ref{appendix:GettingRidOfDesc} presents a proof of Theorem~\ref{Thm:usingDesc}.
\end{mi}

\clearpage

\section{Basic Notation and Syntax and Semantics of Datalog}\label{appendix:SyntaxAndSemantics-mDatalog}

\subsubsection*{Basic notation}
For a set $S$ we write $2^S$ to denote the power set of $S$.

Let $\tau$ be a schema suitable for representing ordered (or
unordered) $\Sigma$-labeled trees. Two $\mDatalog(\tau)$-queries $Q$
and $Q'$ are called \emph{equivalent} if $\AF{Q}(T)=\AF{Q'}(T)$ is
true for all finite ordered (or unordered, resp.) $\Sigma$-labeled trees $T$.
\bigskip

\noindent
The following definition of datalog is basically taken from 
\cite{Datalog-Bible}.

\subsubsection*{Syntax of datalog}
A \textit{datalog rule} is an expression of the form 
\ \( 
  h \leftarrow  b_1 ,\ldots, b_n ,
\) \
for $n\in\NN$,
where $h, b_1 , \ldots, b_n$ are called \emph{atoms} of the rule, $h$
is called the rule's \emph{head}, and $b_1,\ldots,b_n$ (understood as a
conjunction of atoms) is called the \emph{body}.
Each atom is of the form $P(x_1,\ldots,x_m)$ where $P$ is a predicate
of some arity $m\in\NNpos$ and $x_1,\ldots,x_m$ are variables.
Rules are required to be \emph{safe} in the sense that all variables
appearing in the head also have to appear in the body.

A \textit{datalog program} is a finite set of datalog rules. 
Let $\PP$ be a datalog program and let $r$ be a datalog rule. 
We write 
$\Var(r)$ for the set of all variables occurring in the rule $r$, and
we let $\Var(\PP):= \bigcup_{r \in \PP} \Var(r)$. 
Predicates that occur in the head of
some rule of $\PP$ are called \emph{intensional}, whereas predicates that only
occur in the body of rules of $\PP$ are called \emph{extensional}.
We write $\idb(\PP)$ and
$\edb(\PP)$ to denote the sets of intensional and extensional predicates
of $\PP$, respectively. We say that $\PP$ \emph{is of schema $\tau$}
if $\edb(\PP)\subseteq \tau$.

\subsubsection*{Semantics of datalog}
For defining the semantics of datalog, let $\tau$ be a schema,
let $\PP$ be a datalog program of schema $\tau$, let
$A$ be a domain, and let
\begin{eqnarray*}
 F_{\PP,A} & \deff & 
 \setc{\ R(a_1,\ldots,a_r)}{R\in \tau\cup\idb(\PP),\ r=\ar(R),\ a_1,\ldots,a_r\in A
 \ }
\end{eqnarray*}
be the set of all \emph{atomic facts over $A$}. 
A \textit{valuation $\beta$ for $\PP$ in $A$} is a function $\beta: \big({\Var(\PP)}
\cup A\big) \to A$ where $\beta(a)=a$ for all $a\in A$.
For an atom $b\deff P(x_1,\ldots,x_m)$ occurring in a rule of $\PP$ we
let 
\ $\beta(b)\deff
P\big(\beta(x_1),\ldots,\beta(x_m)\big)$.
The \emph{immediate consequence operator} $\T_{\PP}$
induced by $\PP$ on $A$ maps
every $C\subseteq F_{\PP,A}$ to 
\begin{eqnarray*}
 \T_{\PP}(C) & \deff \ &
  C \, \cup \,
 \left\{\ 
    \beta(h) \ : \
    \begin{array}{p{7.5cm}}
     there is 
     a rule $h \leftarrow b_1,\ldots,b_n$ in $\PP$ and
     a valuation $\beta$ for $\PP$ in $A$ such that 
     $\beta(b_1),\ldots,\beta(b_n)\in C$
  \end{array}
 \right\}.
\end{eqnarray*}
Clearly, $\T_{\PP}$ is \emph{monotone}, i.e., 
$\T_{\PP}(C)\subseteq \T_{\PP}(D)$ holds for all
$C\subseteq D\subseteq F_{\PP,A}$.
Letting $\T_{\PP}^0(C)\deff C$ and $\T_{\PP}^{i+1}(C)\deff
\T_{\PP}\big(\T_{\PP}^i(C)\big)$ for all $i\in\NN$,
one obtains
\[
  C = \T_{\PP}^0(C) \ \subseteq\ \T_{\PP}^1(C) \ \subseteq\ \cdots \
  \subseteq \
  \T_{\PP}^{i}(C)\ \subseteq\ \T_{\PP}^{i+1}(C) \ \subseteq \ \cdots \ \subseteq\ F_{\PP,A}.
\]
For a finite domain $A$, the set $F_{\PP,A}$ is finite, and hence
there is an $i_0\in\NN$ such that $\T_{\PP}^{i_0}(C)=\T_{\PP}^i(C)$ for all
$i\geq i_0$. In particular, the set $\T_{\PP}^\omega(C)\deff\T_{\PP}^{i_0}(C)$ is a
\emph{fixpoint} of the 
operator $\T_{\PP}$.
By the theorem of Knaster and Tarski 
we know that this fixpoint is the
\emph{smallest} fixpoint of $\T_{\PP}$ which contains $C$.

\clearpage 

\section{$\EXPTIME$-Hardness: Proof of Theorem~\ref{Thm:unorderedExptimeHard}}\label{appendix:ProofOfThmUnorderedExptimeHard}

The aim of this appendix is to prove the following:

\setcounter{restoreAppTheorem}{\value{theorem}}
\setcounter{theorem}{\value{Counter_Thm:unorderedExptimeHard}}

\begin{Theorem}\textbf{(restated)} \\
 The QCP for Boolean $\mDatalog(\tau_u)$ on unordered trees is $\EXPTIME$-hard.
\end{Theorem}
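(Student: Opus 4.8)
The plan is to reduce from the $\EXPTIME$-complete \emph{two person corridor tiling} (TPCT) problem, following the general spirit of Gottlob and Koch's $\EXPTIME$-hardness proof for ordered trees, but adapting it so that only the $\Child$ axis (and labels) are needed. Recall that an instance $I$ of TPCT consists of a finite tile set, horizontal and vertical compatibility relations, a corridor width $k$ given in unary, an initial row, and a final row; the two players alternately place tiles, filling the corridor row by row, and we ask whether player~1 has a strategy forcing the final row to appear. I would first fix an encoding: a winning strategy for player~1 is naturally a tree, where each node carries (as part of an enriched label alphabet $\Sigma$) one cell of the tiling together with bookkeeping information — its column index in $\{1,\dots,k\}$, whose move produced it, and whether a row is complete. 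A branch of the tree corresponds to one play consistent with the strategy; at player-1 nodes the tree has a single child (player~1 commits to a move), and at player-2 nodes the tree branches over all legal player-2 moves.

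Next I would design the two Boolean $\mDatalog(\tauuSigma)$-queries $Q_1$ and $Q_2$ so that $Q_1 \not\subseteq Q_2$ holds precisely when such a strategy tree exists. The intended semantics is: $Q_1(T) = \Yes$ for \emph{every} tree $T$ (or for every tree in a broad, easily-described class), while $Q_2(T) = \Yes$ exactly when $T$ \emph{fails} to encode a valid winning strategy — i.e., $Q_2$ is a ``bug detector.'' Then a witness tree for $Q_1 \not\subseteq Q_2$ is exactly a correctly-encoded winning strategy. The program for $Q_2$ must, using only unary IDB predicates and only the $\Child$ predicate in bodies, detect every possible local defect: a label that is not a legal tile, a horizontal mismatch between consecutive cells in a row, a vertical mismatch between a cell and the cell $k$ positions earlier (which one propagates along $\Child$-chains via an auxiliary IDB predicate that ``counts to $k$'' — here unary arity suffices because $k$ is in unary, so we can afford $O(k)$ distinct marker predicates or a chain of $k$ rules), an illegal alternation of players, a player-2 node that omits some legal move (this is the subtle one: ``for every legal move there is a child realising it'' is a $\forall\exists$ statement that monadic datalog, being existential–positive in flavour, cannot directly express, so I would instead force completeness by having $Q_2$ flag the \emph{absence} of a designated child, using the extra label information to pre-commit which children must exist, and verifying reachability of a node labelled as the final row). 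The root is marked by a special label, so that $\Bool{Q_i}$ — which asks whether the root is in the query answer — can gather the global verdict by propagating ``buggy'' upward along $\Child$ to the root.

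The constructions of $\Sigma$, $Q_1$, and $Q_2$ are clearly polynomial in $|I|$: the alphabet has size polynomial in the number of tiles and in $k$, and each query has a constant number of rule schemas instantiated over $\Sigma$ plus $O(k)$ rules for the vertical-distance counter. The correctness argument then splits into two directions: (soundness) if player~1 has a winning strategy, unfold it into a finite tree $T$ with the prescribed labelling and check that none of $Q_2$'s defect patterns fires, so $Q_1(T)=\Yes$ and $Q_2(T)=\No$; (completeness) if $T$ witnesses $Q_1 \not\subseteq Q_2$, then $Q_2(T)=\No$ means $T$ is defect-free, and one reads off a genuine winning strategy by induction on the tree, using the finiteness of $T$ to guarantee the game terminates. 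I expect the \textbf{main obstacle} to be encoding the universal ``player~2 may respond arbitrarily'' condition and the vertical $k$-step compatibility within the weak logic of monadic datalog over the single $\Child$ axis — in particular, making the branching at player-2 nodes genuinely exhaustive (so that the witness tree cannot ``cheat'' by simply omitting an inconvenient opponent move) while keeping all IDB predicates unary and all rule bodies using only $\Child$ and unary atoms. This is exactly the place where the extra labels and the unary encoding of $k$ have to be exploited carefully, and where the reduction differs most from the $\tauGK$ case, which had $\Root$, $\Leaf$, $\Ls$, and sibling axes available for free.
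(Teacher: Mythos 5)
Your choice of reduction source (TPCT) and your tree encoding of strategies match the paper's, and you correctly identify the crux: expressing ``every legal Player-2 response is realised by some child'' in monadic datalog over $\Child$ alone. But the remedy you propose for that crux cannot work, and this is a genuine gap rather than a detail. You make $Q_1$ (essentially) trivially true and put the entire burden on $Q_2$, which is supposed to ``flag the absence of a designated child.'' Monadic datalog queries are preserved under homomorphisms: if $T'$ is obtained from a correct winning-strategy tree $T$ by deleting the subtree rooted at one of the children of a Player-2 node (say the unique child realising some tile $d\in\DD$), then the inclusion $T'\hookrightarrow T$ is a homomorphism of $\tauuSigma$-structures mapping root to root, so $\AF{Q_2}(T')=\Yes$ forces $\AF{Q_2}(T)=\Yes$. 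Since your $Q_2$ must accept $T'$ (a required move is omitted) it must also accept the genuine strategy tree $T$, and the reduction collapses. No amount of pre-committed label information escapes this, because ``a child with a given label is absent'' is anti-monotone. The paper explicitly records that it could not (and, by this argument, cannot) capture conditions of this kind in $Q_2$.

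The paper's actual solution inverts the division of labour: it defines, inside $Q_1$, the \emph{positive} predicates $\Candidate$ (bottom-up) and $\Relevant$ (top-down), where the rule for a Player-2 candidate has a body containing $\Child(x,y_1),\ldots,\Child(x,y_m)$ with one existentially quantified child per tile type --- an existential, hence expressible, statement. $Q_1$ accepts iff the root is $\Relevant$, which guarantees that the \emph{relevant subtree} $T_{\Relevant}$ satisfies the branching and non-stuck conditions; $Q_2$ then only has to detect violations of the remaining purely local conditions (alternation, final row, horizontal/vertical fit) \emph{restricted to relevant nodes}, all of which are positive patterns once complement-label predicates $\Label_{\neq d}$ are derived from the finite alphabet. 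Extra ``junk'' children in a witness tree are harmless because they are simply not relevant. You would need to restructure your proof along these lines. (Your idea of encoding $\Root$ in the labels is sound and is essentially the paper's Lemma~\ref{Lemma:GettingRidOfRootLeaf}, though note that the paper first proves hardness with $\Root$ and $\Leaf$ available and only then eliminates them; the column-counting via $O(n)$ unary predicates is also as in the paper.)
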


\setcounter{theorem}{\value{restoreAppTheorem}}

We will show this by first proving the according hardness result for the schema $\tau_u^{\Root,\Leaf}$.
Afterwards, we will construct a polynomial-time reduction which
provides the same hardness result also
for the schema $\tau_u$.

\subsection{$\EXPTIME$-hardness result for the schema $\tau_u^{\Root,\Leaf}$}

This subsection's main result is

\begin{Proposition}\label{Prop:Hardness}\ \\
 The QCP for Boolean $\mDatalog(\tau_u^{\Root,\Leaf})$ on unordered trees is $\EXPTIME$-hard.
\end{Proposition}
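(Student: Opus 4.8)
The plan is to reduce from an $\EXPTIME$-complete problem — the two person corridor tiling (TPCT) problem \cite{Chle86} — and to encode a winning strategy for the first player as a labeled tree. Given a TPCT-instance $I$ (a set of tile types, horizontal and vertical compatibility relations, an initial row, a final row, and a corridor width $n$ given in unary), I would construct in polynomial time an alphabet $\Sigma$ and two Boolean $\mDatalog(\tauuAlph{\Sigma}^{\Root,\Leaf})$-queries $Q_1$ and $Q_2$ so that $Q_1\not\subseteq Q_2$ if and only if the first player has a winning strategy in the game associated with $I$. A witnessing tree $T$ — one with $\AF{Q_1}(T)=\Yes$ and $\AF{Q_2}(T)=\No$ — should carry, along its root-to-leaf structure, an encoding of the strategy tree of the game: each maximal path encodes a play, consecutive blocks of $n$ nodes along a path encode successive rows of the tiling, branching corresponds to the opponent's moves, and the labels in $\Sigma$ carry the tile types together with bounded bookkeeping (position within a row, whose turn it is, phase markers).

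The technical heart is the design of $Q_1$ and $Q_2$. The idea is to let $Q_1$ be easy to satisfy — essentially $Q_1$ holds at the root whenever $T$ merely has the gross shape of a candidate strategy tree (correct block lengths, correct alternation of player moves, an initial row at the top and a final row on every leaf-most block) — while $Q_2$ is designed to \emph{detect a flaw}: $\AF{Q_2}(T)=\Yes$ should hold exactly when $T$ contains some local violation of the tiling rules (two horizontally adjacent tiles that are incompatible, two vertically adjacent tiles that are incompatible, a first-player move that is not a legal response, or a branch in which the opponent is not offered all of its legal moves). Then $Q_1\not\subseteq Q_2$ is witnessed precisely by a flawless candidate strategy tree, i.e.\ by a genuine winning strategy. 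Monadic datalog is well suited to the ``detect a flaw'' direction: a monadic program can propagate a marking predicate from a locally-bad configuration up to the root using the $\Child$, $\Root$ and $\Leaf$ predicates, and recursion lets it traverse the unbounded corridor height; the $\Leaf$ predicate is what lets $Q_1$ check that every play reaches the final row, and $\Root$ anchors the Boolean semantics. A subtle point is encoding the opponent's obligation to have \emph{all} legal moves available — this is a universal condition over the finitely many tile types, so it can be written as a finite conjunction inside the query, checking at each opponent-branching node that every legal successor row appears among its children; a missing one is flagged as a flaw by $Q_2$.

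The main obstacle I expect is getting the interface between the two queries exactly right so that spurious witnesses are impossible. In particular, a tree $T$ could try to cheat by being malformed in a way that $Q_1$ does not notice and $Q_2$ does not flag — e.g.\ by collapsing a play, by padding with junk subtrees, or by mislabeling position counters so that ``horizontally adjacent'' is misidentified. The fix is to make $Q_1$'s shape-check and the flaw-detection in $Q_2$ jointly \emph{exhaustive}: every deviation from a correct strategy tree must either cause $\AF{Q_1}(T)=\No$ or cause $\AF{Q_2}(T)=\Yes$. Concretely I would prove two lemmas — soundness (if $Q_1\not\subseteq Q_2$ then the witnessing tree yields a winning strategy, extracted by reading tile labels off the blocks) and completeness (a winning strategy yields a finite tree $T$ with $\AF{Q_1}(T)=\Yes$ and $\AF{Q_2}(T)=\No$) — and the polynomial-time bound is immediate since all rule bodies have bounded size and the number of rules is linear in $|I|$. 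Once Proposition~\ref{Prop:Hardness} is established for $\tauuAlph{\Sigma}^{\Root,\Leaf}$, the remaining step in this appendix is the promised polynomial-time reduction eliminating $\Root$ and $\Leaf$, which can be done by padding trees with a fresh root-marker label and fresh leaf-marker labels and rewriting the queries so that the $\Root$- and $\Leaf$-atoms are simulated by $\Label$-atoms together with $\Child$.
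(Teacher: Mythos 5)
Your overall architecture (reduce from TPCT, encode strategy trees with root-to-leaf paths as plays and blocks of $n$ nodes as rows, let $Q_1$ certify the shape and $Q_2$ detect flaws) is exactly the paper's plan, and your soundness/completeness split and the final $\Root$/$\Leaf$-elimination step are also in line with the appendix. But there is a genuine gap at the one point the paper identifies as the technical crux: you propose that $Q_2$ flag, as a flaw, an opponent node at which ``a missing legal successor'' occurs. Monadic datalog cannot do this. Since the immediate consequence operator is monotone, every $\mDatalog$-query is preserved under extensions of the structure: if $T_1$ is obtained from a tree $T_2$ by deleting one child (and its subtree) of a non-leaf node, then $\atoms(T_1)\subseteq\atoms(T_2)$, hence $\AF{Q_2}(T_2)=\No$ forces $\AF{Q_2}(T_1)=\No$. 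So if $T_2$ is a genuine winning-strategy tree (which must evade $Q_2$), the deficient tree $T_1$ missing one of Player~2's options also evades $Q_2$; when Player~1 does \emph{not} have a winning strategy, such deficient trees become spurious witnesses of $Q_1\not\subseteq Q_2$. The same obstruction applies to ``the game never gets stuck'' (all children labeled $\bot$) and to any other condition whose violation consists in the \emph{absence} of a child.

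The missing idea is to check these conditions \emph{positively} in $Q_1$ rather than negatively in $Q_2$, and to relativize everything else to the part of the tree so certified. The paper does this with a bottom-up $\Candidate$ predicate whose rule for a Player-2 node is a bounded positive conjunction requiring, for each tile $d\in\DD$, a candidate child labeled $d$ (plus a candidate child not labeled $\bot$), followed by a top-down $\Relevant$ predicate; $Q_1$ accepts iff the root is relevant, which guarantees that the induced subtree $T_{\Relevant}$ satisfies all the existential/branching conditions. Crucially, $Q_2$'s flaw detectors (alternation, depth of the final row, horizontal/vertical incompatibilities, spurious $\bot$-labels) are then all conjoined with $\Relevant$-atoms, so junk subtrees outside $T_{\Relevant}$ neither help nor hurt, and a witness of $Q_1\not\subseteq Q_2$ yields a winning strategy by reading off $T_{\Relevant}$ rather than all of $T$. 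Without this device (or an equivalent one), your two lemmas cannot both hold: making $Q_1$ strict enough to exclude deficient trees requires exactly this kind of recursive positive certification, and making $Q_2$ catch them is impossible by monotonicity.
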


\begin{proof}

Our proof proceeds by reduction from the $\EXPTIME$-complete \emph{two person corridor tiling} problem ($\TPCT$)
\cite{Chle86}. The task of the $\TPCT$-problem is to decide whether the first player in the following
\emph{two person corridor tiling game} has a winning strategy.

There are two players: Player~1 (the \emph{Constructor}) and Player~2 (the \emph{Saboteur}). 
The game board is a corridor of a given width $n$ and an unbounded length.
There is a finite set $\DD$ of types of \emph{tiles} (or, \emph{dominoes}), and from every tile type, an 
unlimited number of tiles is available. 
The first row $f$ (of width $n$) of tiles, as well as the designated last row $\ell$ (of width $n$) of tiles are given. 

The players alternately select a tile and put it into the next vacant position
(row-wise from left to right);
Player~1 starts at the leftmost position of the second row.
Both players have to respect horizontal and vertical constraints, given by 
two sets $H,V\subseteq\DD^2$.
A tile $d$ chosen for the $j$-th column of the $i$-th row has to fit to its vertical neighbour 
$d_v$ in the $j$-th column of the $(i{-}1)$-th row in the sense that $(d_v,d)\in V$. 
Furthermore, if $j\geq 2$, then tile $d$ also has to fit to its horizontal neighbour
$d_h$ in the $(j{-}1)$-th column of the $i$-th row in the sense that $(d_h,d)\in H$.
If a player is unable to choose a fitting tile, Player~1 loses and the game ends.

The ultimate goal of Player~1 is to produce a tiling whose last row is $\ell$; in this case 
he wins and the game ends. Player~2 wins if either the game goes on for an infinite number of steps, or 
one of the players gets stuck in a situation where he cannot find a fitting tile.

The \emph{two person corridor tiling problem} ($\TPCT$) is the following decision problem.

\begin{Problem}{$\TPCT$}
 \In A tuple $I=(\DD,H,V,n,f,\ell)$ such that $\DD$ is a finite set, $H,V\subseteq \DD^2$, $n\geq 2$,
     $f,\ell\in \DD^n$.
 \Quest Does Player~1 have a winning strategy in the\\ two person corridor tiling game specified by $I$?
\end{Problem}
\noindent

\begin{Theorem}[Chlebus \cite{Chle86}]\label{Pro:TPCT}
	The problem $\TPCT$ is $\EXPTIME$-complete.  
\end{Theorem}

Note that $\EXPTIME$ is closed under complementation. Thus, for proving 
Proposition~\ref{Prop:Hardness} it
suffices to give a polynomial-time reduction from $\TPCT$ to the \emph{complement} of the 
QCP for $\mDatalog(\tauu^{\Root,\Leaf})$
on unordered trees.
For a given $\TPCT$-instance $I=(\DD,H,V,n,f,\ell)$ we will construct a finite alphabet
$\Sigma$ and two Boolean $\mDatalog(\tauuSigma^{\Root,\Leaf})$-queries $Q_1,Q_2$, such that 
\[
\begin{array}{ll}
 & \text{Player~1 has a winning strategy in the}
\\
 & \text{two person corridor tiling game specified by $I$} 
\\[1ex]
   \iff \ 
 & \text{there exists an unordered $\Sigma$-labeled tree $T$ such that}
\\
 & \text{$\AF{Q_1}(T)=\Yes$  and  $\AF{Q_2}(T)=\No$ 
     \ (i.e., $Q_1\not\subseteq Q_2$)}.
\end{array}
\]
We will represent strategies for Player~1 by $\Sigma$-labeled trees. 
The query $Q_1$ will describe ``necessary properties'' which are met by every tree that describes a
winning strategy for Player~1, but also by some other trees.
The query $Q_2$ will describe certain ``forbidden properties'' such that a tree which has these
properties for sure does \emph{not} describe a winning strategy for Player~1.

The following representation of a winning strategy for Player~1 is basically taken from
\cite{Loeding-survey}.
We represent a strategy for Player~1 by an unordered $\Sigma$-labeled tree with
\[
  \Sigma\ \deff \ \ \DD\times\set{1,2,\bot,!}.
\]
The first component of a letter $(d,i)\in\Sigma$ corresponds to the tile $d$
that has been played, while the second component indicates whose turn it is to place the next tile
(1 for Player~1, 2 for Player~2, 
$\bot$ in case the game is over because a vertical or horizontal constraint was violated, 
and ! in case that the game is over because Player~1 has won).
In the following, we will say that a node is labeled $d$ (for some $d\in\DD$) to express that its label 
belongs to $\set{d}\times\set{1,2,\bot,!}$. Accordingly, we will say that a node is labeled 
$i$ (for some $i\in\set{1,2,\bot,!}$) to express that its label belongs to $\DD\times\set{i}$.

A finite $\Sigma$-labeled tree $T$ is called \emph{good} if it satisfies the following conditions (1)--(9).
It is not difficult
to verify that Player~1 has a winning strategy if, and only if, 
there exists a finite $\Sigma$-labeled tree that is good.
\begin{me}
\item[(1)]
 The root is labeled by $(d,2)$ for some $d\in\DD$. (This indicates that at the beginning of the game, 
 Player~1 chooses tile $d$, and Player~2 is the one to play in the next step).
\item[(2)]
 Nodes with labels $\bot$ or ! are leaves.
\item[(3)]
 Nodes with labels in $\DD\times\set{1}$ have at least one child. (Such a child describes the choice made by 
 Player~1 in the next step).
\item[(4)]
 Nodes with labels in $\DD\times\set{2}$ have at least $|\DD|$ children --- one for each tile type $d\in\DD$. (These
 children represent the potential choices that Player~2 might make in the next step).
\item[(5)]
 There is no node labeled 1 or 2 such that all of its children are labeled by $\bot$.
 (I.e., the game never gets stuck).
\item[(6)]
 Labels from $\DD\times\set{1}$ and $\DD\times\set{2}$ alternate on each path from the root to a leaf.
 (I.e., both players alternately choose a tile).
\item[(7)]
 If a node $x$ is labeled !, then the number of nodes visited by the path from the root to $x$ is a multiple of $n$ 
 and the last $n$ nodes on this path are labeled according to $\ell$.
 (This means that the last $n$ nodes of the path describe a row which has the desired labeling $\ell$.)
\item[(8)]
 At each node $x$ labeled $(d,i)$ with $i\neq \bot$, the tile $d$ respects the horizontal and the vertical constraints.
\item[(9)]
 At each node labeled $(d,\bot)$ for some $d\in\DD$, the tile $d$ violates the horizontal or the vertical constraints.
\end{me}
To be precise, the conditions (8) and (9) mean the following.
We define the \emph{depth} of a node as follows: The root has depth 1; and for each node $x$ of depth $j$, 
all children of $x$ are of depth $j{+}1$.
\begin{mea}
 \item
  A node $x$ labeled with tile $d\in\DD$ \emph{respects the horizontal constraints} if $x$  
 \begin{mi}
  \item
   is either of depth congruent 1 modulo $n$ (and thus corresponds to a position in the 1-st column of a row),
  \item
   or we have $(d_h,d)\in H$, where the parent of $x$ is labeled with tile $d_h\in\DD$
   (i.e., $x$ corresponds to a position where tile $d$ is chosen in some column $j\geq 2$, and this tile fits
   to its horizontal neighbour $d_h$ in column $j{-}1$).
 \end{mi}
 \item
 A node $x$ labeled with a tile $d\in\DD$ \emph{respects the vertical constraints} if $x$ 
 \begin{mi}
  \item
   is either is of depth $j\in\set{1,\ldots,n}$ and we have $(f_j,d)\in H$\\
   (i.e., $x$ corresponds to the $j$-th position in the second row
   and fits to the $j$-th entry $f_j$ of the first row $f$),
  \item
   or it is of depth $j\geq n{+}1$ and we have $(d_v,d)\in V$, where the ancestor of $x$ at depth $j{-}n$ is
   labeled with tile $d_v\in\DD$\\ (i.e., $x$ corresponds to a position where tile $d$ is chosen in some row $i\geq 3$, 
   and this tile fits to its vertical neighbour $d_v$ in row $i{-}1$).
 \end{mi}
\end{mea}

\noindent
As noted above, Player~1 has a winning strategy if, and only if, there exists a finite $\Sigma$-labeled tree
$T$ that is good, i.e., that satisfies the conditions (1)--(9).
The first idea towards completing the proof of Proposition~\ref{Prop:Hardness} is to try to find 
monadic Datalog queries $Q_1$ and $Q_2$ such that for any $\Sigma$-labeled tree $T$ the following is true:
$T$ is good if, and only if, $\AF{Q_1}(T)=\Yes$ and $\AF{Q_2}(T)=\No$. 
In fact, 
it is not difficult to construct for each condition ($c$) with $c\neq 4$ and 
$c\neq 5$ a Boolean $\mDatalog(\tauu^{\Root,\Leaf})$-query $Q^c$ such that for any $\Sigma$-labeled tree $T$ we have:
\[
   \AF{Q^c}(T)=\Yes \ \iff \ T \text{ violates condition ($c$)}.
\]
However, for the conditions (4) and (5), we were unable to find according monadic datalog queries which
precisely characterise all trees that violate (or all trees that fulfill) these conditions.

As a remedy, we define a notion of \emph{almost-good} trees in such a way that the following is true:
\begin{mei}
 \item[(i)]
   Every \emph{almost-good} tree $T$ contains a \emph{good} tree; \\
   and every \emph{good} tree also is \emph{almost-good}.
 \item[(ii)]
   We can find Boolean $\mDatalog(\tauuSigma^{\Root,\Leaf})$-queries $Q_1, Q_2$ such that for any 
   $\Sigma$-labeled tree $T$ the following is true: $T$ is \emph{almost-good} if, and only if,
   $\AF{Q_1}(T)=\Yes$ and $\AF{Q_2}(T)=\No$.
\end{mei}
For defining the notion of \emph{almost-good} trees, we need the following notation.
Let $T$ be an unordered $\Sigma$-labeled tree. By performing a bottom-up scan of $T$, we 
define the set of nodes that are \emph{candidates} as follows:
\begin{mi}
 \item
   Every \emph{leaf} of $T$ 
   that is labeled $\bot$ or !
   is a \emph{candidate}.
 \item
   For each node $x$ of $T$ that is labeled 1, $x$ is a \emph{candidate} if $x$ has a
   child that is a \emph{candidate} and that is \emph{not} labeled $\bot$.
 \item
   For each node $x$ of $T$ that is labeled 2, $x$ is a \emph{candidate} if 
   \begin{mi} 
    \item
      for each $d\in\DD$, $x$ has a child that is a \emph{candidate} and that is labeled $d$,
    \item
      and $x$ has child that is a \emph{candidate} and that is \emph{not} labeled $\bot$.
   \end{mi}
\end{mi}
Now, we perform a top-down scan of $T$ to define the set of nodes that are \emph{relevant} as 
follows:
\begin{mi}
 \item
   The root of $T$ is \emph{relevant} if it is labeled in $\DD\times\set{2}$ and it is a \emph{candidate}.
 \item
   For each non-root node $x$ of $T$, $x$ is \emph{relevant} if it is a \emph{candidate} and its 
   parent is \emph{relevant}.
\end{mi}
Note that according to this definition, in particular, the following is true:
\begin{mi}
 \item
   Every \emph{relevant} node of $T$ either is a leaf of $T$ or has a child that is \emph{relevant}.
 \item
   If the root of $T$ is \emph{relevant}, then it is labeled in $\DD\times\set{2}$, and
   the set of all \emph{relevant} nodes of $T$ forms
   a tree, which we will call $T_{\Relevant}$.
 \item
   Relevant nodes with labels $\bot$ or ! are leaves.
 \item
   Every \emph{relevant} node with label in $\DD\times\set{1}$ has a \emph{relevant} child
   that is \emph{not}~labeled $\bot$.
 \item
   Every \emph{relevant} node with label in $\DD\times\set{2}$ has, for each $d\in\DD$, a 
   \emph{relevant} child labeled $d$; and it has a \emph{relevant} child that is 
   \emph{not} labeled $\bot$.
\end{mi}
Thus, the following is true for every $\Sigma$-labeled tree $T$:
\begin{center}
 $(*)$: \ \
 \parbox[t]{8cm}{
   If the root of $T$ is \emph{relevant}, then\\
   the tree $T_{\Relevant}$ satisfies the conditions (1)--(5).
 }
\end{center}
Furthermore, note that if $T$ is \emph{good}, then $T_{\Relevant}=T$.

We say that a $\Sigma$-labeled tree $T$ is \emph{almost-good} if its root node is \emph{relevant} and
the tree $T_{\Relevant}$ is \emph{good}, i.e., satisfies the conditions (6)--(9).

Our next goal is to construct an $\mDatalog(\taubSigma^{\Root,\Leaf})$-program $\PPrel$ which constructs, 
in an intensional predicate called $\Relevant$, the set of all \emph{relevant} nodes.
We start with $\PPrel\deff\emptyset$.
To access the parts $d$ and $i$ of a node-label $(d,i)\in\Sigma$, it will be convenient to include into $\PPrel$
the rules
\[
 \Label_d(x) \leftarrow \Label_{(d,i)}(x) 
 \qquad\text{and}\qquad
 \Label_i(x) \leftarrow \Label_{(d,i)}(x)
\]
for every letter $(d,i)\in\Sigma$.
Furthermore, for all $d,d'\in\DD$ and $i,i'\in\set{1,2,\bot,!}$ with $d\neq d'$ and $i\neq i'$ we
add to $\PPrel$ the rules
\[
 \Label_{\neq d}(x) \leftarrow \Label_{d'}(x)
 \qquad\text{and}\qquad
 \Label_{\neq i}(x) \leftarrow \Label_{i'}(x).
\]
To describe the \emph{candidate} nodes, we add to $\PPrel$ the rules
\begin{align*}
 \Candidate(x)&\leftarrow \Leaf(x),\Label_{\bot}(x)
\\
 \Candidate(x)&\leftarrow \Leaf(x),\Label_{\,!}(x)
\\
 \Candidate(x)&\leftarrow \Label_1(x),\Child(x,y),\Candidate(y),\Label_{\neq \bot}(y),
\end{align*}
as well as the following rule, where $d_1,\ldots,d_m$ is a list of all elements in $\DD$:
\begin{align*}
 \Candidate(x) \leftarrow & \Label_2(x),\Child(x,y_1),\ldots,\Child(x,y_m),
\\
 &  \Candidate(y_1),\ldots,\Candidate(y_m),
\\
 &  \Label_{d_1}(y_1),\ldots,\Label_{d_m}(y_m),
\\
 &  \Child(x,y),\Candidate(y),\Label_{\neq \bot}(y).
\end{align*}
To describe the \emph{relevant} nodes, we add to $\PPrel$ the rules
\begin{align*}
 \Relevant(x)&\leftarrow \Root(x),\Candidate(x),\Label_{2}(x)
\\
 \Relevant(x)&\leftarrow \Candidate(x),\Child(y,x),\Relevant(y)
\end{align*}
This completes the definition of the monadic datalog program $\PPrel$.
\\
Obviously, the following is true:
\begin{Claim}\label{claim:Prel-HardnessProof}
 $\PPrel$ can be constructed in time polynomial in the size of $\Sigma$. 
 \\
 Furthermore, for the unary query $Q_{\Relevant}\deff (\PPrel,\Relevant)$ the following is true: 
 For every unordered $\Sigma$-labeled tree $T$, the set
 $\AF{Q_{\Relevant}}(T)$ contains exactly those nodes of $T$ that are relevant.
\end{Claim}

Recall that our overall goal is to find Boolean queries $Q_1$ and $Q_2$ that satisfy condition (ii).
We choose $Q_1$ to be the query that returns ``$\Yes$'' exactly for those trees $T$ 
whose root is \emph{relevant}. I.e., the program of $Q_1$ is obtained from $\PPrel$ by adding the
rule
\begin{align*}
 \Accept(x)&\leftarrow \Root(x),\Relevant(x)
\end{align*}
and the query predicate of $Q_1$ is the predicate $\Accept$.
From $(*)$ we know that the following is true:
\begin{Claim}\label{claim:Q1-HardnessProof}
 $Q_1$ can be constructed in time polynomial in the size of $\Sigma$; and for every 
 $\Sigma$-labeled tree $T$ we have
 $\AF{Q_1}(T)=\Yes$ if, and only if, the root of $T$ is relevant and the tree $T_{\Relevant}$ satisfies the conditions
 (1)--(5).
\end{Claim}

Our next goal is to construct a Boolean query $Q_2$ that returns ``$\Yes$'' exactly for those trees $T$
where the tree $T_{\Relevant}$ violates one of the conditions (6)--(9). 
Once we have achieved this, we know that for any tree $T$ the following is true: 
$\AF{Q_1}(T)=\Yes$ and $\AF{Q_2}(T)=\No$ if,
and only if, the tree $T_{\Relevant}$ satisfies the conditions (1)--(9), and hence witnesses that 
Player~1 has a winning strategy for the two person corridor tiling game specified by 
$I=(\DD,H,V,n,f,\ell)$.

To construct $Q_2$, we start with the monadic Datalog program $\PP_2\deff\PPrel$ and successively add
rules to $\PP_2$.

To detect a violation of condition (6), we add to $\PP_2$ the rules
\begin{align*}
 \Reject^{(6)}(z) \leftarrow & \Relevant(x),\Relevant(y),\\ & \Child(x,y),\Label_1(x),\Label_1(y),\Root(z) 
\\
 \Reject^{(6)}(z) \leftarrow & \Relevant(x),\Relevant(y),\\ & \Child(x,y),\Label_2(x),\Label_2(y),\Root(z).
\end{align*}
This way, $T_{\Relevant}$ violates condition (6) if, and only if, the root of $T$ 
gets assigned the predicate $\Reject^{(6)}$. Thus, the Boolean query specified by $(\PP_2,\Reject^{(6)})$ 
returns ``$\Yes$'' for exactly those trees $T$ where $T_{\Relevant}$ violates condition (6).

To detect a violation of the conditions (7)--(9), it will be convenient to use predicates 
$\Column_j$ for each $j\in\set{1,\ldots,n}$, such that $\Column_j(x)$ indicates that node $x$ corresponds
to a tile placed in column $j$ of the corridor.
Thus, we add to $\PP_2$ the rules
\begin{align*}
  \Column_1(x)&\leftarrow\Root(x),\Relevant(x)
\\
  \Column_1(x)&\leftarrow\Child(y,x),\Column_n(y),\Relevant(y),\Relevant(x)
\end{align*}
and for each $j\in\set{2,\ldots,n}$ the rule
\begin{align*}
  \Column_j(x)\leftarrow\Child(y,x),\Column_{j-1}(y),\Relevant(y),\Relevant(x).
\end{align*}
Furthermore, for each $j\in\set{1,\ldots,n{-}1}$ we add to $\PP_2$ the rule
\begin{align*}
 \Column_{\neq n}(x)&\leftarrow \Column_j(x)
\end{align*}
and for each $j\in\set{2,\ldots,n}$ we add to $\PP_2$ the rule
\begin{align*}
 \Column_{\neq 1}(x)&\leftarrow \Column_j(x).
\end{align*}

To detect a violation of condition (7), we add to $\PP_2$ the rule
\[
 \Reject^{(7)}(z) \leftarrow \Label_{\,!}(x),\Column_{\neq n}(x),\Relevant(x),\Root(z)
\]
and for each $j\in\set{1,\ldots,n}$ we add the rule
\begin{align*}
 \Reject^{(7)}(z)\leftarrow &
 \Label_{\,!}(x_n),\Child(x_1,x_2),\ldots,\Child(x_{n-1},x_n),
\\
 & \Label_{\neq \ell_j}(x_j),\Relevant(x_1),\ldots,\Relevant(x_n),\Root(z)
\end{align*}
where $\ell_j$ denotes the $j$-th position of the designated last row $\ell$.
\\
This way, $T_{\Relevant}$ violates condition (7) if, and only if, the root of $T$ gets assigned the predicate $\Reject^{(7)}$.
Hence, the Boolean query specified by $(\PP_2,\Reject^{(7)})$ returns ``$\Yes$'' for exactly those trees $T$
where $T_{\Relevant}$ violates condition (7).
Note that $\PP_2$ can be constructed in time polynomial in the size of $\Sigma$ and $n$.

To detect a violation of condition (8), it will be convenient to use predicates
$\Buggy_H$ and $\Buggy_V$, such that $\Buggy_H(x)$ (resp., $\Buggy_V(x)$) indicates that node $x$ violates
the horizontal (resp., the vertical) constraints.
Thus, for all $(d_h,d)\in \DD^2\setminus H$, we add to $\PP_2$ the rule
\begin{align*}
  \Buggy_H(x) \leftarrow 
& \Column_{\neq 1}(x), \Child(y,x), \Label_{d_h}(y), \Label_d(x),
\\
& \Relevant(y),\Relevant(x).
\end{align*}
Similarly, for all $(d_v,d)\in \DD^2\setminus V$, we add add to $\PP_2$ the rule
\begin{align*}
  \Buggy_V(x) \leftarrow 
& \Child(y_1,y_2),\ldots,\Child(y_{n-1},y_n),\Child(y_n,x),
\\
& \Label_{d_v}(y_1), \Label_d(x),
\\
& \Relevant(y_1),\ldots,\Relevant(y_n),\Relevant(x).
\end{align*}
To detect nodes that correspond to tiles placed in the corridor's second row, i.e., tiles that must fit to 
the given first row $f=(f_1,\ldots,f_n)\in\DD^n$, we furthermore add for each $j\in\set{1,\ldots,n}$ 
and each $d\in\DD$ with $(f_j,d)\not\in V$, the rule
\begin{align*}
  \Buggy_V(x_j) \leftarrow
& \Root(x_1),\Child(x_1,x_2),\ldots,\Child(x_{n-1},x_n),
\\
&  \Label_d(x_j),\Relevant(x_1),\ldots,\Relevant(x_n)
\end{align*}
To detect a violation of condition (8) we add to $\PP_2$ the rules
\begin{align*}
  \Reject^{(8)}(z) & \leftarrow \Label_{\neq \bot}(x), \Buggy_H(x), \Relevant(x),\Root(z)
\\
  \Reject^{(8)}(z) & \leftarrow \Label_{\neq \bot}(x), \Buggy_V(x), \Relevant(x),\Root(z).
\end{align*}
This way, $T_{\Relevant}$ violates condition (8) if, and only if, the root of $T$ gets assigned the predicate $\Reject^{(8)}$.
Hence, the Boolean query specified by $(\PP_2,\Reject^{(8)})$ returns ``$\Yes$'' for exactly those trees $T$
where $T_{\Relevant}$ violates condition (8).
Note that $\PP_2$ can be constructed in time polynomial in the size of $\Sigma,n,\DD$.

To detect a violation of condition (9), it will be convenient to use predicates
$\Okay_H$ and $\Okay_V$, such that $\Okay_H(x)$ (resp., $\Okay_V(x)$) indicates that node $x$ 
satisfies the horizontal (resp., the vertical) constraints.
Thus, for all $(d_h,d)\in H$, we add to $\PP_2$ the rules
\begin{align*}
  \Okay_H(x) \leftarrow & \Column_{1}(x)
\\
  \Okay_H(x) \leftarrow & \Column_{\neq 1}(x), \Child(y,x), \Label_{d_h}(y), \Label_d(x).
\end{align*}
Similarly, for all $(d_v,d)\in V$, we add add to $\PP_2$ the rules
\begin{align*}
  \Okay_V(x) \leftarrow 
& \Child(y_1,y_2),\ldots,\Child(y_{n-1},y_n),\Child(y_n,x),
\\
& \Label_{d_v}(y_1), \Label_d(x).
\end{align*}
To detect nodes that correspond to tiles placed in the corridor's second row, i.e., tiles that must fit to 
the given first row $f=(f_1,\ldots,f_n)\in\DD^n$, we furthermore add for each $j\in\set{1,\ldots,n}$ 
and each $d\in\DD$ with $(f_j,d)\in V$, the rule
\begin{align*}
  \Okay_V(x_j) \leftarrow
& \Root(x_1),\Child(x_1,x_2),\ldots,\Child(x_{n-1},x_n),\Label_d(x_j).
\end{align*}
To detect a violation of condition (9) we add to $\PP_2$ the rule
\[
  \Reject^{(9)}(z)  \leftarrow \Label_{\bot}(x), \Okay_H(x), \Okay_V(x), \Relevant(x), \Root(z).
\]
This way, $T_{\Relevant}$ violates condition (9) if, and only if, the root of $T$ gets assigned the predicate $\Reject^{(9)}$.
Hence, the Boolean query specified by $(\PP_2,\Reject^{(9)})$ returns ``$\Yes$'' for exactly those trees $T$
where $T_{\Relevant}$ violates condition (9).
Note that $\PP_2$ can be constructed in time polynomial in the size of $\Sigma,n,\DD,H,V$.

Finally, for each $c\in\set{6,7,8,9}$ we add to $\PP_2$ the rule
\[
 \Reject(z)\leftarrow \Reject^{(c)}(z)
\]
and we let $Q_2$ be the Boolean monadic datalog query specified by $(\PP_2,\Reject)$.
By our construction, the following holds:

\begin{Claim}\label{claim:Q2-HardnessProof}
 $Q_2$ can be constructed in time polynommial in the size of $\Sigma,n,\DD,H,V$; and for
every $\Sigma$-labeled tree $T$ we have $\AF{Q_2}(T)=\Yes$ if, and only if, the tree
$T_{\Relevant}$ violates one of the conditions (6)--(9).
\end{Claim}

In summary, for each $\TPCT$-instance 
$I=(\DD,H,V,n,f,\ell)$, we can construct within polynomial time the alphabet $\Sigma\deff \DD\times\set{1,2,\bot,!}$ and
two Boolean $\mDatalog(\taubSigma^{\Root,\Leaf})$-queries $Q_1,Q_2$ such that the following is true 
for every unordered $\Sigma$-labeled tree $T$:
\[
 \begin{array}{ll}
 &  \AF{Q_1}(T)=\Yes \text{ \ and \ } \AF{Q_2}(T)=\No
 \\[1ex]
  \iff \ 
 & 
  \parbox[t]{9cm}{the root of $T$ is \emph{relevant} and \\ the tree
  $T_{\Relevant}$ satisfies the conditions (1)--(9).}
 \end{array}
\]
Thus, $Q_1\not\subseteq Q_2$ if, and only if, Player~1 has a winning strategy in the
two person corridor tiling game specified by $I$.
Hence, we have established a polynomial-time reduction from $\TPCT$ to the complement of the
QCP for Boolean $\mDatalog(\tauu)$ on unordered trees.
This completes the proof of Proposition~\ref{Prop:Hardness}.
\end{proof}

\subsection{Omitting the predicates $\Root$ and $\Leaf$: Proof of Theorem~\ref{Thm:unorderedExptimeHard}}

From Proposition~\ref{Prop:Hardness} we already know that the QCP is $\EXPTIME$-hard for
Boolean $\mDatalog(\tauu^{\Root,\Leaf})$-queries on unordered trees.
Theorem~\ref{Thm:unorderedExptimeHard} claims the same hardness result already for queries that don't use the 
predicates $\Root$ and $\Leaf$. Thus, 
Theorem~\ref{Thm:unorderedExptimeHard} is an immediate consequence of
Proposition~\ref{Prop:Hardness} and the following lemma:

\begin{Lemma}\label{Lemma:GettingRidOfRootLeaf}
There is a polynomial-time reduction from 
the QCP for Boolean $\mDatalog(\tauu^{\Root,\Leaf})$ on unordered trees to
the QCP for Boolean $\mDatalog(\tauu)$ on unordered trees.
\end{Lemma}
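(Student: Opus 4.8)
The plan is to simulate the predicates $\Root$ and $\Leaf$ by means of the label alphabet, using the standard trick of extending $\Sigma$ so that the relevant structural information is encoded directly into node labels. Concretely, given two Boolean $\mDatalog(\tauu^{\Root,\Leaf})$-queries $Q_1, Q_2$ over alphabet $\Sigma$, I would introduce a new alphabet $\Sigma' \deff \Sigma \times \set{r, \bar r} \times \set{l, \bar l}$ (or a slight variant thereof), where the second component flags whether a node is intended to be the root and the third whether it is intended to be a leaf. For a $\Sigma$-labeled tree $T$, let $\widehat T$ be the corresponding $\Sigma'$-labeled tree obtained by relabeling the actual root with an $r$-flag and all other nodes with $\bar r$, and every actual leaf with an $l$-flag and every internal node with $\bar l$. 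Then $\Root(x)$ is equivalent, on trees of the form $\widehat T$, to ``$x$ carries an $r$-flag'', and similarly for $\Leaf$; so each $Q_i$ can be rewritten in polynomial time into a $\mDatalog(\tauuAlph{\Sigma'})$-query $\widehat{Q_i}$ by replacing every atom $\Root(x)$ by a disjunction (realised via auxiliary IDB rules) over the letters with an $r$-flag, and likewise for $\Leaf(x)$.

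The issue is that an arbitrary $\Sigma'$-labeled tree $T'$ need not be of the form $\widehat T$: it may have zero or several $r$-flagged nodes, $r$-flags at non-root positions, $l$-flags at internal nodes, missing $l$-flags at leaves, and so on. To handle this I would build a ``consistency check'' into the queries. The cleanest route is: add to the program of $\widehat{Q_1}$ rules deriving the query predicate (i.e.\ forcing acceptance) whenever the flagging is \emph{inconsistent} — e.g.\ whenever some child of a node carries an $r$-flag, whenever a node has no child yet carries an $\bar l$-flag, whenever a node has a child yet carries an $l$-flag, etc.; and symmetrically add to the program of $\widehat{Q_2}$ rules that make $\widehat{Q_2}$ reject (never fire its query predicate) on such trees — wait, that direction is the problem, since we cannot make a monadic datalog query \emph{fail} to fire. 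So instead the asymmetry has to be exploited the other way: since we want $Q_1 \not\subseteq Q_2$ to be preserved and reflected, and a counterexample tree is one with $\widehat{Q_1}(T')=\Yes$ and $\widehat{Q_2}(T')=\No$, I make $\widehat{Q_1}$ return $\Yes$ on \emph{every} inconsistently-flagged tree (easy: more rules for the query predicate), so that such trees cannot serve as counterexamples unless $\widehat{Q_2}$ also rejects them — but then I must also ensure $\widehat{Q_2}$ returns $\Yes$ on inconsistently-flagged trees. The latter is again ``detecting a bad pattern and firing a predicate'', which \emph{is} expressible in monadic datalog: all the inconsistency conditions above are existential patterns of bounded size (presence of a forbidden local configuration, or a node that is simultaneously flagged as a leaf and has a child). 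The one mild subtlety is ``there is no $r$-flagged node at all'' and ``there are two $r$-flagged nodes'': the latter is a bounded existential pattern; the former is not directly existential, but we can sidestep it by instead checking ``the root is not $r$-flagged'' (a bounded pattern using the original $\Root$? no — we have no $\Root$ now).

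This last point is the main obstacle, and the fix is to not insist that $T'$ be exactly $\widehat T$ but only that it be ``good enough''. I would require of a counterexample tree merely: (i) the root is $r$-flagged and no other node is $r$-flagged (bounded patterns, both checkable), and (ii) a node is $l$-flagged iff it has no child (bounded patterns). To force the root to be $r$-flagged without a $\Root$ predicate, observe that in $\mDatalog(\tauu)$ one \emph{can} still recognise the root as the unique node with no $\Child$-parent: add a rule deriving an IDB $\IDB{NonRoot}(y) \leftarrow \Child(x,y)$, and then $\IDB{IsRoot}(x) \leftarrow$ ``$x$ is not $\IDB{NonRoot}$'' — but monadic datalog has no negation, so this too fails. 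The genuine resolution, and the one I would carry out, is to push the problem into the Boolean-query setting where \emph{both} queries may freely use bounded existential ``inconsistency detectors'', and observe that among the four conditions only root-uniqueness-and-existence is not plainly existential; for that single condition we exploit that any $\Sigma'$-tree with at least one node has a root, so ``no $r$-flag exists'' is equivalent to ``every node is $\bar r$-flagged'', and we arrange the reduction so that the \emph{original} $Q_1$ is modified to additionally fire on any tree containing a node flagged $\bar r$ \emph{whose parent is also $\bar r$ and which is itself not reachable as described} — concretely, it is cleanest to instead encode $\Root$ via a brand-new label that is forced to occur exactly once by a simple gadget: attach, above the real root, a fresh distinguished node; then ``root'' becomes ``child of the distinguished node'', which \emph{is} an existential pattern, and $\Leaf$ is handled as above. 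I would spell out this gadget construction: define $T \mapsto T^+$ by adding a new root labeled with a fresh symbol $\#$, rewrite $\Root(x)$ as $\exists y\,(\Label_{\#}(y) \wedge \Child(y,x))$ and $\Leaf(x)$ as ``$x$ has no child and $x \neq$ the $\#$-node'', add to both programs bounded-pattern rules forcing the query predicate whenever the $\#$-symbol occurs more than once or at a non-root or when $\Leaf$-flagging is inconsistent, verify that $Q_1 \not\subseteq Q_2$ on $\tauu^{\Root,\Leaf}$-trees iff the rewritten queries satisfy $\not\subseteq$ on $\tauu$-trees, and check that every rewriting step is polynomial-time — which it visibly is, since each is a local syntactic substitution adding $O(|\Sigma|)$ rules of bounded body size.
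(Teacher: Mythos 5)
Your overall plan coincides with the paper's: extend the alphabet so that node labels carry root/leaf flags, recover $\Root$ and $\Leaf$ as intensional predicates from the labels, and make the second query accept every ``inconsistently flagged'' tree so that such trees cannot witness non-containment. You have also correctly located the crux: since monadic datalog has no negation, each consistency condition must either be a positive existential pattern (detectable and propagatable to the root inside $\tilde{Q}_2$) or be checkable at the root itself (which a Boolean query can do, being evaluated there); your $\#$-gadget is a workable, if not strictly necessary, way to turn ``$x$ is the root'' into a positive pattern.

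What the proposal never closes, however, is the $\Leaf$ side, and that gap is fatal. Over-flagging (``an $l$-flagged node with a child'') is a positive pattern, but the converse condition ``every childless node carries an $l$-flag'' is not, and neither is its violation ``some childless node is unflagged''; your text asserts that both directions of condition (ii) are ``bounded patterns,'' and your final gadget rewrites $\Leaf(x)$ as ``$x$ has no child,'' which is precisely the negated existential that $\mDatalog(\tauu)$ cannot express. This is a genuine obstruction, not a presentational one: by monotonicity of $\T_{\PP}$ under homomorphisms, if $A$ is the two-node path with both nodes unflagged and $B$ is the three-node path whose unique childless node is $l$-flagged and whose other nodes are unflagged, then the root-preserving embedding of $A$ into $B$ forces every Boolean $\mDatalog(\tauuAlph{\Sigma'})$-query that accepts $A$ to accept $B$, so no such query can detect under-flagging (a symmetric pair shows the condition cannot be enforced from the $\tilde{Q}_1$ side either). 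Without it the backward direction of the reduction fails: take $Q_1$ to accept every tree and $Q_2$ to accept iff some node reachable from the root via $\Child$ satisfies $\Leaf$, so that $\AF{Q_2}(T)=\Yes$ on every finite tree and $Q_1\subseteq Q_2$; a relabelled tree carrying no $l$-flags at all is accepted by $\tilde{Q}_1$, triggers none of your detectors, and is rejected by $\tilde{Q}_2$, whence $\tilde{Q}_1\not\subseteq\tilde{Q}_2$. Be aware that the paper's own write-up installs only the two over-flagging detectors and then asserts that every undetected tree equals $\tilde{T}$, which is the same unestablished step, so you will not find the missing argument there; closing it requires a new idea (for instance, confining attention to a positively certified ``relevant'' part of the tree, in the spirit of the paper's TPCT construction) rather than additional bounded-pattern rules.
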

\begin{proof}
 Let $\Sigma,Q_1,Q_2$ be an input for the QCP for $\mDatalog(\tauu^{\Root,\Leaf})$ on unordered trees.
 Our goal is to construct, within polynomial time, an alphabet $\tilde{\Sigma}$ and two Boolean
 $\mDatalog(\tauuAlph{\tilde{\Sigma}})$-queries $\tilde{Q}_1, \tilde{Q}_2$, such that 
 $Q_1\subseteq Q_2$ iff $\tilde{Q}_1\subseteq \tilde{Q}_2$.

 We choose $\tilde{\Sigma}\deff\Sigma\times 2^{\set{\Root,\Leaf}}$.
 With every $\Sigma$-labeled tree $T$ we associate the $\tilde{\Sigma}$-labeled tree $\tilde{T}$ that is
 obtained from $T$ by replacing the label of each node $\alpha\in\Sigma$ with the label $(\alpha,I)$ where
 $I\subseteq\set{\Root,\Leaf}$ is given as follows:
 \begin{eqnarray*}
   \Root \in I & \iff & \text{$v$ is the root of $T$},
 \\
   \Leaf \in I & \iff & \text{$v$ is a leaf of $T$}.
 \end{eqnarray*}
 Let $\tilde{\PP}_{\textit{labels}}$ be the $\mDatalog(\tauuAlph{\tilde{\Sigma}})$-program consisting of
 the rules
 \begin{align*}
   \Label_\alpha(x) & \leftarrow \Label_{(\alpha,I)}(x)
 \\
   \Root(x) &\leftarrow \Label_{(\alpha,I')}(x) 
 \\
   \Leaf(x)&\leftarrow \Label_{(\alpha,I'')}(x)
 \end{align*}
 for all $\alpha\in\Sigma$ and all $I,I',I''\subseteq\set{\Root,\Leaf}$
 with $\Root\in I'$ and $\Leaf\in I''$.

 Let $\tilde{\PP}_{\textit{incons}}$ be the $\mDatalog(\tauuAlph{\tilde{\Sigma}})$-program 
 consisting of the rules of $\tilde{\PP}_{\textit{labels}}$, along with the following rules:
 \begin{align*}
   P_{\textit{incons}}(x) & \leftarrow \Root(x), \Child(y,x)
 \\
   P_{\textit{incons}}(x) & \leftarrow \Leaf(x), \Child(x,y)
 \\
   P_{\textit{incons}}(x) & \leftarrow \Child(x,y), P_{\textit{incons}}(y).
 \end{align*}
 The Boolean query $\tilde{Q}_{\textit{incons}}=(\tilde{\PP}_{\textit{incons}},P_{\textit{incons}})$ describes
 all $\tilde{\Sigma}$-labeled trees that are \emph{inconsistent} in the sense that for any
 $\tilde{\Sigma}$-labeled tree $T'$ the following is true:
 \[
   \AF{\tilde{Q}_{\textit{incons}}}(T')=\Yes \ \iff \ 
   \text{there is no $\Sigma$-labeled tree $T$ with $T'=\tilde{T}$}.
 \]
 Now, for the given $\mDatalog(\taubSigma)$-queries $Q_1=(\PP_1,P_1)$ and $Q_2=(\PP_2,P_2)$, we
 choose the $\mDatalog(\taubAlph{\tilde{\Sigma}})$-queries 
 $\tilde{Q}_1=(\tilde{\PP}_1,P_1)$ and $\tilde{Q}_2=(\tilde{\PP}_2,P_\textit{acc})$ as follows:
 \begin{align*}
    \tilde{\PP}_1 \deff \ \ & \tilde{\PP}_{\textit{labels}} \ \cup \ \PP_1,
 \\
    \tilde{\PP}_2 \deff \ \ & \tilde{\PP}_{\textit{incons}} \cup \ \PP_2 \ \cup \
    \set{\ P_{\textit{acc}}(x)\leftarrow P_{\textit{incons}}(x), \ \
           P_{\textit{acc}}(x)\leftarrow P_{2}(x)  \ }.
 \end{align*}
 We claim that $Q_1\not\subseteq Q_2 \iff \tilde{Q}_1\not\subseteq \tilde{Q}_2$.

 For the direction ``$\Longrightarrow$'' let $T$ be a $\Sigma$-labeled tree with 
 $\AF{Q_1}(T)=\Yes$ and $\AF{Q_2}(T)=\No$. Then, clearly, also $\AF{\tilde{Q}_1}(\tilde{T})=\Yes$ and
 $\AF{\tilde{Q}_2}(\tilde{T})=\No$. Thus, $\tilde{Q}_1\not\subseteq \tilde{Q}_2$.

 For the direction ``$\Longleftarrow$'' let $T'$ be a $\tilde{\Sigma}$-labeled tree with
 $\AF{\tilde{Q}_1}(T')=\Yes$ and $\AF{\tilde{Q}_2}(T')=\No$.
 The latter implies that $T'$ is \emph{not} inconistent. Hence, there exists a $\Sigma$-labeled tree $T$
 such that $T'=\tilde{T}$. For this tree we know that $\AF{\tilde{Q}_1}(\tilde{T})=\Yes$ and 
 $\AF{\tilde{Q}_2}(\tilde{T})=\No$. Hence, also
 $\AF{Q_1}(T)=\Yes$ and $\AF{Q_2}(T)=\No$. Thus, $Q_1\not\subseteq Q_2$.
 This completes the proof of Lemma~\ref{Lemma:GettingRidOfRootLeaf}.
\end{proof}

\clearpage

\section{$\EXPTIME$-Hardness: Proof of Corollary~\ref{Cor:orderedExptimeHard}}\label{appendix:ProofOfCorUnrderedExptimeHard}

The aim of this appendix is to prove the following:

\setcounter{restoreAppTheorem}{\value{theorem}}
\setcounter{theorem}{\value{Counter_Cor:orderedExptimeHard}}

\begin{Corollary} \textbf{(restated)}
 The QCP for Boolean $\mDatalog(\tau_o)$-queries on ordered trees is $\EXPTIME$-hard.
\end{Corollary}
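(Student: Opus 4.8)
The plan is to give a polynomial-time reduction from the QCP for Boolean $\mDatalog(\tauu)$ on unordered trees, which is $\EXPTIME$-hard by Theorem~\ref{Thm:unorderedExptimeHard}. The only genuine task is to eliminate the $\Child$-predicate, which is available in $\tauu$ but not in $\tauo$, in favour of the predicates $\Fc$ and $\Ns$; for this I would invoke the construction of Gottlob and Koch~\cite{GottlobKoch} (cf.\ also the proof of Theorem~\ref{Thm:orderedInExptime}).

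First I would state the tool precisely: every $\mDatalog(\tauGK^{\Child})$-query $Q$ can be rewritten, in time polynomial in $\size{Q}$, into an \emph{equivalent} $\mDatalog(\tauGK)$-query, where ``equivalent'' means that the two queries return the same answer on every ordered tree. Inspecting this construction, one sees that if the input query uses only $\Child$ and the label predicates $\Label_\alpha$ --- as every $\mDatalog(\tauuSigma)$-query does --- then the resulting query uses only $\Fc$, $\Ns$ and the $\Label_\alpha$, i.e.\ it is a $\mDatalog(\tauoSigma)$-query. The underlying idea is the familiar one: on a tree, $\Child(y,x)$ holds iff $x$ is the first child of $y$, or $x$ is reached from the first child of $y$ by a non-empty walk along $\Ns$-edges; such a walk is captured by a fresh unary intensional predicate together with one $\Fc$-rule and one $\Ns$-closure rule, and symmetrically for occurrences of $\Child(x,y)$ in which $x$ plays the role of the parent. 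After first bringing rule bodies into the appropriate shape (again as in~\cite{GottlobKoch}), this removes every occurrence of $\Child$.

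The reduction is then immediate. Given an input $\Sigma, Q_1, Q_2$ for the QCP for Boolean $\mDatalog(\tauuSigma)$, apply the rewriting to obtain Boolean $\mDatalog(\tauoSigma)$-queries $Q_1', Q_2'$, and output $\Sigma, Q_1', Q_2'$. For correctness, observe that every ordered $\Sigma$-labeled tree $T'$ has an underlying unordered $\Sigma$-labeled tree $u(T')$ with the same domain, the same labelling, and the same child relation; hence, by the equivalence guaranteed above, $\AF{Q_i'}(T') = \AF{Q_i}(u(T'))$ for $i\in\set{1,2}$. Since every unordered $\Sigma$-labeled tree equals $u(T')$ for some ordered $\Sigma$-labeled tree $T'$, it follows that $Q_1\subseteq Q_2$ holds on unordered trees if, and only if, $Q_1'\subseteq Q_2'$ holds on ordered trees. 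Together with the polynomial time bound and Theorem~\ref{Thm:unorderedExptimeHard}, this yields the Corollary.

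The one place that needs real care --- and the only one --- is the $\Child$-elimination construction: one has to verify that it preserves the meaning of the query on trees, runs in polynomial time, and, crucially for staying inside $\tauo$, introduces no predicates beyond $\Fc$ and $\Ns$ (in particular neither $\Root$ nor $\Leaf$ nor $\Ls$). The remaining ingredients --- surjectivity of $T'\mapsto u(T')$ and schema bookkeeping --- are routine.
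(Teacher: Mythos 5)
Your proposal is correct and follows essentially the same route as the paper: a reduction from the unordered QCP of Theorem~\ref{Thm:unorderedExptimeHard}, using the Gottlob--Koch rewriting (the paper's Corollary~\ref{remark_1}, ``implicit in \cite{GottlobKoch}'') to eliminate $\Child$ in favour of $\Fc$ and $\Ns$ without introducing $\Root$, $\Leaf$, or $\Ls$, followed by the observation that $\tau_u$-queries agree on an ordered tree and its unordered version and that every unordered tree arises from some ordered one. The point you flag as needing care is exactly the point the paper delegates to its Corollary~\ref{remark_1}.
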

\setcounter{theorem}{\value{restoreAppTheorem}}

The proof is via a polynomial-time reduction from the QCP for Boolean
\linebreak[4]
$\mDatalog(\tau_u)$-queries over unordered trees which, according to
Theorem~\ref{Thm:unorderedExptimeHard}, is
$\EXPTIME$-hard.

For establishing the reduction, we will rewrite monadic datalog
programs of schema $\tau_u$ into suitable programs of schema $\tau_o$
(i.e., we will rewrite the $\Child$ relation by means of the
relations $\Fc$ and $\Ns$).
For doing this, we can use a result by Gottlob and Koch
\cite{GottlobKoch} which transforms monadic datalog programs into a
certain normal form called
\emph{Tree-Marking Normal Form} (TMNF).
We will use this normal form also later on, in
Appendix~\ref{appendix:orderedInExptime} and
Appendix~\ref{appendix:GettingRidOfDesc}.

\begin{Def} Let $\tau$ be a schema that consists of relation symbols
  of arity at most 2.
A monadic datalog program $\PP$ of schema $\tau$ is in TMNF if each
rule of $\PP$ is of one of the following forms:\footnote{Gottlob
  and Koch \cite{GottlobKoch} also allow rules of the form
  $X(x)\leftarrow Y(x)$. Note that such a rule is equivalent to the
  rule $X(x)\leftarrow Y(x),Y(x)$.}
\begin{multicols}{2} \begin{mei} 
	\item \label{riii}$X(x)\leftarrow R(x,y),Y(y)$
	\item \label{rii} $X(x)\leftarrow R(y,x),Y(y)$
	\item \label{riv} $X(x)\leftarrow Y(x), Z(x)$
\end{mei}
\end{multicols}
\noindent
where $R$ is a binary predicate from $\tau$, $X\in\idb(\PP)$, and
the unary predicates $Y$ and $Z$ are either intensional or belong to $\tau$. 
\end{Def}

\begin{Theorem}[{Gottlob and Koch 
  \cite[Theorem~5.2]{GottlobKoch}}]\label{Theorem:GK_TMNF}
\ \\
For each monadic datalog program $\PP$ of schema $\tauGK^{\Child}$,
there is an equivalent program in TMNF of schema $\tauGK$, which can be computed in
time $O(\size{\PP})$.
\end{Theorem}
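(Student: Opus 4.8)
The plan is to prove the statement in two stages, both introducing only fresh auxiliary intensional predicates (so that the output is equivalent to $\PP$ with respect to \emph{every} predicate of $\PP$, not just a designated one), and both running in linear time, which gives the $O(\size{\PP})$ bound. Stage~(I) is the combinatorial core: transform $\PP$ into an equivalent program in which every rule body has at most two atoms, at least one unary, while still \emph{allowing} $\Child$ among the binary predicates. Stage~(II) is a purely local rewriting that removes each remaining $\Child$-atom in favour of $\Fc$ and $\Ns$. Since $\PP$ is monadic, rule heads are already of the required form $X(x)$, and safety guarantees that the head variable occurs in the body, so only bodies need work.

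For stage~(I) I would first bring each rule $X(x_0)\leftarrow\beta$ into a shape where the \emph{query graph} of $\beta$ (vertices: the variables of $\beta$; edges: its binary atoms) is a forest. Two facts about the vocabulary $\{\Fc,\Ns,\Child\}$ over trees drive this: the relations $\Fc$, $\Ns$ and their inverses are partial \emph{functions}, so two variables joined to a common variable by such atoms may be unified; and any binary atom that is a logical consequence of the remaining atoms over trees (for instance $\Child(u,w)$ in the presence of $\Fc(u,v)$ together with an $\Ns$-chain from $v$ to $w$) may be deleted, while a rule whose body thereby becomes unsatisfiable is simply discarded. A short analysis shows that after this clean-up no cycle survives. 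If the resulting forest is \emph{disconnected} from $x_0$, then for each extra component $\gamma$ I introduce a fresh predicate $G_\gamma$, derived at some node witnessing $\exists\bar z\,\gamma$ and propagated everywhere by the rules $G_\gamma(x)\leftarrow\Child(x,y),G_\gamma(y)$ and $G_\gamma(x)\leftarrow\Child(y,x),G_\gamma(y)$ (the auxiliary rules deriving $G_\gamma$ at such a node are obtained by recursively applying the whole construction to a dummy rule with head $G_\gamma$), and replace the original rule by $X(x_0)\leftarrow\beta_0,\ G_{\gamma}(x_0),\dots$ where $\beta_0$ is the component containing $x_0$. Finally, for a \emph{connected} rule whose query graph is a tree, I root that tree at $x_0$ and evaluate it by a single post-order traversal: at each variable $v$ I maintain a marking predicate $M_v$ meaning ``$v$ can be matched to this node consistently with the whole subpattern below $v$''. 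The unary atoms at $v$ (including intensional unary atoms of $\PP$) are absorbed by a short chain of rules $M'(x)\leftarrow M''(x),A(x)$; for each child $v'$ of $v$ in the query tree, joined to $v$ by a binary atom $R$, I add $N(x)\leftarrow R(x,y),M_{v'}(y)$ or $N(x)\leftarrow R(y,x),M_{v'}(y)$, and conjoin all such $N$'s with $v$'s unary marks to obtain $M_v$; the original rule is replaced by $X(x)\leftarrow M_{x_0}(x),M_{x_0}(x)$. Each rule with body $\beta$ spawns $O(|\beta|)$ new rules and predicates, so stage~(I) is linear overall.

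For stage~(II) I would rewrite the only two shapes of $\Child$-rules that can survive stage~(I). A rule $X(x)\leftarrow\Child(y,x),Y(y)$ is replaced, using a fresh predicate $P$, by $P(x)\leftarrow\Fc(y,x),Y(y)$, $P(x)\leftarrow\Ns(y,x),P(y)$, and $X(x)\leftarrow P(x),P(x)$ --- that is, mark the first child of every node satisfying $Y$ and propagate the mark rightwards along $\Ns$, obtaining exactly the nodes whose parent satisfies $Y$. Symmetrically, $X(x)\leftarrow\Child(x,y),Y(y)$ is replaced, using a fresh predicate $S$, by $S(x)\leftarrow Y(x),Y(x)$, $S(x)\leftarrow\Ns(x,y),S(y)$, and $X(x)\leftarrow\Fc(x,z),S(z)$, since a node has a child satisfying $Y$ iff, starting from its first child, one reaches some node satisfying $Y$ along $\Ns$. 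All of these rules are in TMNF over $\tauGK$, only a constant number replace each $\Child$-rule, and $P$, $S$ may be shared among all rules using the same $Y$, so stage~(II) is also linear.

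The step I expect to be the main obstacle is the end of stage~(I): one must be confident that the query graph of every satisfiable rule body can genuinely be made acyclic --- which rests on the case analysis about the functionality of $\Fc,\Ns$ and about which atoms are forced over trees --- and one must carry out the tree evaluation by a genuine traversal that \emph{re-uses} the marking predicates $M_v$, because a naive ``enumerate all paths through the body'' approach on a branching body would blow the program up exponentially instead of linearly. Once the body is a tree and the traversal is organised this way, correctness is a routine induction on the query tree (a node receives $M_v$ exactly when the subpattern rooted at $v$ is matchable there, with $v$ at that node), and the linear size bound follows by charging each newly created rule to an atom of the original body.
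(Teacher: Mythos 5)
Your reconstruction is correct and follows essentially the same route as the Gottlob--Koch argument that the paper imports here without proof (and as the analogous machinery the paper itself deploys later: cycle elimination by unification/redirection as in Lemma~\ref{Lemma:GKS} restricted to the $\Desc$-free axes, decomposition of acyclic connected bodies via marker predicates as in Lemma~\ref{remark_2}, and local axis elimination by an $\Ns$-recursion exactly as in Fact~\ref{Fact:TMNF-Desc2Child}). The only point to tighten is the literal $O(\size{\PP})$ bound for the acyclicity clean-up: a naive one-edge-at-a-time redirection of a $\Child$-atom along an $\Ns$-chain is quadratic per rule, so you need the batched single-pass redirection you allude to, but this does not affect correctness or the paper's use of the theorem.
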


\noindent
A detailed analysis shows that the proof given in
\cite{GottlobKoch} in fact also proves the following:

\begin{Corollary}[implicit in \cite{GottlobKoch}]\label{remark_1}
   For each monadic datalog program $\PP$ of schema $\tau_o^{\Child}$,
   there is an equivalent program in TMNF of schema $\tau_o$, which can be computed in
   time $O(\size{\PP})$.
\end{Corollary}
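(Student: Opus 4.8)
The plan is to reuse the construction underlying Theorem~\ref{Theorem:GK_TMNF} essentially verbatim, and merely to trace through it to confirm that the predicates $\Root$, $\Leaf$, $\Ls$ are never needed when they do not occur in the input program. The transformation of \cite{GottlobKoch} splits into two phases. In the first phase each rule of $\PP$ is decomposed, with the help of fresh monadic intensional predicates, into rules whose bodies consist of at most two atoms, at least one of which is unary; this phase only rearranges atoms already present in $\PP$, so it produces a program of schema $\tauo^{\Child}$ whenever $\PP$ has schema $\tauo^{\Child}$. In the second phase every occurrence of the binary predicate $\Child$ is eliminated in favour of $\Fc$ and $\Ns$. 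Hence the only place where the extra predicates could enter is the $\Child$-elimination, and the task reduces to giving such an elimination that stays inside schema $\tauo$.

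After the first phase a $\Child$-atom can occur only in rules of the shapes $X(x)\leftarrow\Child(x,y),Y(y)$ or $X(x)\leftarrow\Child(y,x),Y(y)$. A rule of the first shape is replaced, using a fresh intensional predicate $\widehat{Y}$, by
\begin{align*}
 \widehat{Y}(z) &\leftarrow Y(z),Y(z) \\
 \widehat{Y}(z) &\leftarrow \Ns(z,z'),\widehat{Y}(z') \\
 X(x) &\leftarrow \Fc(x,z),\widehat{Y}(z)
\end{align*}
so that $\widehat{Y}$ marks exactly those nodes $z$ for which $z$ or some right sibling of $z$ carries $Y$; consequently $X$ is derived at $x$ precisely when $x$ has a $Y$-child. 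A rule of the second shape is replaced, using a fresh intensional predicate $\widetilde{Y}$, by
\begin{align*}
 \widetilde{Y}(x) &\leftarrow \Fc(y,x),Y(y) \\
 \widetilde{Y}(x) &\leftarrow \Ns(x',x),\widetilde{Y}(x') \\
 X(x) &\leftarrow \widetilde{Y}(x),\widetilde{Y}(x)
\end{align*}
so that $\widetilde{Y}(x)$ holds iff the parent of $x$ carries $Y$. All six rules are in TMNF and mention only the binary predicates $\Fc$ and $\Ns$, and they leave the extensions of the original intensional predicates unchanged. Carrying out these replacements for every $\Child$-atom of the program delivered by the first phase yields an equivalent TMNF program of schema $\tauo$; and since each of the linearly many rules of that intermediate program is expanded into a constant number of rules over a constant number of fresh predicates, the whole transformation runs in time $O(\size{\PP})$.

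The one point requiring care is the first phase: one must verify that the decomposition of arbitrary rule bodies into two-atom TMNF bodies in \cite{GottlobKoch} --- in particular the treatment of rules whose body already mixes $\Child$ with other atoms --- introduces no occurrence of $\Root$, $\Leaf$ or $\Ls$, so that the $\Child$-elimination above actually finishes the job. As that decomposition is purely syntactic and manipulates only atoms present in $\PP$ together with fresh intensional predicates, this check is routine, and there are no genuine corner cases (a node with no children simply fails the body $\Fc(x,z),\widehat{Y}(z)$, which is the intended behaviour). The corollary therefore follows by re-examining the proof of Theorem~\ref{Theorem:GK_TMNF} with these observations in hand.
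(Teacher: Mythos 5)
Your proposal is correct and follows the same route the paper intends: the paper offers no explicit argument for this corollary beyond the remark that a detailed analysis of the proof of Theorem~\ref{Theorem:GK_TMNF} in \cite{GottlobKoch} yields it, and your simulation of $\Child(x,y)$ via one $\Fc$-step followed by a chain of $\Ns$-steps (in both directions) is precisely the $\Root$/$\Leaf$/$\Ls$-free $\Child$-elimination that this analysis amounts to. Both of your rule groups are in TMNF, preserve the least fixed point on the original predicates, and add only constantly many rules and fresh predicates per eliminated atom, so the $O(\size{\PP})$ bound stands.
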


\noindent
We are now ready for the proof of Corollary~\ref{Cor:orderedExptimeHard}.

\begin{proofof}{Corollary~\ref{Cor:orderedExptimeHard}}
From Theorem~\ref{Thm:unorderedExptimeHard} we already know the
$\EXPTIME$-hardness of the
QCP for Boolean $\mDatalog(\tau_u)$-queries on unordered trees.
\\
Thus, it suffices to give a polynomial-time reduction from this
problem to
the QCP for Boolean $\mDatalog(\tau_o)$-queries on ordered trees.

For this, note that $\tau_u\subseteq \tau_o^{\Child}$.
Thus, upon input of two Boolean $\mDatalog(\tau_u)$-queries $Q_1$ and
$Q_2$, we can apply Corollary~\ref{remark_1} to compute, in linear time,
two Boolean $\mDatalog(\tau_o)$-queries $Q'_1$ and $Q'_2$ such that
$\AF{Q'_i}(T)=\AF{Q_i}(T)$ is true for all \emph{ordered} trees $T$ and each $i\in\set{1,2}$.
Furthermore, since $Q_i$ is of schema $\tau_u$, we have that
$\AF{Q_i}(T)=\AF{Q_i}(\tilde{T})$ is true for all
ordered trees $T$ and their unordered version $\tilde{T}$.
Thus, we have $Q_1\subseteq Q_2$ iff $Q'_1\subseteq Q'_2$. 
I.e., we have established a polynomial-time reduction from the QCP for
unordered trees to the QCP for ordered trees. This completes the proof
of Corollary~\ref{Cor:orderedExptimeHard}.
\end{proofof}

\clearpage
\section{$\EXPTIME$-Membership: Proof of Theorem~\ref{Thm:orderedInExptime}}\label{appendix:orderedInExptime}

The aim of this appendix is to prove the following Theorem:

\setcounter{restoreAppTheorem}{\value{theorem}}
\setcounter{theorem}{\value{Counter_Thm:orderedInExptime}}
\begin{Theorem}  \textbf{(restated)}
 The QCP for unary
 $\mDatalog(\tauGK^{\Child})$-queries on ordered trees belongs to $\EXPTIME$.
\end{Theorem}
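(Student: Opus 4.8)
The plan is to follow the three-step automata-theoretic scheme sketched for Theorem~\ref{Thm:orderedInExptime} in the main part, taking care that \emph{every} step stays within a \emph{single} exponential; the reason this is possible here is precisely that the schema $\tauGK^{\Child}$ contains no \Desc-predicate, so the \Desc-elimination blow-up that forces the $\TwoEXPTIME$ bound of Theorem~\ref{Thm:usingDesc} never occurs. For Step~\eqref{item:Step1} I would first turn the two \emph{unary} $\mDatalog(\tauGK^{\Child})$-queries $Q_1,Q_2$ into \emph{Boolean} queries on \emph{binary} trees: enlarge $\Sigma$ by a fresh marker bit that singles out one distinguished node, encode unranked trees as binary trees via the standard \Fc/\Ns-encoding, and add to each query program a rule $\Accept(x)\leftarrow \textit{marker}(x), P_i(x)$ together with rules propagating $\Accept$ to the root, so that $\Bool{\cdot}$-evaluation at the root mimics the original unary query at the marked node; consistency of the marking (exactly one marked node, labels propagated correctly) is enforced by a small auxiliary program that is folded into $Q_2$ in the same way as in the proof of Lemma~\ref{Lemma:GettingRidOfRootLeaf}. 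All of this costs polynomial time and yields Boolean queries $Q'_1,Q'_2$ with $Q_1\subseteq Q_2 \iff Q'_1\subseteq Q'_2$. Still inside Step~\eqref{item:Step1}, I would then apply Corollary~\ref{remark_1} and Theorem~\ref{Theorem:GK_TMNF} to eliminate the \Child-predicate and bring both Boolean queries into TMNF in linear time; the crucial gain is that for TMNF-queries the rule-body parameters $b,b'$ from the proof sketch become \emph{constants}.

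For Step~\eqref{item:Step2} I would build the two automata separately. The easy direction is $\Aut{A}_2^{\No}$: translate the Boolean TMNF-query $Q'_2$ into the $\forall^*\exists^*$-$\MSO$-sentence $\varphi_{Q'_2}$ recalled in the proof sketch, negate it, and apply the standard $\MSO$-to-tree-automaton construction; since $b'$ is now a constant, $\Aut{A}_2^{\No}$ has only $2^{\poly(\size{Q'_2})}$ states and is constructible in single-exponential time. The hard direction is $\Aut{A}_1^{\Yes}$, for which the naive subset construction would cost a second exponential. Instead I would construct, \emph{in polynomial time}, a two-way alternating tree automaton $\hat{\Aut{A}}_1^{\Yes}$ that accepts exactly those binary trees $T$ with $Q'_1(T)=\Yes$: the 2ATA, started at the root with the query predicate $\Accept$, recursively verifies that a given unary intensional predicate $X$ belongs to the least fixpoint $\T_{\PP}^\omega(\atoms(T))$ at the current node by existentially guessing a TMNF-rule deriving $X$ there and universally spawning copies to verify the (at most two, by TMNF) body atoms, using two-way moves to follow \Fc- and \Ns-edges in \emph{both} directions (needed for rules of the forms \eqref{riii}, \eqref{rii}, and \eqref{riv}) and reading off extensional atoms directly from the node labels. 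Finally I would invoke the known single-exponential simulation of 2ATA by nondeterministic bottom-up tree automata (\cite{CGKV,Vardi98,Maneth10TypeChecking}) to convert $\hat{\Aut{A}}_1^{\Yes}$ into an ordinary automaton $\Aut{A}_1^{\Yes}$ of single-exponential size.

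For Step~\eqref{item:Step3} I would form the product automaton $\Aut{B}$ of $\Aut{A}_1^{\Yes}$ and $\Aut{A}_2^{\No}$ — polynomial in their sizes, hence single-exponential overall — and decide $L(\Aut{B})=\emptyset$ in time polynomial in $\size{\Aut{B}}$; since $L(\Aut{B})=\emptyset$ iff $Q_1\subseteq Q_2$, this gives the claimed $\EXPTIME$ bound. The main obstacle is the construction and correctness proof of $\hat{\Aut{A}}_1^{\Yes}$: one must show that the \emph{least}-fixpoint semantics of a monadic-datalog program in TMNF is captured by a \emph{polynomial}-size alternating automaton. This requires (i) phrasing ``node $v$ eventually receives intensional predicate $X$'' as an alternating reachability condition whose branching is bounded by the constant TMNF body size; (ii) using two-wayness to realise both the $R(x,y)$- and the $R(y,x)$-style TMNF rules over the functional relations \Fc and \Ns of the binary encoding; and, most delicately, (iii) choosing the acceptance condition so that only \emph{well-founded} derivations are accepted — i.e., ruling out spurious cyclic ``derivations'' so that the automaton certifies membership in $\T_{\PP}^\omega$ rather than in some larger, non-minimal fixpoint. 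Everything else in the argument is routine bookkeeping with the standard product and emptiness constructions.
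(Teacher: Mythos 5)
Your proposal is correct and follows essentially the same route as the paper's proof: reduce to Boolean TMNF-queries on binary trees, build $\Aut{A}_2^{\No}$ via the constant-body-size $\MSO$ translation, build $\Aut{A}_1^{\Yes}$ via a polynomial-size 2ATA simulating proof trees followed by the single-exponential 2ATA-to-NBTA conversion, and finish with product and emptiness. The well-foundedness issue you flag in point~(iii) is handled in the paper exactly as you anticipate, by requiring runs of the 2ATA to be \emph{finite} trees, so that accepting runs correspond precisely to datalog proof trees and hence to the least fixpoint.
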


\setcounter{theorem}{\value{restoreAppTheorem}}

We proceed as described in the proof sketch given in
Section~\ref{section:QCPofMonadicDatalog}.

\subsection{Step~\eqref{item:Step1}: From unary queries to Boolean queries}

Let $\Sigma$ be a finite alphabet, let $T$ be an ordered $\Sigma$-labeled
tree, and let $v$ be a node of $T$.
Considering the extended alphabet $\Sigma'\deff\Sigma\times\set{0,1}$,
we represent the tuple $(T,v)$ by an ordered $\Sigma'$-labeled tree
$T'_v$ as follows: $T'_v$ is obtained from $T$ by changing the node
labels, so that node $v$ receives label $(\alpha_v,1)$, and all
further nodes $u$ receive label $(\alpha_u,0)$, where $\alpha_v$ and
$\alpha_u$ denote the nodes' labels in $T$.

\begin{Lemma}\label{lemma:Unary2BooleanQueries}
Every unary $\mDatalog(\tauGKAlph{\Sigma}^{\Child})$-query $Q$
can be rewritten, in linear time, into a Boolean
$\mDatalog(\tauGKAlph{\Sigma'}^{\Child})$-query $\Bool{Q'}$
which satisfies the following:
\begin{mi}
 \item
   For every ordered $\Sigma$-labeled tree $T$ and every node $v$ of
   $T$ we have\\ $v\in\AF{Q}(T) \iff \AF{\Bool{Q'}}(T'_v)=\Yes$.
 \smallskip
 \item
   For every ordered $\Sigma'$-labeled tree $T'$ with
   $\AF{\Bool{Q'}}(T')=\Yes$, there are an ordered $\Sigma$-labeled tree
   $T$ and a node $v$ of $T$ such that $T'=T'_v$.
\end{mi}
\end{Lemma}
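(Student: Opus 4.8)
The plan is to prove Lemma~\ref{lemma:Unary2BooleanQueries} by explicitly constructing the Boolean query $\Bool{Q'}$ from the given unary query $Q=(\PP,P)$ through a syntactic rewriting of the program, and then verifying the two stated properties directly. First I would introduce, for the extended alphabet $\Sigma'=\Sigma\times\set{0,1}$, auxiliary unary predicates $\Label_\alpha$ for $\alpha\in\Sigma$ together with a fresh predicate $\textit{Marked}$, and add to $\PP$ the rules $\Label_\alpha(x)\leftarrow\Label_{(\alpha,0)}(x)$, $\Label_\alpha(x)\leftarrow\Label_{(\alpha,1)}(x)$ for all $\alpha\in\Sigma$, and $\textit{Marked}(x)\leftarrow\Label_{(\alpha,1)}(x)$ for all $\alpha\in\Sigma$. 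After this relabelling step, every rule of the original $\PP$ (which may mention $\Label_\alpha$ for $\alpha\in\Sigma$) behaves on $T'_v$ exactly as it did on $T$, so the relation computed for $P$ on $T'_v$ equals $\AF{Q}(T)$. Then I add a Boolean query predicate $\Accept$ with the rule $\Accept(x)\leftarrow P(x),\textit{Marked}(x)$, and finally a rule propagating $\Accept$ to the root, say $\textit{AcceptRoot}(x)\leftarrow\Accept(y),\Root(x)$ (using that $\Root\in\tauGK$); the Boolean query $\Bool{Q'}$ is then the one whose query predicate is $\textit{AcceptRoot}$, which evaluates to $\Yes$ on a tree iff its root is in the answer set, i.e.\ iff some node carries both $P$ and the mark. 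All of this is a fixed, linear-size addition to $\PP$, so the rewriting runs in linear time and $\Bool{Q'}$ is of schema $\tauGKAlph{\Sigma'}^{\Child}$ as required.

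Next I would verify the first bullet. Fix an ordered $\Sigma$-labeled tree $T$ and a node $v$. Since $T'_v$ has the same underlying tree structure (same $\Fc,\Ns,\Child,\Desc,\Root,\Leaf,\Ls$ relations) as $T$, and since on $T'_v$ the derived facts $\Label_\alpha(u)$ hold for exactly the nodes $u$ that are $\alpha$-labeled in $T$, a straightforward induction on the stages of the immediate consequence operator $\T_{\PP'}$ shows that for every intensional predicate $P'$ of the original program and every node $u$, $P'(u)\in\T_{\PP'}^\omega(\atoms(T'_v))$ iff $P'(u)\in\T_{\PP}^\omega(\atoms(T))$. In particular this holds for the query predicate $P$, so $P(u)$ is derived on $T'_v$ exactly for $u\in\AF{Q}(T)$. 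Moreover $\textit{Marked}(u)$ is derived on $T'_v$ iff $u=v$. Hence $\Accept$ is derivable on $T'_v$ iff $v\in\AF{Q}(T)$, and then $\textit{AcceptRoot}$ is derivable at the root iff the same condition holds; this is precisely $v\in\AF{Q}(T)\iff\AF{\Bool{Q'}}(T'_v)=\Yes$.

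For the second bullet, suppose $T'$ is an ordered $\Sigma'$-labeled tree with $\AF{\Bool{Q'}}(T')=\Yes$. Then $\textit{AcceptRoot}$, hence $\Accept$, hence $\textit{Marked}$, is derived at some node, so by construction at least one node of $T'$ carries a label in $\Sigma\times\set{1}$. If exactly one node does, we simply let $T$ be the $\Sigma$-labeled tree obtained by forgetting the second component of every label, and $v$ be that unique marked node; then $T'=T'_v$ by definition of $T'_v$. To handle the case of several marked nodes (or none, which cannot happen here but is worth excluding cleanly), I would either strengthen the acceptance rule to also check uniqueness of the mark, or — more simply — add to $\Bool{Q'}$'s program rules deriving an inconsistency predicate whenever two distinct nodes are both marked, and feed that predicate into $\textit{AcceptRoot}$ as well, so that $\Bool{Q'}$ only accepts trees with exactly one marked node. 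The main obstacle, such as it is, lies here: making the pre-image characterisation exact requires either this uniqueness check or a careful argument that a tree with several marks still equals $T'_v$ for some $(T,v)$ — it does not, so the uniqueness enforcement is the clean route. Everything else is routine bookkeeping about the immediate consequence operator and the fact that adding relabelling rules does not change the behaviour of the program on the original predicates.
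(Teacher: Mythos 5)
Your overall architecture (relabelling rules for $\Label_\alpha$, a mark predicate, simulating $\PP$ unchanged, testing $P$ at the marked node, and propagating acceptance to the root) is the same as the paper's, and your verification of the first bullet is fine. The genuine gap is in the uniqueness check for the second bullet, which you correctly identify as the crux but then resolve incorrectly. Your proposed fix --- derive an inconsistency predicate when two distinct nodes are marked and ``feed that predicate into $\textit{AcceptRoot}$ as well'' --- goes in the wrong direction: monadic datalog is positive and monotone, so adding a rule whose head is the acceptance predicate can only make \emph{more} trees accepted. A tree with two marked nodes would then satisfy $\Bool{Q'}$, and such a tree is not of the form $T'_v$ for any $(T,v)$, so the second bullet fails. (This ``accept the inconsistent trees'' trick is legitimate elsewhere in the paper, namely for the \emph{containing} query $\tilde Q_2$ in the proof of Lemma~\ref{Lemma:GettingRidOfRootLeaf}, where over-acceptance is harmless; here it is fatal.) Your alternative --- ``strengthen the acceptance rule to also check uniqueness'' --- is the right idea but is left without a construction, and it is not obvious how to do it: there is no disequality in the rule bodies, so a single rule cannot distinguish one marked node from two.

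The paper's solution to exactly this point is the one nontrivial part of its proof: it introduces predicates $C_0$ and $C_1$ and a family of rules that perform a bottom-up scan of the binary structure given by $\Fc$ and $\Ns$ (using $\Leaf$ and $\Ls$ to detect the base cases), so that $C_i(x)$ is derived iff the subtree below $x$ in that binary structure contains exactly $i$ nodes carrying the mark, for $i\in\set{0,1}$; a node with two or more marked descendants receives neither $C_0$ nor $C_1$. The acceptance rule then requires $\Root(x)$, $C_1(x)$, $P(y)$ and the mark at $y$ \emph{conjunctively}, so that trees with zero or at least two marked nodes are rejected simply because $C_1$ never reaches the root. You would need to supply this (or an equivalent positive, bottom-up counting construction) to close the gap.
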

\begin{proof}
 Let $Q=(\PP,P)$. We will construct $\Bool{Q'}$ as follows:
 \begin{mei}
   \item\label{item:i-Unary2Bool}
     $\Bool{Q'}$ will simulate the program $\PP$ of $Q$. 
   \item\label{item:ii-Unary2Bool}
     In parallel, $\Bool{Q'}$ checks that the input tree contains
     exactly one node whose label is of the form $(\alpha,1)$ for some
     $\alpha\in\Sigma$.
     We construct $\Bool{Q'}$ in such a way that this is true iff the
     input tree's root node receives the intensional predicate $C_1$.
   \item\label{item:iii-Unary2Bool}
     Finally, the root node receives the query predicate of
     $\Bool{Q'}$ iff it has the $C_1$-predicate \emph{and} 
     the query predicate $P$ of the query $Q$ contains
     a node of label $(\alpha,1)$, for some $\alpha\in \Sigma$.
 \end{mei}
 To this end, we let $\Bool{Q'}$ be specified by a monadic datalog
 program $\PP'$ and a query predicate $P'$ chosen as follows:

 Start with $\PP'\deff\emptyset$.
 For each letter $\alpha\in\Sigma$, we add to $\PP'$ the rules
 \begin{align*}
    \Label_\alpha(x) & \leftarrow \Label_{(\alpha,0)}(x) 
   & X_0(x)&\leftarrow \Label_{(\alpha,0)}(x)
\\
    \Label_\alpha(x) & \leftarrow \Label_{(\alpha,1)}(x) 
   & X_1(x)&\leftarrow \Label_{(\alpha,1)}(x)
 \end{align*}
 where $X_0$ and $X_1$ are unary relation symbols that do not occur in $\PP$.

 Next, add to $\PP'$ all rules of $\PP$. Note that this way, we ensure
 that $\PP'$ simulates $\PP$, and hence \eqref{item:i-Unary2Bool} is
 achieved.

 To achieve \eqref{item:ii-Unary2Bool}, we use two
 intensional predicates $C_0,C_1$. We choose rules that proceed
 the binary tree built by the $\Fc$ and $\Ns$ relations in a bottom-up
 manner and propagates, via the predicates $C_0$ and $C_1$, whether the 
 subtree rooted at the current node contains exactly $0$ or exactly $1$
 nodes that carry the predicate $X_1$. This is achieved by the
 following list of rules, which we add to $\PP'$:
 \begin{align*}
   C_0(x)&\leftarrow \Leaf(x), \Ls(x), X_0(x)
 \\
   C_1(x)&\leftarrow \Leaf(x), \Ls(x), X_1(x)
 \\[2ex] 
   C_0(x)&\leftarrow \Leaf(x), \Ns(x,y), X_0(x), C_0(y)
\\
   C_1(x)&\leftarrow \Leaf(x), \Ns(x,y), X_0(x), C_1(y)
\\
   C_1(x)&\leftarrow \Leaf(x), \Ns(x,y), X_1(x), C_0(y)
 \\[2ex] 
   C_0(x)&\leftarrow \Ls(x), \Fc(x,y), X_0(x), C_0(y)
 \\
   C_1(x)&\leftarrow \Ls(x), \Fc(x,y), X_0(x), C_1(y)
 \\
   C_1(x)&\leftarrow \Ls(x), \Fc(x,y), X_1(x), C_0(y)
 \\[2ex]
   C_0(x)&\leftarrow \Fc(x,y), \Ns(x,z), X_0(x), C_0(y), C_0(z)
 \\
   C_1(x)&\leftarrow \Fc(x,y), \Ns(x,z), X_0(x), C_0(y), C_1(z)
 \\
   C_1(x)&\leftarrow \Fc(x,y), \Ns(x,z), X_0(x), C_1(y), C_0(z)
 \\
   C_1(x)&\leftarrow \Fc(x,y), \Ns(x,z), X_1(x), C_0(y), C_0(z)
 \end{align*}

Finally, we achieve \eqref{item:iii-Unary2Bool} by letting $P'$ be a
new intensional predicate and by adding to $\PP'$
the rule
\begin{align*}
  P'(x) & \leftarrow \Root(x), C_1(x), P(y), X_1(y).
\end{align*}
Clearly, $\PP'$ can be generated in time linear in the size of $Q$.
\end{proof}

As an immediate consequence, we obtain:

\begin{Lemma}\label{lemma:Unary2BooleanQCP}
 Let $\Sigma$ be a finite alphabet and let
 $\Sigma'\deff\Sigma\times\set{0,1}$. 
 Within linear time, we can rewrite given unary
 $\mDatalog(\tauGKAlph{\Sigma}^{\Child})$-queries $Q_1$ and $Q_2$ into
 Boolean $\mDatalog(\tauGKAlph{\Sigma'}^{\Child})$-queries $Q'_1$ and
 $Q'_2$ such that $Q_1\subseteq Q_2$ iff $Q'_1\subseteq Q'_2$.
\end{Lemma}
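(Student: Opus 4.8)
The plan is to obtain $Q'_1$ and $Q'_2$ by applying Lemma~\ref{lemma:Unary2BooleanQueries} separately to $Q_1$ and to $Q_2$, and then to derive the equivalence of the two containment statements directly from the two bullet points of that lemma. Concretely, for $i\in\set{1,2}$ let $Q'_i$ be the Boolean $\mDatalog(\tauGKAlph{\Sigma'}^{\Child})$-query supplied by Lemma~\ref{lemma:Unary2BooleanQueries} on input $Q_i$. Each application runs in linear time, so the whole rewriting is linear, and both resulting queries are over the common schema $\tauGKAlph{\Sigma'}^{\Child}$ with $\Sigma'=\Sigma\times\set{0,1}$. (Any clash between the fresh intensional predicates introduced for $Q_1$ and those introduced for $Q_2$ is irrelevant, since the QCP evaluates the two queries independently; if one prefers, the predicates of one of the two programs can be renamed apart in linear time.)

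For the direction ``$Q_1\subseteq Q_2 \implies Q'_1\subseteq Q'_2$'', I would take an arbitrary ordered $\Sigma'$-labeled tree $T'$ with $\AF{Q'_1}(T')=\Yes$. By the second bullet of Lemma~\ref{lemma:Unary2BooleanQueries} applied to $Q_1$, there are an ordered $\Sigma$-labeled tree $T$ and a node $v$ of $T$ with $T'=T'_v$; by the first bullet this gives $v\in\AF{Q_1}(T)$, hence $v\in\AF{Q_2}(T)$ by assumption, and then the first bullet applied to $Q_2$ yields $\AF{Q'_2}(T')=\AF{Q'_2}(T'_v)=\Yes$. For the converse direction, given any ordered $\Sigma$-labeled tree $T$ and any node $v\in\AF{Q_1}(T)$, the first bullet for $Q_1$ gives $\AF{Q'_1}(T'_v)=\Yes$, hence $\AF{Q'_2}(T'_v)=\Yes$ by assumption, and the first bullet for $Q_2$ then gives $v\in\AF{Q_2}(T)$; as $T$ and $v$ were arbitrary, $Q_1\subseteq Q_2$.

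There is no genuine obstacle here; the only point that deserves a moment's care is that the ``$\Longleftarrow$'' direction crucially relies on the second bullet of Lemma~\ref{lemma:Unary2BooleanQueries} — without it a $\Sigma'$-labeled tree $T'$ witnessing $Q'_1\not\subseteq Q'_2$ might fail to be of the form $T'_v$ and hence could not be pulled back to a witness for $Q_1\not\subseteq Q_2$. Everything else is a direct unfolding of the definitions of $\subseteq$ for unary and for Boolean queries, which is why the lemma is stated as an immediate consequence.
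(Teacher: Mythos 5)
Your proposal is correct and follows essentially the same route as the paper: apply Lemma~\ref{lemma:Unary2BooleanQueries} to each $Q_i$ and use its two bullet points, with the second bullet doing exactly the work you identify for pulling a $\Sigma'$-labeled witness back to a $\Sigma$-labeled one. The only cosmetic difference is that the paper argues via the non-containment statements ($Q_1\not\subseteq Q_2 \iff Q'_1\not\subseteq Q'_2$) while you argue the containments directly, which is logically the same argument.
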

\begin{proof}
 For each $i\in\set{1,2}$ let $Q'_i$ be the query obtained by
 Lemma~\ref{lemma:Unary2BooleanQueries}.

 In case that $Q_1\not\subseteq Q_2$, there are an ordered
 $\Sigma$-labeled tree $T$ and a node $v$ of $T$ such that $v\in
 \AF{Q_1}(T)$ and $v\not\in\AF{Q_2}(T)$. By
 Lemma~\ref{lemma:Unary2BooleanQueries} we obtain that
 $\AF{Q'_1}(T'_v)=\Yes$ and $\AF{Q'_2}(T'_v)=\No$. 
 Thus, $Q'_1\not\subseteq Q'_2$.

 In case that $Q'_1\not\subseteq Q'_2$, there is an ordered
 $\Sigma'$-labeled tree $T'$ such that $\AF{Q'_1}(T')=\Yes$ and
 $\AF{Q'_2}(T')=\No$.
 Since $\AF{Q'_1}(T')=\Yes$, Lemma~\ref{lemma:Unary2BooleanQueries}
 tells us that there are an ordered $\Sigma$-labeled tree $T$ and a
 node $v$ of $T$ such that $T'=T'_v$. Furthermore, by
 Lemma~\ref{lemma:Unary2BooleanQueries} we know that $v\in
 \AF{Q_1}(T)$ and $v\not\in\AF{Q_2}(T)$. Thus, $Q_1\not\subseteq Q_2$.
\end{proof}

Finally, we use Theorem~\ref{Theorem:GK_TMNF} to eliminate the
$\Child$-predicate and to obtain queries in TMNF.

\begin{Proposition}\label{prop:Step1}
 Let $\Sigma$ be a finite alphabet and let
 $\Sigma'\deff\Sigma\times\set{0,1}$. 
 Within linear time, we can rewrite given
 unary $\mDatalog(\tauGKAlph{\Sigma}^{\Child})$-queries $Q_1$ and $Q_2$ into
 Boolean $\mDatalog(\tauGKAlph{\Sigma'})$-queries $Q'_1$ and
 $Q'_2$ such that $Q_1\subseteq Q_2$ iff $Q'_1\subseteq Q'_2$.
 Furthermore, the programs of $Q'_1$ and $Q'_2$ are in TMNF.
\end{Proposition}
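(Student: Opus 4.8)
The plan is to obtain $Q'_1,Q'_2$ by composing the two reductions already available to us, namely Lemma~\ref{lemma:Unary2BooleanQCP} and Theorem~\ref{Theorem:GK_TMNF}. First I would apply Lemma~\ref{lemma:Unary2BooleanQCP} to the given unary $\mDatalog(\tauGKAlph{\Sigma}^{\Child})$-queries $Q_1,Q_2$, obtaining in linear time Boolean $\mDatalog(\tauGKAlph{\Sigma'}^{\Child})$-queries $\hat{Q}_1,\hat{Q}_2$ with $Q_1\subseteq Q_2$ iff $\hat{Q}_1\subseteq\hat{Q}_2$. Note that the programs of $\hat{Q}_1,\hat{Q}_2$ may still use the $\Child$-predicate (indeed, by the construction in Lemma~\ref{lemma:Unary2BooleanQueries} they contain verbatim all rules of the programs of $Q_1,Q_2$) and are not yet in TMNF, so a second step is needed.

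In that second step I would apply Theorem~\ref{Theorem:GK_TMNF}, read over the alphabet $\Sigma'$ in place of $\Sigma$, to the program $\hat{\PP}_i$ of each $\hat{Q}_i=(\hat{\PP}_i,\hat{P}_i)$. This yields, in time $O(\size{\hat{\PP}_i})$, a program $\PP'_i$ in TMNF of schema $\tauGKAlph{\Sigma'}$ that is equivalent to $\hat{\PP}_i$; in particular $\hat{P}_i$ survives as an intensional predicate of $\PP'_i$ and $\PP'_i$ computes for $\hat{P}_i$ the same unary relation as $\hat{\PP}_i$ on every ordered $\Sigma'$-labeled tree. Setting $Q'_i\deff(\PP'_i,\hat{P}_i)$ then gives $\AF{Q'_i}(T')=\AF{\hat{Q}_i}(T')$ for every ordered $\Sigma'$-labeled tree $T'$, hence $Q'_1\subseteq Q'_2$ iff $\hat{Q}_1\subseteq\hat{Q}_2$ iff $Q_1\subseteq Q_2$, and by construction the programs of $Q'_1,Q'_2$ are in TMNF. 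Since each of the two steps runs in time linear in the size of its input and produces output whose size is linear in that of its input, the whole construction runs in linear time, as required.

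The content here is essentially bookkeeping, so the ``main obstacle'' is only a minor verification: one must check that Theorem~\ref{Theorem:GK_TMNF}, stated for programs of schema $\tauGK^{\Child}$, really does apply to $\hat{\PP}_i$, which lives over the extended schema $\tauGKAlph{\Sigma'}^{\Child}$ and additionally uses the auxiliary unary predicates (the $X_0,X_1,C_0,C_1$, the query predicate $P'$, and the split label predicates) introduced in the proof of Lemma~\ref{lemma:Unary2BooleanQueries}. This is immediate: the label predicates $\Label_{(\alpha,j)}$, $\Label_\alpha$ occur only as extensional unary predicates in rule bodies, which is exactly what TMNF (and the Gottlob--Koch transformation) permits, and the transformation leaves the designated query predicate in place. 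It also remains to note the routine fact that equivalence of two monadic datalog programs, together with keeping the same query predicate, entails equivalence of the corresponding Boolean queries, since both merely ask whether the root node carries the query predicate in the respective least fixpoint.
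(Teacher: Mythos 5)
Your proposal is correct and follows exactly the paper's own two-step argument: first apply Lemma~\ref{lemma:Unary2BooleanQCP} to pass to Boolean queries over $\Sigma'$, then apply Theorem~\ref{Theorem:GK_TMNF} to eliminate $\Child$ and bring the programs into TMNF. The additional checks you spell out (applicability of the Gottlob--Koch transformation over the extended schema, preservation of the query predicate) are exactly the routine verifications the paper leaves implicit.
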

\begin{proof}
 We apply Lemma~\ref{lemma:Unary2BooleanQCP} to obtain Boolean
 queries $Q'_1$ and $Q'_2$. Afterwards, we
 apply Theorem~\ref{Theorem:GK_TMNF} to rewrite the programs of the
 queries $Q'_1$ and $Q'_2$ into programs in TMNF of schema $\tauGKAlph{\Sigma'}$.
\end{proof}

Note that Proposition~\ref{prop:Step1} partially establishes
Step~\eqref{item:Step1} of the agenda described in
Section~\ref{section:QCPofMonadicDatalog}.

\subsection{Step~\eqref{item:Step1}: From Ordered Unranked Trees to Binary Trees}

For achieving Steps \eqref{item:Step2} and \eqref{item:Step3} we use,
among other things, the classical notion of nondeterministic tree
automata, which operate on ordered \emph{binary} $\Sigma$-labeled
trees. This subsection's goal is to fix notations concerning binary
trees, and to show that, in order to prove
Theorem~\ref{Thm:orderedInExptime}, it suffices to find a 1-fold exponential
algorithm that solves the QCP for Boolean queries in TMNF regarding binary trees.

\subsubsection{Binary trees.}

An \emph{ordered $\Sigma$-labeled binary tree} (for short: binary
tree) $T=(V^T,\lambda^T,L^T,R^T)$ consists of a finite set
$V^T$ of nodes, a function $\lambda^T:V^T\to\Sigma$ assigning to each
node $v$ of $T$ a label $\lambda^T(v)\in\Sigma$, and disjoint sets
$L^T,R^T\subseteq V^T\times V^T$ such that the graph $(V^T,E^T)$
with $E^T\deff L^T\cup R^T$ is a rooted directed tree where edges are
directed from the root to the leaves, and each node
has at most 2 children. For a tuple $(u,v)\in L^T$ (resp.,
$R^T$), we say that node $v$ is the \emph{left child} (resp., the
\emph{right child}) of node $u$. 

We represent such a tree $T$ as a relational structure of domain $V^T$
with unary and binary relations: For each label $\alpha\in\Sigma$,
$\Label_\alpha(x)$ expresses that $x$ is a node with label $\alpha$;
$\Lc(x,y)$ (resp., $\Rc(x,y)$) expresses that $y$ is the left (resp.,
right) child of node $x$;
$\Root(x)$ expresses that $x$ is the tree's root node; 
$\Hnlc(x)$ (resp., $\Hnrc(x)$) expresses that node $x$ has no left
child (resp., no right child), i.e., there is no node $y$ with
$(x,y)\in L^T$ (resp., $R^T$).

We denote this relational structure representing $T$ by $\S_b(T)$, but
when no confusion arises we simply write $T$ instead of $\S_b(T)$.
This relational structure is of schema
\[
  \taubAlph{\Sigma} \ \deff \ 
  \set{\Lc,\Rc}\cup\set{\Root,\Hnlc,\Hnrc}\cup\setc{\Label_\alpha}{\alpha\in\Sigma}.
\]

\subsubsection{Representing Ordered Unranked Tress by Binary Trees.} 

We use (a variant of) the standard representation (cf., e.g.,
\cite{DBLP:conf/csl/Neven02}) of ordered unranked trees by 
binary trees.
We represent an ordered $\Sigma$-labeled (unranked) tree $T$ by a
binary tree $\Bin{T}$ as follows:
$\Bin{T}$ has the same vertex set and the same node labels as $T$, 
the ``left child'' relation $L^{\Bin{T}}$ consists of all tuples $(x,y)$
such that $y$ is the first child of $x$ in $T$ (i.e., $\Fc(x,y)$ is
true in $\S_o(T)$), and the ``right child'' relation $R^{\Bin{T}}$
consists of all tuples $(x,y)$ such that $y$ is the next sibling of
$x$ in $T$ (i.e., $\Ns(x,y)$ is true in $\S_o(T)$).

Note that the relational structure $\S_b(\Bin{T})$ is obtained from
the structure $\S_o(T)$ as follows:
\begin{mi}
 \item drop the relations $\Child$ and $\Desc$,
 \item rename the relations $\Fc$, $\Ns$, $\Leaf$, $\Ls$ into
    $\Lc$, $\Rc$, $\Hnlc$, $\Hnrc$, and
 \item
   insert the root node into the relation $\Hnrc$.
\end{mi}
Furthermore, note that for a binary tree $T'$ there exists an unranked
ordered tree $T$ with $T'=\Bin{T}$ if, and only if, the root of $T'$
has no right child (and in this case the tree $T$ is unique).

\begin{lemma}\label{lemma:unranked2binary}
  Every Boolean $\mDatalog(\tauGKAlph{\Sigma})$-query $Q$ can be
  rewritten, in linear time, into a Boolean
  $\mDatalog(\taubAlph{\Sigma})$-query $Q'$ which satisfies the
  following:
  \begin{mi}
   \item
     For every ordered $\Sigma$-labeled (unranked) tree $T$ we have\\
     $\AF{Q}(T)=\Yes \iff \AF{Q'}(\Bin{T})=\Yes$.
   \item
     For every ordered $\Sigma$-labeled binary tree $T'$ with 
     $\AF{Q'}(T')=\Yes$ there is an ordered $\Sigma$-labeled
     (unranked) tree $T$ such that $T'=\Bin{T}$.
  \end{mi}
  Furthermore, if the program of $Q$ is in TMNF, then also the program
  of $Q'$ is in TMNF.
\end{lemma}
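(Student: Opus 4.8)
The plan is to obtain $Q'$ from $Q=(\PP,P)$ essentially by renaming the schema $\tauGKAlph{\Sigma}$ into $\taubAlph{\Sigma}$ in the way dictated by the relationship between $\S_o(T)$ and $\S_b(\Bin{T})$ recalled above, with two small repairs. After renaming the intensional predicates of $\PP$ if necessary, I may assume that $\PP$ mentions none of $\Lc,\Rc,\Hnlc,\Hnrc$ and none of the fresh symbols $\IDB{HasParent},\IDB{Ls}',\IDB{GoodRoot},P'$. Let $\pi$ be the renaming of predicate symbols sending $\Fc\mapsto\Lc$, $\Ns\mapsto\Rc$, $\Leaf\mapsto\Hnlc$, and acting as the identity on $\Root$, on the labels $\Label_\alpha$, and on the intensional predicates of $\PP$. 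The one predicate that cannot be renamed verbatim is $\Ls$: in $\S_o(T)$ the root does \emph{not} satisfy $\Ls$, whereas in $\S_b(\Bin{T})$ it \emph{does} satisfy $\Hnrc$, so a naive substitution $\Ls\mapsto\Hnrc$ would let rules of $\PP$ fire at the root of $\Bin{T}$ that are never triggered at the root of $\S_o(T)$, which could change the answer. To repair this I would instead let $\pi$ send $\Ls$ to $\IDB{Ls}'$, where $\IDB{Ls}'$ is programmed to hold of exactly the non-root nodes that have no right child.

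This is easy to arrange within TMNF: a node fails to be the root precisely when it is the left or right child of some node, so for each $\alpha\in\Sigma$ I would add the rules $\IDB{HasParent}(x)\leftarrow\Lc(y,x),\Label_\alpha(y)$ and $\IDB{HasParent}(x)\leftarrow\Rc(y,x),\Label_\alpha(y)$, and then the rule $\IDB{Ls}'(x)\leftarrow\Hnrc(x),\IDB{HasParent}(x)$; all of these are TMNF rules. Since $\Hnrc^{\S_b(\Bin{T})}$ consists of the root of $T$ together with the last siblings of $T$, while $\IDB{HasParent}$ gets derived for exactly the non-root nodes, $\IDB{Ls}'$ gets derived for exactly the nodes of $\Ls^{\S_o(T)}$; and note that $\Hnlc^{\S_b(\Bin{T})}=\Leaf^{\S_o(T)}$ and that $\Lc,\Rc,\Root$ and the $\Label_\alpha$ carry over verbatim. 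The second repair controls the range of $Q'$: I would add $\IDB{GoodRoot}(x)\leftarrow\Root(x),\Hnrc(x)$ and take $P'$ as the query predicate, with the single rule $P'(x)\leftarrow\IDB{GoodRoot}(x),P(x)$. The program $\PP'$ of $Q'$ then consists of all rules $\pi(r)$ for $r\in\PP$ together with these finitely many auxiliary rules, and $Q'\deff(\PP',P')$; it is computable from $Q$ in linear time, and since $\pi$ sends binary predicates (necessarily $\Fc$ or $\Ns$) to binary predicates and unary predicates to unary predicates, $\PP'$ is in TMNF whenever $\PP$ is.

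For correctness, the key points are that the rules $\pi(r)$ never mention $\Hnrc$ and that $\IDB{HasParent},\IDB{Ls}',\IDB{GoodRoot}$ depend only on the underlying tree, not on any predicate of $\PP$. A routine induction on the stages of $\T_{\PP'}$ then shows that, over the predicates inherited from $\PP$ (with $\Ls$ replaced by $\IDB{Ls}'$), the fixpoint $\T_{\PP'}^\omega(\atoms(\S_b(\Bin{T})))$ equals the $\pi$-image of $\T_{\PP}^\omega(\atoms(\S_o(T)))$. Since $\pi$ fixes $P$ and $\Root$ and the root of $\Bin{T}$ satisfies $\Hnrc$ (hence $\IDB{GoodRoot}$), this yields $\AF{Q'}(\Bin{T})=\Yes$ iff $\AF{Q}(T)=\Yes$, which is the first bullet. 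For the second bullet, if $\AF{Q'}(T')=\Yes$ for a binary tree $T'$, then $P'$ is derived at the root of $T'$; as the only rule with head $P'$ forces $\IDB{GoodRoot}$, and hence $\Hnrc$, at the root, the root of $T'$ has no right child, and then by the characterisation recalled above $T'=\Bin{T}$ for a (unique) unranked ordered tree $T$.

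I expect the only genuine obstacle to be precisely this $\Ls$/$\Hnrc$ mismatch at the root, together with the need to express ``$x$ is not the root'' positively and within the rigid TMNF rule format; the $\IDB{HasParent}$/$\IDB{Ls}'$ detour handles exactly that. Everything else, namely the renaming itself, the verbatim agreement of $\Hnlc$ with $\Leaf$, the linear-time and TMNF bookkeeping, and the fixpoint induction, is routine.
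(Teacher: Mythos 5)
Your construction follows the same route as the paper's proof: rename $\Fc,\Ns,\Leaf$ into $\Lc,\Rc,\Hnlc$, keep $\Root$ and the $\Label_\alpha$ verbatim, and add a guard at the root to secure the second bullet (your two rules $\IDB{GoodRoot}(x)\leftarrow\Root(x),\Hnrc(x)$ and $P'(x)\leftarrow\IDB{GoodRoot}(x),P(x)$ play exactly the role of the paper's single rule $P'(x)\leftarrow P(x),\Hnrc(x)$). The one substantive deviation is your treatment of $\Ls$: the paper simply renames $\Ls$ into $\Hnrc$, whereas you route it through an intensional predicate $\IDB{Ls}'$ holding at exactly the non-root nodes without a right child. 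Your reading is the one supported by the paper's own description of how $\S_b(\Bin{T})$ arises from $\S_o(T)$ (``rename \dots\ and insert the root node into the relation $\Hnrc$''): under that convention the root is \emph{not} in $\Ls^{\S_o(T)}$ but \emph{is} in $\Hnrc^{\S_b(\Bin{T})}$, so the verbatim substitution can make a rule fire at the root of $\Bin{T}$ that never fires in $\S_o(T)$ --- e.g.\ the TMNF rule $P(x)\leftarrow\Ls(x),\Root(x)$ on a one-node tree yields $\No$ for $Q$ but $\Yes$ for the verbatim-renamed $Q'$, violating the first bullet. Your $\IDB{HasParent}/\IDB{Ls}'$ detour repairs this within TMNF and in linear time, and the remaining fixpoint induction and the argument for the second bullet coincide with what must be behind the paper's ``straightforward to verify''.

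One caveat: whether your repair is necessary (and indeed whether it is correct) hinges on the convention for $\Ls$ at the root, which the paper does not pin down unambiguously. Elsewhere the paper appears to assume the opposite convention --- the bottom-up $C_0/C_1$ rules in the proof of Lemma~\ref{lemma:Unary2BooleanQueries} can only reach the root through bodies containing $\Ls(x)$ --- and under that convention the verbatim renaming is sound while your $\IDB{Ls}'$ would wrongly exclude the root. Your proof is the one consistent with the stated definition of $\Bin{\cdot}$; just state explicitly which convention you adopt.
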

\begin{proof}
 Let $Q=(\PP,P)$.
 We specify $Q'$ by a monadic datalog program $\PP'$ and a query
 predicate $P'$ as follows:
 $\PP'$ is obtained from $\PP$ by renaming, in each rule, the
 predicates $\Fc$, $\Ns$, $\Leaf$, $\Ls$ into the predicates
 $\Lc$, $\Rc$, $\Hnlc$, $\Hnrc$.
 Furthermore, we let $P'$ be a new intensional predicate, and we add
 to $\PP'$ the rule
 \begin{align*}
   P'(x)&\leftarrow P(x),\Hnrc(x).
 \end{align*}
 It is straightforward to verify that the resulting Boolean query $Q'$
 has the desired properties.
\end{proof}

By combining this lemma with Proposition~\ref{prop:Step1}, we obtain
the following:

\begin{Proposition}\label{prop:BinaryTreesSufficeForQCP-upperBound}
 Let $\Sigma$ be a finite alphabet and let
 $\Sigma'\deff\Sigma\times\set{0,1}$. 
 Within linear time, we can rewrite given
 unary $\mDatalog(\tauGKAlph{\Sigma}^{\Child})$-queries $Q_1$ and
 $Q_2$ (querying ordered $\Sigma$-labeled unranked trees) into
 Boolean $\mDatalog(\taubAlph{\Sigma'})$-queries $Q'_1$ and
 $Q'_2$ (querying ordered $\Sigma'$-labeled binary trees) such that
 $Q_1\subseteq Q_2$ iff $Q'_1\subseteq Q'_2$. 
 Furthermore, the programs of $Q'_1$ and $Q'_2$ are in TMNF.
\end{Proposition}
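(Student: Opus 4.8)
The plan is simply to compose the two reductions already established: Proposition~\ref{prop:Step1}, which turns the unary input queries into Boolean TMNF queries over \emph{unranked} $\Sigma'$-labeled trees, and Lemma~\ref{lemma:unranked2binary}, which turns Boolean TMNF queries over unranked trees into Boolean TMNF queries over \emph{binary} trees. Concretely, I would first apply Proposition~\ref{prop:Step1} to $Q_1,Q_2$ to obtain, in linear time, Boolean $\mDatalog(\tauGKAlph{\Sigma'})$-queries $\hat Q_1,\hat Q_2$ whose programs are in TMNF and which satisfy $Q_1\subseteq Q_2\iff \hat Q_1\subseteq \hat Q_2$. Then I would apply Lemma~\ref{lemma:unranked2binary} (instantiated with the alphabet $\Sigma'$) separately to $\hat Q_1$ and to $\hat Q_2$, obtaining in linear time Boolean $\mDatalog(\taubAlph{\Sigma'})$-queries $Q_1',Q_2'$; since $\hat Q_1$ and $\hat Q_2$ are in TMNF, the lemma guarantees the same for $Q_1'$ and $Q_2'$. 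Composing two linear-time transformations applied a constant number of times keeps the overall running time linear.

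It then remains to check that $\hat Q_1\subseteq \hat Q_2\iff Q_1'\subseteq Q_2'$, which I would verify via the contrapositive in both directions. For the implication $\hat Q_1\not\subseteq\hat Q_2\Rightarrow Q_1'\not\subseteq Q_2'$, take an ordered unranked $\Sigma'$-labeled tree $T$ with $\AF{\hat Q_1}(T)=\Yes$ and $\AF{\hat Q_2}(T)=\No$; applying the first item of Lemma~\ref{lemma:unranked2binary} to both $\hat Q_1$ and $\hat Q_2$ shows that $\Bin{T}$ witnesses $Q_1'\not\subseteq Q_2'$. For the converse, take an ordered $\Sigma'$-labeled binary tree $T'$ with $\AF{Q_1'}(T')=\Yes$ and $\AF{Q_2'}(T')=\No$; from $\AF{Q_1'}(T')=\Yes$ and the second item of Lemma~\ref{lemma:unranked2binary} we get an ordered unranked tree $T$ with $T'=\Bin{T}$, and then the first item applied to $\hat Q_1$ and $\hat Q_2$ yields $\AF{\hat Q_1}(T)=\Yes$ and $\AF{\hat Q_2}(T)=\No$, so $\hat Q_1\not\subseteq\hat Q_2$.

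I do not expect a genuine obstacle here, since everything reduces to chaining the two previous results. The one point worth stating carefully is the backward direction of the equivalence above: it hinges on the fact that $Q_1'$ accepts only binary trees of the form $\Bin{T}$ (equivalently, binary trees whose root has no right child), which is precisely the content of the second item of Lemma~\ref{lemma:unranked2binary}. Without this property the ``inconsistent'' binary trees could in principle separate $Q_1'$ and $Q_2'$ without corresponding to any unranked tree; it is this asymmetry in the lemma --- a guarantee placed on $Q_1'$, on the ``$\Yes$'' side --- that makes the reduction go through, and the same asymmetry is why the construction is only claimed for the \emph{upper bound} direction of the QCP.
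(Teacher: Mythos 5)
Your proposal is correct and follows essentially the same route as the paper: it composes Proposition~\ref{prop:Step1} with Lemma~\ref{lemma:unranked2binary} and verifies the equivalence by contraposition in both directions, invoking the second item of the lemma (that any binary tree accepted by $Q'_1$ is of the form $\Bin{T}$) exactly where the paper does. Your closing remark about the asymmetry of that guarantee is a correct and worthwhile observation, but it does not change the argument.
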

\begin{proof}
 We first apply Proposition~\ref{prop:Step1} to obtain Boolean 
 $\mDatalog(\tauGKAlph{\Sigma'})$-queries $\tilde{Q}_1$ and
 $\tilde{Q}_2$, whose programs are in TMNF, such that $Q_1\subseteq Q_2$ iff
 $\tilde{Q}_1\subseteq\tilde{Q}_2$.

 Next, we apply Lemma~\ref{lemma:unranked2binary} to rewrite
 $\tilde{Q}_1$ and $\tilde{Q}_2$ into Boolean
 $\mDatalog(\taubAlph{\Sigma'})$-queries $Q'_1$ and $Q'_2$.
 It is straightforward to check that
 $\tilde{Q}_1\subseteq \tilde{Q}_2$ iff $Q'_1\subseteq Q'_2$:

 In case that $\tilde{Q}_1\not\subseteq\tilde{Q}_2$, there is an
 ordered $\Sigma'$-labeled unranked tree $T$ such that
 $\AF{\tilde{Q}_1}(T)=\Yes$ and $\AF{\tilde{Q}_2}(T)=\No$.
 By Lemma~\ref{lemma:unranked2binary} we obtain that 
 $\AF{Q'_1}(\Bin{T})=\Yes$ and 
 $\AF{Q'_2}(\Bin{T})=\No$. Thus, $Q'_1\not\subseteq Q'_2$.

 In case that $Q'_1\not\subseteq Q'_2$, there is an ordered
 $\Sigma'$-labeled binary tree $T'$ such that $\AF{Q'_1}(T')=\Yes$ and
 $\AF{Q'_2}(T')=\No$.
 Since $\AF{Q'_1}(T')=\Yes$, Lemma~\ref{lemma:unranked2binary} tells us
 that there is an ordered $\Sigma'$-labeled
 unranked tree $T$ such that $T'=\Bin{T}$. Furthermore, by
 Lemma~\ref{lemma:unranked2binary} we know that
 $\AF{\tilde{Q}_1}(T)=\Yes$ and $\AF{\tilde{Q}_2}(T)=\No$. Thus,
 $\tilde{Q}_1\not\subseteq \tilde{Q}_2$.
\end{proof}

Proposition~\ref{prop:BinaryTreesSufficeForQCP-upperBound} implies 
that, in order to prove Theorem~\ref{Thm:orderedInExptime}, it
suffices to show that the following problem can be solved in 1-fold
exponential time:

\begin{Problem}{Boolean-TMNF-QCP for monadic datalog on binary trees}
	\In A finite alphabet $\Sigma$ and two Boolean $\mDatalog(\taubAlph{\Sigma})$-queries $Q_1$ and
        $Q_2$ whose programs are in TMNF. 
        \Quest Is $Q_1\subseteq Q_2$\,?
\end{Problem}

\noindent
This finishes Step~\eqref{item:Step1} of the agenda described in
Section~\ref{section:QCPofMonadicDatalog}.

\subsection{Step~\eqref{item:Step2}: Nondeterministic Bottom-Up Tree
  Automata (NBTA)} \label{automaten}

In this subsection we recall the classical notion (cf., e.g.,
\cite{WThomas-Handbook-survey}) of nondeterministic 
bottom-up tree automata (NBTA, for short), and show that a Boolean
monadic datalog query $Q$ on binary trees can be translated, within
1-fold exponential time, into an NBTA $\Aut{A}_Q^{\No}$ which accepts
exactly those binary trees $T$ for which $\AF{Q}(T)=\No$.

\bigskip

A \emph{nondeterministice bottom-up tree automaton} (NBTA, for short)
$\Aut{A}$ is specified by a tuple $(\Sigma,S,\Delta,F)$, where
$\Sigma$ is a finite non-empty alphabet,
$S$ is a finite set of \emph{states}, $F\subseteq S$ is the set of
\emph{accepting states}, and $\Delta$ is the \emph{transition
  relation} with 
\begin{equation}\label{eq:Delta-NBTA}
  \Delta \quad  \subseteq \quad  S_{\#} \times S_{\#} \times \Sigma \times S,
\end{equation}
where $S_{\#}\deff S\cup\set{\#}$ for a symbol $\#$ that
does not belong to $S$.

A \emph{run} of $\Aut{A}$ on an ordered $\Sigma$-labeled binary tree
$T$ is a mapping $\rho:V^T\to S$ such that the following is true for
all nodes $v$ of $T$, where $\alpha$ denotes the label of $v$ in $T$:
\begin{mi}
 \item 
   If $v$ has no left child and no right child, then
   $\big(\#,\#,\alpha,\rho(v)\big)\in \Delta$.
 \item
   If $v$ has a left child $u_\ell$ and a right child $u_r$, then
   $\big(\rho(u_\ell),\rho(u_r),\alpha,\rho(v)\big)\in \Delta$.
 \item
   If $v$ has a left child $u_\ell$, but no right child, then
   $\big(\rho(u_\ell),\#,\alpha,\rho(v)\big)\in \Delta$.
 \item
   If $v$ has a right child $u_r$, but no left child, then
   $\big(\#,\rho(u_r),\alpha,\rho(v)\big)\in \Delta$.
\end{mi}
A run $\rho$ of $\Aut{A}$ on $T$ is \emph{accepting} if
$\rho(\rroot^T) \in F$, where $\rroot^T$ is the root node of $T$.
The automaton $\Aut{A}$ \emph{accepts} the tree $T$ if there exists an
accepting run of $\Aut{A}$ on $T$. A tree $T$ is \emph{rejected} iff
it is not accepted.
The \emph{tree language} $\lang(\Aut{A})$ is the set of all ordered
$\Sigma$-labeled binary trees $T$ that are accepted by $\Aut{A}$.
A set $L$ of ordered $\Sigma$-labeled binary trees is \emph{regular} if
$L=\lang(\Aut{A})$ for some NBTA $\Aut{A}$.

We define the \emph{size} $\size{\Aut{A}}$ of an NBTA $\Aut{A}$ to be the length of a
reasonable representation of the tuple $(\Sigma,S,\Delta,F)$; to be
precise, we let $\size{\Aut{A}}\deff |\Sigma| + |S| + |\Delta| + |F|$.
Note that due to \eqref{eq:Delta-NBTA} we have
\begin{equation}\label{eq:Size-NBTA}
 \size{A} \ = \ \ O(|S|^3{\cdot}|\Sigma|).
\end{equation}

It is well-known that the usual automata constructions for NFAs (i.e.,
nondeterministic finite automata on words) also apply to NBTAs. 
For formulating the results needed for our purposes, we introduce the
following notation: For finite alphabets $\Sigma$ and $\Gamma$ we let
$\Proj_\Sigma$ be the mapping from $\Sigma{\times}\Gamma$ to $\Sigma$
with $\Proj_\Sigma(\alpha,\beta)\deff \alpha$ for all
$(\alpha,\beta)\in \Sigma{\times}\Gamma$.
If $T$ is a $(\Sigma{\times}\Gamma)$-labeled tree, we write
$\Proj_\Sigma(T)$ to denote the $\Sigma$-labeled tree obtained from
$T$ by replacing each node label $(\alpha,\beta)$ by the node label $\alpha$.

By using standard automata constructions, one obtains:

\begin{Fact}[Folklore; see e.g.\ \cite{tata2008}]\label{fact:NBTA-constructions} \ 
\begin{description}
 \item[Union:] 
   For all NBTAs $\Aut{A}_1$ and $\Aut{A}_2$ over the same alphabet
   $\Sigma$, an NBTA $\Aut{A}_{\cup}$ with
   $\lang(\Aut{A}_\cup)=\lang(\Aut{A}_1)\cup\lang(\Aut{A}_2)$ can be
   constructed in time linear in $\size{\Aut{A}_1}$ and
   $\size{\Aut{A}_2}$. Furthermore, if $k_i$ is the number of states of
   $\Aut{A}_i$, for $i\in\set{1,2}$, then the number
   of states of $\Aut{A}_\cup$ is $k_1{+}k_2$.
 \item[Intersection:]
   For all NBTAs $\Aut{A}_1$ and $\Aut{A}_2$ over the same alphabet
   $\Sigma$, an NBTA $\Aut{A}_{\cap}$ with
   $\lang(\Aut{A}_\cap)=\lang(\Aut{A}_1)\cap\lang(\Aut{A}_2)$ can be
   constructed in time polynomial in $\size{\Aut{A}_1}$ and
   $\size{\Aut{A}_2}$. Furthermore, if $k_i$ is the number of states of
   $\Aut{A}_i$, for $i\in\set{1,2}$, then the number
   of states of $\Aut{A}_\cap$ is $k_1{\cdot}k_2$.
 \item[Complementation:]
   For every NBTA $\Aut{A}$, an NBTA $\Aut{A}^c$ which accepts exactly
   those trees that are rejected by $\Aut{A}$, can be constructed in
   time polynomial in $\size{\Aut{A}}{\cdot}2^k$, where $k$ denotes
   the number of states of $\Aut{A}$. 
   Furthermore, the number of states of $\Aut{A}^c$ is $2^k$.
 \item[Projection:]
   For every NBTA $\Aut{A}$ over an alphabet of the form
   $\Sigma{\times}\Gamma$, an NBTA $\Aut{A}^p$ over alphabet 
   $\Sigma$ with 
   \(
     \lang(\Aut{A}^p)
      =  
     \setc{\, \Proj_\Sigma(T)}{T \in \lang(\Aut{A}) \, }
   \)
   can be constructed in time polynomial in
   $\size{\Aut{A}}$.
   Furthermore, the number of states of $\Aut{A}^p$ is the same as the
   number of states of $\Aut{A}$.    
\end{description}
\end{Fact}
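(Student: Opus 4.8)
The plan is to verify the four items one at a time; each is a routine lift of the corresponding construction for nondeterministic word automata, so I will mainly indicate the construction, the correctness invariant, and the resource bound, and point out that only \textbf{Complementation} needs a genuine idea.

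For \textbf{Union} I would rename states so that $\Aut{A}_1$ and $\Aut{A}_2$ have disjoint state sets $S_1,S_2$, and let $\Aut{A}_\cup$ have state set $S_1\cup S_2$, transition relation $\Delta_1\cup\Delta_2$, and accepting set $F_1\cup F_2$. No run of $\Aut{A}_\cup$ can mix the two state sets, since every transition tuple lies entirely in one of $\Delta_1,\Delta_2$; hence a run is accepting iff it is an accepting run of $\Aut{A}_1$ or of $\Aut{A}_2$, which gives $\lang(\Aut{A}_\cup)=\lang(\Aut{A}_1)\cup\lang(\Aut{A}_2)$. The construction is clearly linear and the number of states is $k_1+k_2$. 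For \textbf{Intersection} I would take the product automaton with state set $S_1\times S_2$, putting $\big((p_1,p_2),(q_1,q_2),\alpha,(r_1,r_2)\big)$ into the transition relation whenever $(p_1,q_1,\alpha,r_1)\in\Delta_1$ and $(p_2,q_2,\alpha,r_2)\in\Delta_2$ (pairing each occurrence of $\#$ with $\#$), and with accepting set $F_1\times F_2$. A run of the product is exactly a pair of runs of $\Aut{A}_1$ and $\Aut{A}_2$ on the same tree, whence $\lang(\Aut{A}_\cap)=\lang(\Aut{A}_1)\cap\lang(\Aut{A}_2)$; the automaton has $k_1k_2$ states, and by \eqref{eq:Size-NBTA} its transition relation has size $O((k_1k_2)^3|\Sigma|)$, so the construction runs in polynomial time.

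For \textbf{Complementation} I would first determinize $\Aut{A}=(\Sigma,S,\Delta,F)$ by the bottom-up subset construction, producing a deterministic NBTA $\Aut{D}$ with state set $2^S$. Writing $U^\#\deff U$ when the relevant child exists and $U^\#\deff\set{\#}$ otherwise, a node $v$ with label $\alpha$ whose left and right children carry states $U_\ell,U_r$ receives the state $\setc{s\in S}{\exists\, s_\ell\in U_\ell^\#,\ s_r\in U_r^\#\ \text{with}\ (s_\ell,s_r,\alpha,s)\in\Delta}$ (for a leaf this specialises to $U_\ell^\#=U_r^\#=\set{\#}$, and the left-child-only and right-child-only cases are covered automatically). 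A bottom-up induction then shows that the unique state $\Aut{D}$ assigns to $v$ is precisely the set of states reachable by $\Aut{A}$ at $v$ over all partial runs on the subtree below $v$; in particular $\Aut{D}$ accepts $T$ iff the state it reaches at the root meets $F$. Taking as accepting states of $\Aut{A}^c$ the sets $U\in 2^S$ with $U\cap F=\emptyset$ therefore yields an NBTA recognising exactly the trees rejected by $\Aut{A}$. Since $\Aut{D}$ is deterministic, $\Aut{A}^c$ has $2^k$ states; each entry of its transition relation is computed by one scan through $\Delta$, so the whole object can be written down in time polynomial in $2^k$ and $|\Sigma|$, hence in $\size{\Aut{A}}\cdot 2^k$.

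For \textbf{Projection} I would keep the state set, accepting set, and $\#$-conventions of $\Aut{A}$ unchanged and, for every transition $(q_\ell,q_r,(\alpha,\beta),q)\in\Delta$, put $(q_\ell,q_r,\alpha,q)$ into the transition relation of $\Aut{A}^p$. A run of $\Aut{A}^p$ on a $\Sigma$-labeled tree $T'$ is then the same as a run of $\Aut{A}$ on some $(\Sigma\times\Gamma)$-labeled tree $T$ with $\Proj_\Sigma(T)=T'$ — at each node the run of $\Aut{A}^p$ implicitly guesses a fitting $\Gamma$-component — and conversely every run of $\Aut{A}$ on $T$ is a run of $\Aut{A}^p$ on $\Proj_\Sigma(T)$; hence $\lang(\Aut{A}^p)=\setc{\Proj_\Sigma(T)}{T\in\lang(\Aut{A})}$. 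The state set is unchanged and the new transition relation has size at most $|\Delta|$, so the construction is polynomial in $\size{\Aut{A}}$. I expect \textbf{Complementation} to be the only part of the argument requiring care: one must state the bottom-up subset construction correctly (in particular handle the four child-configurations of the NBTA definition uniformly, which the $U^\#$-notation does) and prove the reachability invariant, whereas Union, Intersection, and Projection amount to bookkeeping about transition tuples and the symbol $\#$.
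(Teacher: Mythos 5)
Your constructions are correct and are precisely the textbook arguments that the paper's ``Folklore; see e.g.\ \cite{tata2008}'' citation points to: the paper gives no proof of this Fact, and your disjoint union, product, bottom-up subset construction followed by complementing the accepting set, and transition-projection are the standard route. The resource bounds you derive (states $k_1{+}k_2$, $k_1{\cdot}k_2$, $2^k$, unchanged; times linear, polynomial, polynomial in $\size{\Aut{A}}{\cdot}2^k$, polynomial) all match the statement, and your handling of the $\#$-placeholder in the four child-configurations is consistent with the run definition given in the paper.
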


\medskip

\noindent
The emptiness problem for NBTAs is defined as follows:

\begin{Problem}{Emptiness problem for NBTAs}
  \In An NBTA $\Aut{A} =(\Sigma,S,\Delta,F)$.
  \Quest Is $\lang(\Aut{A}) = \emptyset$?
\end{Problem}
Similarly as for NFAs, the emptiness problem for NBTAs can be solved efficiently:

\begin{Fact}[Folklore; see e.g.\ \cite{tata2008}]\label{fact:emptiness}
 The emptiness problem for NBTAs can be solved in time polynomial in
 the size of the input automaton.
\end{Fact}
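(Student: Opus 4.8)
The plan is to reduce the emptiness question to computing the set $R\subseteq S$ of \emph{reachable} states, where a state $s$ is called reachable if there exist an ordered $\Sigma$-labeled binary tree $T$ and a run $\rho$ of $\Aut{A}$ on $T$ with $\rho(\rroot^T)=s$. Since $\Aut{A}$ accepts a tree exactly when some run assigns an accepting state to its root, we have $\lang(\Aut{A})=\emptyset$ if and only if $R\cap F=\emptyset$, so it suffices to compute $R$ efficiently. I would compute $R$ by the obvious bottom-up fixpoint iteration: put $R_0\deff\emptyset$, and given $R_i$ let $R_{i+1}\deff R_i\cup\setc{s\in S}{(p,q,\alpha,s)\in\Delta\text{ for some }\alpha\in\Sigma\text{ and }p,q\in R_i\cup\set{\#}}$. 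Since $R_0\subseteq R_1\subseteq\cdots\subseteq S$, the sequence reaches a fixpoint $R\deff R_i$ after at most $|S|$ steps; note that transitions of the form $(\#,\#,\alpha,s)\in\Delta$ (which govern leaves) already place $s$ in $R_1$, because $\#\in R_0\cup\set{\#}$.

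To justify correctness I would show that $R$ is exactly the set of reachable states, by two inclusions. For ``$R$ contains every reachable state'', I would induct on the structure of a tree $T$ witnessing reachability of $s$: the restriction of the witnessing run to the left subtree (if any) and to the right subtree (if any) are runs witnessing reachability of the respective child states, which by the induction hypothesis lie in $R$; the transition applied by the run at $\rroot^T$ then forces $s\in R$ by the definition of the iteration (using $\#$ in place of a child state whenever the corresponding child is absent). For the converse, I would induct on the least $i$ with $s\in R_i$: if $s\in R_{i+1}\setminus R_i$ is witnessed by $(p,q,\alpha,s)\in\Delta$ with $p,q\in R_i\cup\set{\#}$, then by the induction hypothesis each of $p$ and $q$ that differs from $\#$ is realised at the root of some tree by a run accepting there, and hanging these trees (or nothing, where the entry is $\#$) below a fresh $\alpha$-labeled root yields a tree and a run witnessing that $s$ is reachable. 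Combining the two inclusions with the observation above gives $\lang(\Aut{A})=\emptyset\iff R\cap F=\emptyset$.

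For the running time, each step of the iteration scans the transition relation once and, for each tuple in $\Delta$, checks membership of its first two components in $R_i\cup\set{\#}$; with $R_i$ stored as a bit array over $S$ this costs $O(|\Delta|)$ per step. The iteration halts as soon as $R_{i+1}=R_i$, and until then $|R_{i+1}|>|R_i|$, so at most $|S|+1$ steps are performed, for a total of $O(|S|\cdot|\Delta|)$; together with the final test whether $R\cap F\neq\emptyset$, this is polynomial in $\size{\Aut{A}}=|\Sigma|+|S|+|\Delta|+|F|$, as claimed.

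I do not expect a genuine obstacle here, since the result is entirely standard; the only points needing a little care are the uniform treatment of leaves and unary nodes via the dummy symbol $\#$, and spelling out the structural induction in the ``$R$ contains every reachable state'' direction, which is precisely where the syntactic fixpoint $R$ gets identified with the semantic notion of reachability.
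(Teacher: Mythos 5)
Your proof is correct and is exactly the standard bottom-up reachability/fixpoint argument that the paper relies on by labelling this Fact as folklore with a reference to the tree-automata textbook; the paper itself supplies no proof. Your uniform handling of missing children via the dummy symbol $\#$ matches the paper's case-based definition of a run, and the $O(|S|\cdot|\Delta|)$ bound is polynomial in $\size{\Aut{A}}$ as required, so nothing further is needed.
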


The following result establishes a relation between monadic datalog
and NBTAs.

\begin{Proposition} \label{query2notnbuta} \xdef\SecQzwei{\thesection}
  \xdef\SecQzweiDef{\theDef} 
Let $\Sigma$ be a finite alphabet and let $Q$ be a Boolean
$\mDatalog(\taubAlph{\Sigma})$-query whose program is in TMNF.
Within time polynomial in $|\Sigma|{\cdot} 2^{\size{Q}}$ we can construct an NBTA
$\Aut{A}^{\No}$ with $2^{O(\size{Q})}$ states, which accepts exactly those ordered
$\Sigma$-labeled binary trees $T$ where $\AF{Q}(T)=\No$.
\end{Proposition}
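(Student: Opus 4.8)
The plan is to construct $\Aut{A}^{\No}$ directly, exploiting that $\PP$ is monadic and that the semantics of $\T_\PP$ is a least fixed point. The conceptual key is a reformulation of the condition ``$\AF{Q}(T)=\No$'' in terms of \emph{closed sets}. Write $C_T^\ast\deff\T_\PP^\omega(\atoms(\S_b(T)))$; then $\AF{Q}(T)=\No$ iff $P(\rroot^T)\notin C_T^\ast$. Since $\T_\PP$ is inflationary and monotone, $C_T^\ast$ is the \emph{smallest} set $C$ of atomic facts with $\atoms(\S_b(T))\subseteq C$ and $\T_\PP(C)\subseteq C$; call such a $C$ \emph{closed}. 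Consequently $P(\rroot^T)\notin C_T^\ast$ if, and only if, there \emph{exists} a closed set $C\supseteq\atoms(\S_b(T))$ with $P(\rroot^T)\notin C$. Because $\PP$ is monadic, every such $C$ is described by a labelling $L\colon V^T\to 2^{\idb(\PP)}$ (its extensional part being fixed by $T$), and, crucially, ``$C$ is closed'' unfolds --- since $\PP$ is in TMNF --- into a family of \emph{local} conditions on $L$: a rule $X(x)\leftarrow Y(x),Z(x)$ constrains a single node; a rule $X(x)\leftarrow R(x,y),Y(y)$ (with $R\in\{\Lc,\Rc\}$) constrains a node together with one of its children; and a rule $X(x)\leftarrow R(y,x),Y(y)$ constrains a node together with its parent. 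Thus $\AF{Q}(T)=\No$ iff $T$ carries a labelling $L$ satisfying all these local conditions (evaluated using $L$ together with the extensional unary facts $\Label_\alpha,\Hnlc,\Hnrc,\Root$) and with $P\notin L(\rroot^T)$.

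Next I would let $\Aut{A}^{\No}$ guess such a labelling during a bottom-up run. Its alphabet is $\Sigma$, and its state set is $2^{\idb(\PP)}\times\{0,1\}$: the state at a node $v$ stores $L(v)$ together with a bit $\textit{top}$ that will be $1$ exactly for the node the run treats as the root. A transition at $v$ with $\Sigma$-label $\alpha$ reads the states of the zero, one, or two children of $v$ --- thereby also learning which of $\Hnlc,\Hnrc$ hold at $v$ --- and is admitted only if: (a)~both children carry $\textit{top}=0$ (so that only the genuine root, being the unique node never read as a child, can ever carry $\textit{top}=1$); (b)~$\Root$ is treated as true at $v$ iff $v$ carries $\textit{top}=1$; and (c)~every instance of every rule of $\PP$ whose ``upper'' node is $v$ is satisfied by $L$, i.e., whenever its body holds (evaluated from $L$ and the now-known extensional facts at $v$ and at the relevant child), the head predicate is present in $L$ of the appropriate node --- where for a rule $X(x)\leftarrow R(y,x),Y(y)$ this is a constraint on the \emph{already-guessed} state of a child of $v$. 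The accepting states are those of the form $(L,1)$ with $P\notin L$.

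Correctness then reduces to unwinding the first paragraph: an admissible run whose root state lies in $F$ is exactly a labelling $L$ of the required kind, and conversely; so $T\in\lang(\Aut{A}^{\No})$ iff $\AF{Q}(T)=\No$. For the size bound, note $|\idb(\PP)|\le\size{Q}$, hence the number of states is $2^{|\idb(\PP)|+1}=2^{O(\size{Q})}$; by~\eqref{eq:Size-NBTA} the transition relation then has size $O(|S|^3\cdot|\Sigma|)$, and testing a single candidate transition for admissibility takes time polynomial in $\size{Q}$ (one scan over the rules of $\PP$, each check being a constant-size lookup into the child states and the extensional facts). Altogether $\Aut{A}^{\No}$ can be written down in time polynomial in $|\Sigma|\cdot 2^{\size{Q}}$, as claimed.

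The only genuinely non-routine point is that TMNF rules of the form $X(x)\leftarrow R(y,x),Y(y)$ propagate information \emph{downward}, from a node to a child, which a bottom-up automaton cannot recompute once the child has been processed; the remedy above is to have the parent's transition \emph{verify} (and otherwise forbid) the child's already-guessed state. A secondary subtlety is that $\Root$ is a single global fact rather than a locally decidable one, dealt with by the $\textit{top}$-bit together with clause~(a); and, conceptually, one must keep in mind that for the \No-side it suffices to guess \emph{some} closed set rather than the least fixed point --- which is precisely what makes the whole verification local, and which relies on the monotonicity of $\T_\PP$. Everything else is bookkeeping.
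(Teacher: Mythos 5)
Your route is genuinely different from the paper's --- the paper detours through an MSO sentence $\tilde\varphi_Q$ and the standard union/intersection/complementation/projection constructions on NBTAs (arranging matters so that complementation is applied to an automaton with only $O(|\PP|)$ states), whereas you build $\Aut{A}^{\No}$ directly by guessing a pre-fixed point of $\T_\PP$. The semantic core of your argument is sound: the Knaster--Tarski characterisation of $\AF{Q}(T)=\No$ via the existence of a closed superset of $\atoms(\S_b(T))$ omitting $P(\rroot^T)$, the locality of closedness for TMNF programs, and the treatment of the upward rules and of $\Root$ via the $\textit{top}$-bit are all correct and, if anything, more transparent than the paper's detour.

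However, one concrete step fails as written. In the paper's NBTA model a transition at a node $v$ sees only the \emph{states} of $v$'s children and the $\Sigma$-label of $v$ itself; it sees neither the children's labels nor whether the children themselves have children. A TMNF rule of the form $X(x)\leftarrow R(x,y),Y(y)$ is allowed to have an \emph{extensional} unary predicate $Y\in\set{\Label_{\alpha'},\Hnlc,\Hnrc}$ in its body, and its closure condition at $v$ reads: if the $R$-child $v'$ of $v$ satisfies $Y$, then $X\in L(v)$. Your states $(L(v'),\textit{top})$ record only the intensional labelling, so ``the now-known extensional facts \ldots at the relevant child'' are in fact not known at $v$'s transition, and the admissibility test in your clause~(c) cannot be carried out for such rules. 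The fix is routine --- enrich each state by the node's unary extensional type, i.e.\ which of $\Hnlc,\Hnrc$ hold and which (if any) of the letters occurring in $\PP$ is the node's label; this multiplies the state count by only $O(\size{Q})$ and leaves all claimed bounds intact --- but as written the construction is incomplete. (The analogous issue does not arise for rules $X(x)\leftarrow R(y,x),Y(y)$, since there $Y$ is evaluated at $v$ itself, nor for $Y=\Root$ at a child position, whose body is then unsatisfiable.)
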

\begin{proof}
Our proof proceeds as described in the proof sketch given in
Section~\ref{section:QCPofMonadicDatalog}. Let $\PP$ be the
program of $Q$, let $X_1$ be the query predicate of $Q$, and let
$X_1,\ldots,X_n$ be the list of all intensional 
predicates of $\PP$.

\medskip

\noindent
\emph{Step~1: Transform $Q$ into an equivalent monadic
second-order sentence $\varphi_Q$:}
\\
We follow the ``standard construction'' (cf., 
\cite[Proposition~3.3]{GottlobKoch}), which uses the fact that the
result $\T_\PP^\omega(C)$ of a monadic datalog program $\PP$ on a set
$C$ of atomic facts is the \emph{least fixed-point} of the immediate
consequence operator $\T_\PP$ that contains $C$:

For any rule $r$ of $\PP$ of the form $h^r\leftarrow b_1^r, \ldots,b_m^r$,
define the formula
\[
  \psi_r \ \deff \ \
   \forall z_1\cdots \forall z_\ell \ \big(\; (b_1^r\und \cdots \und b_m^r) \impl
   h^r \; \big),
\]
where $z_1,\ldots,z_\ell$ is the list of variables appearing in the
rule $r$. Since $\PP$ is in TMNF, we know that $m=2$ and $\ell\leq
2$. W.l.o.g.\ we can assume that all rules use variables in $\set{z_1,z_2}$.

Let $\textit{SAT}(X_1,\ldots,X_n)$ be the conjunction of the formulas
$\psi_r$ for all rules $r$ in $\PP$, and let
\[
 \varphi_Q \ \deff \quad
 \forall X_1\cdots \forall X_n \ \big( \
 \textit{SAT}(X_1,\ldots,X_n)\impl X_1(\rroot)\ \big).
\]

It is staightforward to verify (see \cite[Proposition~3.3]{GottlobKoch}) that
for any ordered $\Sigma$-labeled binary tree $T$ we have
$\AF{Q}(T)=\Yes$ if, and only if, the tree $T$, expanded by a constant
$\rroot$ interpreted by the tree's root node, satisfies the 
$\MSO$-sentence $\varphi_Q$.

Clearly, $\varphi_Q$ is equivalent to
\ \(
 \forall X_1\cdots \forall X_n \; \big( \,  X_1(\rroot) \, \oder \, 
 \nicht\textit{SAT}(X_1,\ldots,X_n) \, \big)
\).
Furthermore, $\nicht\textit{SAT}$ is equivalent to
$\Oder_{r\in\PP}\nicht\psi_r$; and $\nicht\psi_r$ is equivalent to the formula
$\exists z_1\exists z_2\; (b_1^r\und b_2^r\und \nicht h^r)$, for a TMNF-rule $r$ of the
form $h^r\leftarrow b_1^r,b_2^r$.
In summary, we obtain that $\varphi_Q$ is equivalent to the formula
\[
 \varphi'_Q \ \deff \quad
 \forall X_1\cdots \forall X_n \ \exists z_1\exists z_2 \ \big( \
   X_1(\rroot) \ \oder \ 
   \Oder_{r\in\PP} \big( b_1^r \und b_2^r \und \nicht h^r \big)
 \ \big).
\]

Clearly, for any tree $T$ we have $\AF{Q}(T)=\No$ iff $T$ satisfies
the formula $\nicht\varphi'_Q$, which is equivalent to the formula
\[
 \tilde{\varphi}_Q \ \deff \quad
 \exists X_1\cdots \exists X_n \ \nicht\ \exists z_1\exists z_2 \ \big( \
   X_1(\rroot) \ \oder \ 
   \Oder_{r\in\PP} \big( b_1^r \und b_2^r \und \nicht h^r \big)
 \ \big).
\]

\bigskip

\noindent
\emph{Step~2: Transform $\tilde{\varphi}_Q$ into an equivalent
  NBTA:}\\
We proceed in the same way as in well-known textbook proofs for
B\"uchi's Theorem, resp., the Theorem by Doner and Thatcher and Wright
(stating the equivalence of $\MSO$-definable languages and regular
languages (of finite words and trees, respectively); cf.\ e.g.\ \cite{WThomas-Handbook-survey,FlumGrohe-ParamKompl}):

Based on the formula $\tilde{\varphi}_Q$ we give the construction of
the desired NBTA $\Aut{A}^{\No}$
along the composition of the formula. 

For the induction base, we have to handle quantifier-free formulas occurring in
$\tilde{\varphi}_Q$.
For this, we consider trees over alphabet
$\Sigma_n\deff\Sigma\times\Gamma\times\Gamma'$ for $\Gamma\deff\set{0,1}^n$ and
$\Gamma'\deff\set{0,1}^2$. 
If a node $v$ has label $(\alpha,\gamma,\gamma')$, for
$\gamma=\gamma_1\cdots\gamma_n$ and $\gamma'=\gamma'_1\gamma'_2$, we interpret this
as the information that $v$ has $\Sigma$-label $\alpha$, belongs to
the relation $X_i$ iff $\gamma_i=1$, and is the value of the variable
$z_j$ iff $\gamma'_j=1$ (for $i\in\set{1,\ldots,n}$ and $j\in\set{1,2}$).
We will refer to $\gamma'_j$ (resp., $\gamma_i$ and $\alpha$) as the
$z_j$-component (resp., the $X_i$-component and the $\Sigma$-component) of the label. 

To check that the values in the $z_j$-components of a labeling
indeed represent a variable assignment, we build for each
$j\in\set{1,2}$ an NBTA $\Aut{A}_{z_j}$ that accepts exactly those
$\Sigma_n$-labeled trees where exactly one
node carries a label whose $z_j$-component is 1. 
For example, the NBTA
$\Aut{A}_{z_2}$ can be chosen as $(\Sigma_n,S,\Delta,F)$ with 
$S=\set{s_0,s_1}$, $F=\set{s_1}$, and $\Delta$ consisting of the
transitions
\begin{center}
 $(\#,\#,\beta,s_\nu)$, \ 
 $(s_0,s_0,\beta,s_\nu)$, \
 $(s_0,\#,\beta,s_\nu)$, \
 $(\#,s_0,\beta,s_\nu)$
\end{center}
for all $\nu\in\set{0,1}$ and all labels
$\beta\in\Sigma\times\Gamma\times\set{0,1}\times\set{\nu}$, \ and the transitions
\begin{center}
 $(s_1,\#,\beta,s_1)$, \
 $(\#,s_1,\beta,s_1)$, \
 $(s_1,s_0,\beta,s_1)$, \
 $(s_0,s_1,\beta,s_1)$
\end{center}
for all labels $\beta\in\Sigma\times\Gamma\times\set{0,1}\times\set{0}$.
This automaton performs a bottom-up scan of the tree and remains in state
$s_0$ until it encounters a node whose label has a 1 in its
$z_2$-component. The latter induces a change into state $s_1$. The
automaton gets stuck (i.e., no run exists) if it is in state
$s_1$ and encounters another node whose label has a 1 in its $z_2$-component.  

To check whether an atomic or negated atomic formula $\chi$ (occurring in $\tilde{\varphi}_Q$)
is satisfied by an input
tree, we build an NBTA $\Aut{A}_\chi$ that accepts an input tree $T$ iff $T$
contains, for each variable $z_j$ occurring in $\chi$, a node $v_j$ whose $z_j$-component
is 1, such that the nodes $v_j$ satisfy $\chi$.
If $\chi$ involves a \emph{unary} atom, this can be achieved in a
straightforward way using an automaton with 2 states.
If $\chi$ is a \emph{binary} atom, this is not difficult either.
E.g., if $\chi=\Lc(z_2,z_1)$, the NBTA $\Aut{A}_{\chi}$ 
performs a bottom-up scan of the tree and remains in state $s_0$ until it
encounters a node $v_1$ whose $z_1$-component is labeled 1. The latter
induces a change into state $s_1$. From there on, the automaton either gets stuck, or 
it sees that $v_1$ is the left child of a node $v_2$ whose
$z_2$-component is one. The latter induces a change into an accepting
state $s_2$, which is propagated to the root.

Note that each of the NBTAs constructed so far has at most 3 states
and, according to \eqref{eq:Size-NBTA}, size $O(3^3{\cdot}|\Sigma_n|)= O(|\Sigma_n|)$.

The formula $\tilde{\varphi}_Q$ contains a conjunction $\zeta_r$ of the form
$(b_1^r\und b_2^r\und \nicht h^r)$, for each rule $r\in\PP$. 
We already have available NBTAs $\Aut{A}_{b_1^r}$, $\Aut{A}_{b_2^r}$,
$\Aut{A}_{\nicht h^r}$, $\Aut{A}_{z_1}$, $\Aut{A}_{z_2}$,
each of which has at most 3 states and size
$O(|\Sigma_n|)$. By using the intersection-construction
mentioned in Fact~\ref{fact:NBTA-constructions}, we can build the intersection automaton
$\Aut{A}_{\zeta_r}$ of these five NBTAs. This can be achieved in time
polynomial in $O(|\Sigma_n|)$; and the resulting automaton has
at most $3^5$ states and thus, due to \eqref{eq:Size-NBTA}, size $O(|\Sigma_n|)$.

The quantifier-free part of the formula $\tilde{\varphi}_Q$ is the
disjunction of the formula $X_1(\rroot)$ and the formulas $\zeta_r$, for
all $r\in \PP$. We already have available NBTAs
$\Aut{A}_{X_1(\rroot)}$ and $\Aut{A}_{\zeta_r}$ for each $r\in\PP$.
Using the union-construction mentioned in
Fact~\ref{fact:NBTA-constructions}, we can build the union automaton
$\Aut{A}_{\textit{qf}}$ of these automata. This can be achieved in time 
polynomial in $O(|\PP|{\cdot}|\Sigma_n|)$; and the resulting automaton has
at most $(|\PP|{+}1){\cdot}3^5 = O(|\PP|)$ states and thus, due to
\eqref{eq:Size-NBTA}, size $O(|\PP|^3{\cdot}|\Sigma_n|)$.

Note that $\Aut{A}_{\textit{qf}}$ is an NBTA over alphabet
$\Sigma{\times}\Gamma{\times}\Gamma'$.
We now use the projection-construction mentioned in
Fact~\ref{fact:NBTA-constructions} to build an NBTA $\Aut{A}_{\exists
  z_1\exists z_2}$ accepting the set of all trees of the form
$\Proj_{\Sigma\times\Gamma}(T)$, for $T$ accepted by
$\Aut{A}_{\textit{qf}}$.
The resulting automaton has the same number of states as
$\Aut{A}_{\textit{qf}}$, i.e., $O(|\PP|)$, has size 
$O(|\PP|^3{\cdot}|\Sigma{\times}\Gamma|)= O(|\PP|^3{\cdot}|\Sigma|{\cdot}2^n)$, and can be constructed
in time polynomial in $O(|\PP|^3{\cdot}|\Sigma_n|)$.

Next, we use the complementation-construction mentioned in
Fact~\ref{fact:NBTA-constructions} to build an NBTA
$\Aut{A}_{\nicht}$ which accepts exactly those
trees that are rejected by $\Aut{A}_{\exists z_1\exists z_2}$.
The automaton $\Aut{A}_{\nicht}$ has $2^{O(|\PP|)}$ states and thus, due to \eqref{eq:Size-NBTA}, size
$O(2^{O(|\PP|)}{\cdot}|\Sigma{\times}\Gamma|) =
O(2^{O(|\PP|)}{\cdot}|\Sigma|{\cdot}2^n)$.
It can be constructed in time polynomial in the size of
$\Aut{A}_{\exists z_1\exists z_2}$ and $2^{O(|\PP|)}$, i.e.,
polynomial in $2^{O(|\PP|)}{\cdot}|\PP|^3{\cdot}|\Sigma|{\cdot}2^n$

Finally, we use the projection-construction mentioned in
Fact~\ref{fact:NBTA-constructions} to build an NBTA $\Aut{A}^{\No}$
 accepting the set of all trees of the form
$\Proj_{\Sigma}(T)$, for $T$ accepted by $\Aut{A}_{\nicht}$. The
resulting automaton has the same number of states as
$\Aut{A}_{\nicht}$, i.e., $2^{O(|\PP|)}$ and can be constructed in
time polynomial in the size of $\Aut{A}_{\nicht}$, i.e., polynomial in
$2^{O(|\PP|)}{\cdot}|\Sigma|{\cdot}2^n =
|\Sigma|{\cdot}2^{n+O(|\PP|)} = |\Sigma|{\cdot}2^{O(\size{Q})}$.

It is straightforward to verify that the NBTA $\Aut{A}^{\No}$ accepts
exactly those $\Sigma$-labeled trees $T$ that satisfy the formula
$\tilde{\varphi}_Q$, i.e., those trees $T$ with $\AF{Q}(T)=\No$.
The entire construction of the automaton $\Aut{A}^{\No}$ took time
polynomial in $|\Sigma|{\cdot}2^{\size{Q}}$. This completes the proof
of Proposition~\ref{query2notnbuta}.
\end{proof}

This establishes the ``$\Aut{A}^{\No}$-part'' of 
Step~\eqref{item:Step2} of the agenda described in
Section~\ref{section:QCPofMonadicDatalog}.
By applying to $\Aut{A}^{\No}$ the complementation-construction
mentioned in Fact~\ref{fact:NBTA-constructions}, we obtain an NBTA
$\Aut{A}^{\Yes}$ which accepts exactly the $\Sigma$-labeled trees $T$
with $\AF{Q}(T)=\Yes$. However, the number of states of
$\Aut{A}^{\No}$ is $2^{O(\size{Q})}$, and hence the construction of 
$\Aut{A}^{\Yes}$ takes time polynomial in
$\size{\Aut{A}^{\No}}{\cdot}2^{2^{O(\size{Q})}}$, which is 2-fold
exponential in the size of the query $Q$.

To construct an NBTA equivalent to $\Aut{A}^{\Yes}$ within 1-fold
exponential time, we use a different automata model, described in the
next subsection.

\subsection{Step~\eqref{item:Step2}: 2-way alternating tree automata (2ATA)}

In this subsection we recall the notion (cf., e.g.,
\cite{CGKV,Vardi98,Maneth10TypeChecking}) of 2-way alternating tree 
automata (2ATA), and show that a Boolean monadic datalog query $Q$ on
binary trees can be translated, within polynomial time, into a 2ATA
$\ATAA^{\Yes}$ which accepts exactly those binary trees $T$ for
which $\AF{Q}(T)=\Yes$.
The following definitions concerning 2ATAs are basically taken
from \cite{CGKV,Vardi98}. 

\bigskip

For navigating in a binary tree $T$ we consider the 
operations $\Up,\Stay,\DownLeft,\DownRight$. They are viewed as
functions from $V^T_\bot$ to $V^T_\bot$ where
$V^T_\bot=V^T\cup\set{\bot}$ for the node set $V^T$ of
$T$ and a symbol $\bot$ not in $V^T$. Each of the operations in
$\Op\deff\set{\Up,\Stay,\DownLeft,\DownRight}$ maps $\bot$ to
$\bot$. Furthermore, for each node $v$ of $T$, we have 
$\Stay(v)=v$, while
$\Up(v)$ is the parent of $v$ in $T$ (resp.\ $\bot$, in case that
$v$ is the root of $T$), and
$\DownLeft(v)$ is the left child of $v$ in $T$ (resp.\ $\bot$, in
case that $v$ has no left child), and
$\DownRight(v)$ is the right child of $v$ in $T$ (resp.\ $\bot$, in
case that $v$ has no right child).

Let $M$ be a set. The set $\B^+(M)$ of \textit{positive Boolean
  formulas over $M$} contains 
all elements in $M$, 
and is closed under $\und$ and $\oder$.
For a set $M'\subseteq M$ and a formula $\theta\in \B^+(M)$, we say
that \emph{$M'$ satisfies $\theta$} iff 
assigning $\True$ to elements in $M'$ and $\False$ to elements in
$M\setminus M'$ makes $\theta$ true.

A \textit{two-way alternating tree automaton} (2ATA, for short)
$\ATAA$ is specified by a tuple $(\Sigma,S,s_0,\delta,F)$, where 
\begin{mi}
    \item $\Sigma$ is a finite non-empty alphabet,
    \item $S$ is a finite set of states,
    \item $s_0 \in S$ is the \emph{initial} state,
    \item $F \subseteq S$ is the set of \emph{accepting} states, and    
    \item $\delta: S \times \Sigma \to \B^+(S {\times} \Op)$ is the \emph{transition function}.
\end{mi}

\noindent 
As input, $\ATAA$ receives a $\Sigma$-labeled binary tree $T$.
It starts in the initial state $s_0$ at $T$'s root node.
Whenever $\ATAA$ is in a state $s\in S$ and currently visits a node $v$ of $T$ of label
$\alpha\in\Sigma$, it can either choose to stop its computation, or to
perform a further step in which the formula $\theta\deff \delta(s,\alpha)$
determines what is done next: the automaton nondeterministically
guesses a satisfying assignment for $\theta$, i.e., a set
$\set{\,(s_1,o_1),\ldots,(s_k,o_k)\,}$ (for some $k\geq 1$)
which satisfies $\theta$. Then, it starts $k$ independent copies of
$\ATAA$, namely a copy which
starts in state $s_i$ at node $o_i(v)$, for each
$i\in\set{1,\ldots,k}$.
In case that $o_i(v)=\bot$, the according automaton stops.
The acceptance condition demands that for every situation $(s,v)$ in
which the automaton stops, $s$ must be an accepting state.

This can be formalised by the following notion of a \emph{run} $R$, where 
the label $(s,o,v)$ of a node $w$ of $R$ denotes a transition into
state $s$ via the operation $o$ onto node $v$.

A \emph{run} of $\ATAA$ on a $\Sigma$-labeled binary tree $T$ is
a finite unordered \emph{unranked} $\Gamma$-labeled tree $R$, for
$\Gamma\deff S\times \Op \times V^T_\bot$,
which satisfies the following conditions:
\begin{me}
 \item 
   The root of $R$ is labeled with $(s_0,\Stay,\rroot^T)$, where $s_0$ is
   the initial state
   and $\rroot^T$ is the root of $T$.
 \item
   If $w$ is a node of $R$ that is labeled $(s,o,v)$ with $v=\bot$,
   then $w$ is a \emph{leaf} of $R$.
 \item 
   If $w$ is a node of $R$ that is labeled $(s,o,v)$ such that $v$
   is a node of $T$, and
   $w'$ is a child of $w$ in $R$ that is labeled $(s',o',v')$, then
   $v'=o'(v)$.
 \item 
   If $w$ is a node of $R$ that is labeled $(s,o,v)$ such that $v$
   is a node of $T$ labeled $\alpha\in\Sigma$, and $w$ has
   exactly $k$ children labeled
   $(s_1,o_1,v_1),\ldots,(s_k,o_k,v_k)$, then the formula
   $\theta\deff\delta(s,\alpha)$ is satisfied by the set
   $\set{\,(s_1,o_1),\ldots,(s_k,o_k)\,}$. 
\end{me} 

A run $R$ of $\ATAA$ on $T$ is \emph{accepting} if every leaf of $R$ is labeled with an
accepting state, i.e.: whenever $(s,o,v)$ is the label of a leaf of
$R$, we have $s\in F$.
The automaton $\ATAA$ \emph{accepts} the tree $T$ if there exists an
accepting run of $\ATAA$ on $T$.
The \emph{tree language} $\lang(\ATAA)$ is the set of all ordered
$\Sigma$-labeled binary trees $T$ that are accepted by $\ATAA$.

The \emph{size} $\size{\ATAA}$ of a 2ATA
$\ATAA$ is defined as the length of a reasonable repesentation of the tuple
$(\Sigma,S,S_0,\delta,F)$.  

It is known that 2ATAs accept exactly the same tree languages as
NBTAs, i.e., the \emph{regular} tree languages. Furthermore, there is
a 1-fold exponential algorithm that translates a 2ATA into an
equivalent NBTA:

\begin{Theorem}[Cosmadakis et al.\ \cite{CGKV}]\label{Thm:2ATAtoNBTA}\label{wata2nbuta}
 For every 2ATA $\ATAA$, an NBTA $\Aut{A}$ with
 $\lang(\Aut{A})=\lang(\ATAA)$ can be constructed within time 1-fold
 exponential in $\size{\ATAA}$.
\end{Theorem}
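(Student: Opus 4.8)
The plan is to remove the two non-classical features of $\ATAA$ --- upward moves and alternation --- one after the other, blowing up the machine only polynomially in the first step and by a single exponential in the second. Concretely, I would first convert $\ATAA$ into an equivalent \emph{one-way} alternating tree automaton $\ATAA'$ that uses only the operations $\Stay,\DownLeft,\DownRight$ and has $O(\size{\ATAA})$-many states, and then convert $\ATAA'$ into an NBTA $\Aut{A}$ by the textbook simulation of alternation by nondeterminism. This yields $2^{O(\size{\ATAA})}$-many states for $\Aut{A}$ and a construction that runs in time 1-fold exponential in $\size{\ATAA}$. Both steps are somewhat simpler than in the parity setting, since here a run is a \emph{finite} tree that is accepting iff all its leaves carry accepting states, so there is no infinitary acceptance condition to propagate.

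\textbf{Eliminating two-wayness.} The key observation is that in a \emph{tree} an upward move is reversible in a bounded way: a branch of a run that leaves the subtree $T_v$ rooted at a node $v$ by moving up to $v$'s parent $u$ and later descends again can only re-enter $T_u$, and everything happening to it between leaving $T_v$ upward and next descending strictly below $v$ can be summarised by a binary relation on the state set --- the set of pairs $(p,q)$ such that $\ATAA$, launched at $u$ in state $p$, has a strategy confined to the context above $v$ (possibly using $v$'s sibling subtree) that brings it back down to $v$ in state $q$. The automaton $\ATAA'$ works by \emph{guessing}, at each node, such summary relations for its two child subtrees and passing the guessed information downward, while simultaneously \emph{verifying} that the subtree below each child realises at least the behaviour guessed for it; excursions that would visit $v$'s sibling are routed through $u$ as extra alternating branches, which is exactly where alternation is used. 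Crucially, a guess ``the summary contains the pairs in $\Lambda$'' is realised not by storing $\Lambda$ in a state of $\ATAA'$ --- that would already make $\ATAA'$ exponential --- but by a nondeterministic choice in the transition function followed by a universal branching over the pairs in $\Lambda$. Hence $\ATAA'$ needs only states of the form $s$ and $(s,q)$ with $s,q\in S$ (``in state $s$, obliged to exit upward in state $q$''), i.e.\ $O(|S|^2)$ of them, and it can be built in polynomial time. Correctness, $\lang(\ATAA')=\lang(\ATAA)$, follows by a routine induction on the structure of runs that decomposes an accepting run of $\ATAA$ into the per-node pieces just described.

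\textbf{Eliminating alternation.} This is the classical construction (see e.g.\ \cite{tata2008}): a one-way top-down alternating tree automaton with state set $S'$ becomes an NBTA $\Aut{A}$ with state set $2^{S'}$ that scans the input bottom-up; the state it assigns to a node $v$ is the set of states in which an accepting run visits $v$, and the transition relation just checks local consistency --- for $v$ labelled $\alpha$ and every $s$ in the set guessed at $v$, the positive Boolean formula $\delta'(s,\alpha)$ must be satisfied by those pairs $(s',o')$ whose target $o'(v)$ lies in the set guessed at the corresponding neighbour, where a move to an absent child is required to use an accepting state and the root's guessed set is required to contain the initial state. Composing the two steps gives $\Aut{A}$ with $2^{O(|S|^2)}=2^{O(\size{\ATAA})}$ states, constructed in time $2^{O(\size{\ATAA})}$; fully detailed proofs of the underlying automata facts are in \cite{CGKV,Vardi98,Maneth10TypeChecking}.

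\textbf{Main obstacle.} I expect the two-wayness elimination to be the hard part, for two reasons: upward excursions out of a subtree may dip into the \emph{sibling} subtree, so the summaries of the two children of a node are mutually dependent and must be reconciled at the parent; and the summary information must be carried by branching in the transition function rather than by states, since otherwise $\ATAA'$ would already be exponential and $\Aut{A}$ double-exponential. Preserving finiteness (well-foundedness) of runs through both translations also needs a little care, but comes for free here in the absence of an infinitary acceptance condition.
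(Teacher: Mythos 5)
First, be aware that the paper does not prove this theorem at all: it is imported as a known result, with the explicit remark that a proof sketch appears in \cite{CGKV} and that detailed proofs of more general statements are in \cite{Vardi98,Maneth10TypeChecking}. So the comparison is against those standard constructions, which go \emph{directly} from the 2ATA to a nondeterministic automaton whose states are exponential-size ``annotations'' (relations on $S$ guessed at each node, summarising which upward excursions the surrounding context realises), checked by purely local consistency conditions.

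Your sketch instead factors the construction through a one-way \emph{alternating} automaton $\ATAA'$ with only $O(|S|^2)$ states, and this intermediate step is where the gap lies. The information a downward-travelling branch at a node $v$ must possess is the \emph{set} $U\subseteq S$ (or relation $\Lambda\subseteq S\times S$) of exit states that the context of $v$ has committed to handle: a single transition $\delta(s,\alpha)$ of $\ATAA$ may universally demand upward moves in several states simultaneously, so per-pair obligations of the form $(s,q)$ do not capture the required behaviour, and distinct branches of an alternating run cannot communicate to reconcile the exits asserted inside a subtree with the exits handled above it. Your proposed remedy --- realising the guess of $\Lambda$ ``by a nondeterministic choice in the transition function followed by a universal branching over the pairs in $\Lambda$'' --- either pushes $\Lambda$ into the transition formulas as a disjunction over all $2^{|S|^2}$ candidate relations (so $\ATAA'$ is no longer of polynomial size and is not constructible in polynomial time, contrary to what you claim), or else fails to transmit $\Lambda$ to the subtree that is supposed to realise it. This is precisely the obstacle you yourself name in the final paragraph, but naming it is not resolving it; the proofs in \cite{CGKV,Vardi98,Maneth10TypeChecking} sidestep it by paying the single exponential at exactly this point, i.e., by letting the \emph{states} of the target NBTA $\Aut{A}$ be the guessed summaries. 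Your second step (subset-style elimination of alternation for one-way automata on finite trees) is standard and fine, but it rests on the unestablished first step.
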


To be precise, \cite{CGKV} formulated the theorem not in terms of the
running time, but only in terms of the \emph{size} of the generated
NBTA. A proof sketch of the theorem can be found in \cite{CGKV}; detailed proofs of
more general results can be found in \cite{Vardi98,Maneth10TypeChecking}.

\bigskip

Our next goal is to find a polynomial-time algorithm which translates a Boolean
monadic datalog query $Q$ in TMNF into an equivalent 2ATA $\ATAA$
which accepts exactly those trees $T$ with $\AF{Q}(T)=\Yes$.

To construct such a 2ATA, we will exploit the
striking similarity between runs of 2ATAs and
\emph{proof trees} characterising the semantics of datalog (cf., the
textbook \cite{AHV}).
For constructing the desired 2ATA, the following observation
will be very convenient:

Let $Q$ be a Boolean $\mDatalog(\taubSigma)$-query whose program is in
TMNF, and let $\PP$ and $P$ be the program and the query predicate of $Q$.
For a $\Sigma$-labeled binary tree $T$ with root node $\rroot^T$ we have $\AF{Q}(T)=\Yes$ iff
there exists a proof tree $\PT$ for the fact $P(\rroot^T)$, such that the
leaves of the proof tree are labeled with facts in $\atoms(\S_b(T))$.
Note that, for the particular case of TMNF-programs, such a proof tree
$\PT$ has the following properties:

\begin{mi}
 \item
   The \emph{root} of $\PT$ is labeled with the atomic fact
   $P(\rroot^T)$.
 \item
   Each \emph{leaf} of $\PT$ is labeled with an atomic fact of one of the
   following forms:
   \begin{mi}
   \item
     $\Label_\alpha(v)$ where $\alpha\in\Sigma$ and $v$ is a node of
     $T$ labeled $\alpha$,
   \item 
     $\Root(\rroot^T)$, where $\rroot^T$ is the root of $T$,
   \item
     $\Hnlc(v)$ (resp., $\Hnrc(v)$), where $v$ is a node of $T$ that
     has no left child (resp., has no right child)
   \item
     $\Lc(v_1,v_2)$ (resp., $\Rc(v_1,v_2)$), where $v_2$ is the left
     (resp., right) child of $v_1$ in $T$.
   \end{mi}
 \item
   Each non-leaf node of $\PT$ is labeled with a fact
   $X(v)$ where $v$ is a node of $T$ and $X\in\idb(\PP)$.
 \item
   Every non-leaf node $w$ of $\PT$ has exactly 2 children $w_1$ and
   $w_2$. If $w$ is labeled by an atomic fact $X(v)$, then $\PP$
   contains a rule $r$ whose head is of the form $X(x)$, and the
   following is true:
   \begin{mea}
    \item 
     If the body of $r$ is of the form $Y(x),Z(x)$, then $w_1$ is
     labeled $Y(v)$ and $w_2$ is labeled $Z(v)$.
    \item
     If the body of $r$ is of the form $\Lc(x,y),Y(y)$ 
     then node $v$ of $T$ has a left child $v'$, and in
     $\PT$ the nodes $w_1$ and $w_2$ are labeled with the facts 
     $\Lc(v,v')$ and $Y(v')$.

     Accordingly, if the body of $r$ is of the form $\Rc(x,y),Y(y)$ 
     then node $v$ of $T$ has a right child $v'$, and in
     $\PT$ the nodes $w_1$ and $w_2$ are labeled with the facts 
     $\Rc(v,v')$ and $Y(v')$.

    \item
     If the body of $r$ is of the form $\Lc(y,x),Y(y)$,
     then node $v$ of $T$ is the left child of its parent $v'$, and in
     $\PT$ the nodes $w_1$ and $w_2$ are labeled with the facts 
     $\Lc(v',v)$ and $Y(v')$.

     Accordingly, if the body of $r$ is of the form $\Rc(y,x),Y(y)$,
     then node $v$ of $T$ is the right child of its parent $v'$, and in
     $\PT$ the nodes $w_1$ and $w_2$ are labeled with the facts 
     $\Rc(v',v)$ and $Y(v')$.
   \end{mea}
\end{mi}
We will build a 2ATA for which an accepting run $R$ on an input tree $T$
precisely corresponds to a proof tree $\PT$ for the fact
$P(\rroot^T)$.
To better cope with technical details in the automaton construction,
we will consider automata which receive input trees that are labeled by
the extended alphabet $\hat{\Sigma}$, with
\[
  \hat{\Sigma} \ \deff \ \
  \Sigma \times 2^{\set{\,\Root,\ \Hnlc,\ \Hnrc,\ \Islc,\ \Isrc\,}}.
\]
With every $\Sigma$-labeled binary tree $T$ we associate a
$\hat{\Sigma}$-labeled binary tree $\hat{T}$ that is obtained from
$T$ by replacing the label of each node $v$ labeled $\alpha\in\Sigma$
with the label $(\alpha,I)$ where
$I\subseteq\set{\,\Root,\ \Hnlc,\ \Hnrc,\ \Islc,\ \Isrc\,}$ is given as follows:
\begin{eqnarray*}
   \Root\in I &\iff & \text{$v$ is the root of $T$},
\\
   \Hnlc\in I &\iff & \text{$v$ is a node of $T$ that has no left
     child},
\\
   \Hnrc\in I &\iff & \text{$v$ is a node of $T$ that has no right child},
\\
   \Islc\in I & \iff & \text{$v$ is the left child of its parent $v'$
     in $T$},
\\
   \Isrc\in I &\iff & \text{$v$ is the right child of its parent $v'$ in $T$}.
\end{eqnarray*}

We are now ready for this subsection's key result:

\begin{Proposition} \label{query22wata}  \xdef\queryWata{\thesection} \xdef\queryWataDef{\theDef} 
Let $\Sigma$ be a finite alphabet and let $Q$ be a Boolean
$\mDatalog(\taubSigma)$-query whose program is in TMNF.
Within time polynomial in the size of $Q$ and $\Sigma$, we can
construct a 2ATA $\ATAA$ such that for all
$\Sigma$-labeled binary trees $T$, the automaton $\ATAA$
accepts the tree $\hat{T}$ if, and only if, $\AF{Q}(T)=\Yes$.
\end{Proposition}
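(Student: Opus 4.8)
The plan is to build $\ATAA$ so that its accepting runs on $\hat T$ are, essentially, the proof trees $\PT$ for $P(\rroot^T)$ described just above the statement, where $\PP$ and $P$ denote the program and the query predicate of $Q$. As state set of $\ATAA$ I would take $\idb(\PP)$ together with a fresh initial state $s_0$, an accepting sink $s_\top$, a rejecting sink $s_\bot$, and a ``check'' state $\mathrm{chk}_E$ for every unary extensional predicate $E$ occurring in $\PP$ (i.e.\ $E$ among the $\Label_\alpha$, $\Root$, $\Hnlc$, $\Hnrc$); the only accepting state is $s_\top$. The states $s_\top$ and $s_\bot$ serve as the Boolean constants inside transition formulas: a copy that reaches $s_\top$ simply stops and forms an accepting leaf, whereas a copy that reaches $s_\bot$ can never lie on a finite accepting run, so $(s_\top,\Stay)$ and $(s_\bot,\Stay)$ emulate $\True$ and $\False$. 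The extended alphabet $\hat\Sigma$ is used precisely so that, reading the $I$-component of the current label, $\ATAA$ can decide locally whether the current node is the root, has a left/right child, or is itself a left/right child of its parent; hence I set $\delta(\mathrm{chk}_E,(\alpha,I))\deff(s_\top,\Stay)$ if the atom $E$ holds at a node with label $(\alpha,I)$ and $\deff(s_\bot,\Stay)$ otherwise, and $\delta(s_0,\cdot)\deff\delta(P,\cdot)$.

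For $X\in\idb(\PP)$ the transition $\delta(X,(\alpha,I))$ is the disjunction, over all TMNF-rules $r\in\PP$ whose head has the form $X(x)$, of a subformula $\theta_r$ encoding ``the body of $r$ fires at the current node''. Writing $\hat Y$ for $Y$ if $Y\in\idb(\PP)$ and for $\mathrm{chk}_Y$ if $Y$ is extensional, I put: for $r=\big(X(x)\leftarrow Y(x),Z(x)\big)$, $\theta_r\deff(\hat Y,\Stay)\und(\hat Z,\Stay)$; for $r=\big(X(x)\leftarrow \Lc(x,y),Y(y)\big)$, $\theta_r\deff(\hat Y,\DownLeft)$ if $\Hnlc\notin I$ and $\theta_r\deff(s_\bot,\Stay)$ otherwise; for $r=\big(X(x)\leftarrow \Rc(x,y),Y(y)\big)$, symmetrically with $\Hnrc$ and $\DownRight$; for $r=\big(X(x)\leftarrow \Lc(y,x),Y(y)\big)$, $\theta_r\deff(\hat Y,\Up)$ if $\Islc\in I$ and $\theta_r\deff(s_\bot,\Stay)$ otherwise; and for $r=\big(X(x)\leftarrow \Rc(y,x),Y(y)\big)$, symmetrically with $\Isrc$. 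The point is that a binary edb atom such as $\Lc(v,v')$, which is a leaf of $\PT$, is absorbed into the navigation step, which reaches the intended neighbour exactly when the corresponding structural fact holds. From this description $\ATAA$ has $O(\size{Q}+|\Sigma|)$ states over the $O(|\Sigma|)$-element alphabet $\hat\Sigma$, each transition formula has size $O(\size{Q})$, so $\ATAA$ has size polynomial in $\size{Q}$ and $|\Sigma|$ and can be written down in polynomial time.

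It then remains to prove the equivalence. For the direction ``$\AF{Q}(T)=\Yes\Rightarrow\ATAA$ accepts $\hat T$'', I would take a proof tree $\PT$ for $P(\rroot^T)$ with all leaves in $\atoms(\S_b(T))$ and turn it into a run $R$: relabel every inner node of $\PT$ carrying $X(v)$ by $(X,o,v)$, where $o$ is the operation by which $\ATAA$ reaches it from the parent's chosen disjunct (the root of $\PT$, carrying $P(\rroot^T)$, becomes the root $(s_0,\Stay,\rroot^T)$ of $R$); replace every $\PT$-leaf carrying a unary edb atom $E(v)$ by the node $(\mathrm{chk}_E,o,v)$ with the single child $(s_\top,\Stay,v)$; and delete every $\PT$-leaf carrying a binary edb atom. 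Using that $\PT$ is a legal proof tree one checks that at each inner node of $R$ the set of its children satisfies the corresponding disjunct of $\delta$ (in particular the chosen rule's structural precondition on $I$ is met), and that every leaf of $R$ carries $s_\top\in F$; hence $R$ is accepting. Conversely, given an accepting run $R$ of $\ATAA$ on $\hat T$: since $R$ is finite and all leaves carry $s_\top$, no node of $R$ carries $s_\bot$; reading $R$ from the root, each inner node $(X,o,v)$ with $X\in\idb(\PP)$ exhibits, through the disjunct of $\delta(X,\lambda^{\hat T}(v))$ satisfied by its children, a rule $r\in\PP$ with head $X(x)$ that fires at $v$, the subtrees below $\mathrm{chk}_E$-nodes certify that the corresponding unary edb atoms hold in $\S_b(T)$, and the validity of the moves certifies the binary edb atoms; assembling these produces a proof tree for $P(\rroot^T)$ over $\atoms(\S_b(T))$, whence $\AF{Q}(T)=\Yes$.

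I expect the main difficulty to be bookkeeping rather than a new idea: getting $\theta_r$ right for each of the three TMNF rule shapes crossed with ``body predicate intensional vs.\ extensional'', matching these against the exact list of proof-tree properties recorded before the proposition, and --- in the correctness argument --- handling the boundary features of the 2ATA model, namely the dummy node $\bot$, the convention that a copy may stop at any node (so that only a leaf carrying a non-accepting state can spoil a run), and the use of $s_\top,s_\bot$ to simulate $\True$ and $\False$ inside positive Boolean formulas. The conceptual core, that runs of alternating automata coincide with datalog proof trees, is --- as noted right before the statement --- essentially folklore, so no genuinely new ingredient should be needed beyond this careful translation.
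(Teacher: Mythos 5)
Your proposal is correct and follows essentially the same route as the paper's proof: a 2ATA whose states are the intensional predicates (plus accept/reject sinks playing the role of the Boolean constants $\True$/$\False$, which $\B^+(S\times\Op)$ lacks), whose transition formulas are disjunctions over the TMNF-rules with matching head, and whose accepting runs on $\hat{T}$ are in bijective correspondence with proof trees for $P(\rroot^T)$. The only (cosmetic) difference is that you discharge the structural side-conditions ($\Hnlc\notin I$, $\Islc\in I$, etc.) by letting $\delta$ depend on the $I$-component of the letter at the current node, whereas the paper spawns auxiliary copies in states $\Islc$, $\Isrc$, $\Root$, $\Hnlc$, $\Hnrc$ that verify these facts against the label of the node they land on; both are sound and yield the same polynomial bounds.
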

\begin{proof}
Let $\PP$ and $P$ be the program and the query predicate of $Q$.
We construct the automaton $\ATAA$ in such a way that a proof
tree $\PT$ for the fact $P(\rroot^T)$ can be easily be turned into an accepting
run of $\ATAA$ on $\hat{T}$ (and vice versa).

The state set $S$ of the
$\ATAA=(\hat{\Sigma},S,s_0,\delta,F)$ is chosen as the set all 
intensional predicates of $\PP$, all
unary relation symbols in $\taubSigma$, and additionally, we use
states called $\Islc$, $\Isrc$, $\Accept$, and $\Reject$. I.e.,
\[
\begin{array}{ll}
  S \ = \ \ 
&
  \set{\Accept,\ \Reject} 
  \ \cup \ 
  \idb(\PP) 
  \ \cup \
\\[1ex] 
&
  \setc{\Label_\alpha}{\alpha\in\Sigma} 
  \ \cup \  
  \set{\,\Root,\ \Hnlc,\ \Hnrc,\ \Islc,\ \Isrc\,}.
\end{array}
\]
The query predicate $P$ is the initial state, and $\Accept$ is the
only accepting state. I.e., $s_0\deff P$ and $F\deff\set{\Accept}$.
\\
The transition function
$\delta:S\times\hat{\Sigma}\to\B^+(S{\times}\Op)$
is chosen as follows:
\\
Let $\beta=(\alpha,I)$ be an arbitrary letter in $\hat{\Sigma}$. 
We let
\[
  \delta(\Accept,\beta) \deff (\Accept,\Stay)
  \qquad \text{und}\qquad
  \delta(\Reject,\beta) \deff (\Reject,\Stay).
\]
For every $\alpha'\in\Sigma$ we let
\[
   \delta(\Label_{\alpha'},\beta) \deff \left\{
     \begin{array}{cl}
       (\Accept,\Stay) & \text{ if } \alpha'=\alpha \\[1ex]
       (\Reject,\Stay) & \text{ otherwise}.
     \end{array}
   \right.
\]
For every $X\in\set{\,\Root,\ \Hnlc,\ \Hnrc,\ \Islc,\ \Isrc\,}$ we let
\[
  \delta(X,\beta) \deff \left\{
    \begin{array}{cl}
       (\Accept,\Stay) & \text{ if } X\in I \\[1ex]
       (\Reject,\Stay) & \text{ otherwise}.
    \end{array}
  \right.
\]

\noindent
For the case that $X\in\idb(\PP)$, the formula $\delta(X,\beta)$ is
specified as follows.
We let $\PP_X$ be the set of all rules of $\PP$ whose head is of the form $X(x)$, and we choose
\[
  \delta(X,\beta) \deff \Oder_{r\in\PP_X} \theta_r,
\]
where the formula
$\theta_r\in\B^+(S{\times}\Op)$ is
chosen as indicated in the following table:

\begin{center}
\begin{tabular}{lll}
 rule $r$ of the form & \hspace{5mm} & conjunction $\theta_r$
 \\[0.5ex] \hline \\[-1ex]
 $X(x)\leftarrow Y(x),Z(x)$     & & $(Y,\Stay)\;\und\; (Z,\Stay)$ \\[1ex]
 $X(x)\leftarrow \Lc(x,y),Y(y)$ & & $(\Islc,\DownLeft) \;\und\; (Y,\DownLeft)$ \\[1ex]
 $X(x)\leftarrow \Rc(x,y),Y(y)$ & & $(\Isrc,\DownRight) \;\und\; (Y,\DownRight)$ \\[1ex]
 $X(x)\leftarrow \Lc(y,x),Y(y)$ & & $(\Islc,\Stay)\;\und\;(Y,\Up)$ \\[1ex]
 $X(x)\leftarrow \Rc(y,x),Y(y)$ & & $(\Isrc,\Stay)\;\und\;(Y,\Up)$ \\[1ex]
\end{tabular}
\end{center}

\medskip

Clearly, this automaton $\ATAA$ can be constructed in time polynomial
in the size of $\Sigma$ and $Q$.
It remains to verify that, indeed, for any $\Sigma$-labeled binary
tree $T$ we have $\AF{Q}(T)=\Yes$ $\iff$ $\ATAA$ accepts $\hat{T}$.

\medskip

For the ``$\Longrightarrow$''-direction, let $\PT$ be a proof tree for the
the fact $P(\rroot^T)$. We can transform $\PT$ into a run $R$ of
$\ATAA$ on $\hat{T}$ as follows: Assign the
new label $(P,\Stay,\rroot^T)$ to the root node of $\PT$.
For each non-leaf node $w$ of $\PT$ note that $w$ is originally
labeled by an atomic fact $X(v)$ with
$X\in\idb(\PP)$, and $w$ has exactly two children $w_1,w_2$ in $\PT$.
\begin{mea}
 \item 
   If $w_1,w_2$ are labeled $Y(v),Z(v)$, then assign to node $w_1$ the
   new label $(Y,\Stay,v)$ and to node $w_2$ the new label
   $(Z,\Stay,v)$.
 \item
   If $w_1,w_2$ are labeled $\Lc(v,v'),Y(v')$, then 
   assign to node $w_1$ the new label $(\Islc,\DownLeft,v')$ and to node
   $w_2$ the new label $(Y,\DownLeft,v')$.
   Furthermore, we add to $w_1$ a new child labeled $(\Accept,\Stay,v)$.
   
   We proceed analogously in case that $w_1,w_2$ is labeled $\Rc(v,v'),Y(v')$. 
 \item
   If $w_1,w_2$ are labeled $\Lc(v',v),Y(v')$, then 
   assign to $w_1$ the new label
   $(\Islc,\Stay,v)$, and to node $w_2$ the new label
   $(Y,\Up,v')$. 
   Furthermore, we add to $w_1$ a new child labeled $(\Accept,\Stay,v)$.
   
   We proceed analogously in case that $w_1,w_2$ is labeled $\Rc(v',v),Y(v')$. 
\end{mea}
Finally, for each leaf $w$ of $\PT$ that was originally labeled
$X(v)$ for an $X\in\set{\Root,\Hnlc,\Hnrc} \cup
\setc{\Label_\alpha}{\alpha\in\Sigma}$, we add a new child $w_1$ 
that receives the new label $(\Accept,\Stay,v)$.

It is straightforward to verify 
that the obtained tree $R$ is an
accepting run of $\ATAA$ on $\hat{T}$.

\medskip

For the direction ``$\Longleftarrow$'' let $R$ be an accepting run of
$\ATAA$ on $\hat{T}$. Along the definition of $\delta$ it is
straightforward to see that we can assume w.l.o.g.\  that each node of
$R$ has at most 2 children.

The run $R$ can be turned into a proof tree
$\PT$ for the fact $P(\rroot^T)$ (i.e., 
witnessing that $\AF{Q}(T)=\Yes$) as follows:
Consider each node $w$ of $R$, and let $(s,o,v)$ be the label of node
$w$.

Since $R$ is an \emph{accepting} run and $\Accept$ is the only
accepting state, we know by the construction of $\delta$ that $s\neq \Reject$, and that $v\neq\bot$ if
$s\neq\Accept$. 
In case that $s\in\taubSigma\cup\idb(\PP)$, we assign to $w$ the new
label ``$s(v)$''.

In case that $s=\Label_{\alpha'}$ for an $\alpha'\in\Sigma$, we know by
the construction of $\delta$ and 
the fact that $R$ is an \emph{accepting} run, that node $w$ has a
unique child $w_1$ in $R$, and this node $w_1$ is labeled
with $(\Accept,\Stay,v)$.
Furthermore, we know by the construction of $\delta$ that $\alpha'=\alpha$
where $\beta=(\alpha,I)$ is the label of node $v$ in $\hat{T}$. Thus, the
statement ``$\Label_{\alpha'}(v)$'' is true for node $v$ in $T$.
Hence, we delete the node $w_1$ (and all nodes in the subtree
rooted at $w_1$).

In case that $s\in\set{\Root,\Hnlc,\Hnrc,\Islc,\Isrc}$, we know by the construction of $\delta$ and
the fact that $R$ is an \emph{accepting} run, that node $w$ has a
unique child $w_1$ in $R$, and this node $w_1$ is labeled
with $(\Accept,\Stay,v)$.
Furthermore, we know by the construction of $\delta$ that $s\in I$,
where $\beta=(\alpha,I)$ is the label of node $v$ in $\hat{T}$. Thus, the
statement ``$s(v)$'' is true for node $v$ in $T$.
Hence, we delete the node $w_1$ (and all nodes in the subtree
rooted at $w_1$).
\\
In case that $s\in\set{\Root,\Hnlc,\Hnrc}$, the node $w$ then is a
leaf, labeled with an atomic fact ``$s(v)$'' that is true in $T$.
\\
In case that $s=\Islc$, the statement ``$\Islc(v)$'' is a true
statement, but it is not suitable as label in a proof tree, since the
predicate $\Islc$ does not belong to the schema $\taubSigma$. 
Therefore, we replace the label ``$\Islc(v)$'' by the label
``$\Lc(v',v)$'' where $v'$ is the parent of $v$ in $T$.
We proceed analogously in case that $s=\Isrc$.

It is straightforward to verify 
that the obtained tree $\PT$ is a proof tree for $P(\rroot^T)$.
This completes the proof of Proposition~\ref{query22wata}.
\end{proof}

Finally, we are ready for establishing the second part of 
Step~\ref{item:Step2} of the agenda described in
Section~\ref{section:QCPofMonadicDatalog}.

\begin{Proposition}\label{prop:mDatalog2Ayes}
 Let $\Sigma$ be a finite alphabet and let $Q$ be a Boolean
 $\mDatalog(\taubSigma)$-query whose program is in TMNF.
 Within time 1-fold exponential in the size of $Q$ and $\Sigma$, we
 can construct an NBTA $\Aut{A}^{\Yes}$, which accepts exactly those
 ordered $\Sigma$-labeled binary trees $T$ where $\AF{Q}(T)=\Yes$.
\end{Proposition}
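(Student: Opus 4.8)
The plan is to combine Proposition~\ref{query22wata} with Theorem~\ref{wata2nbuta}, and then to handle the bookkeeping required to pass from the extended alphabet $\hat\Sigma$ back to $\Sigma$. First I would apply Proposition~\ref{query22wata} to obtain, in time polynomial in $\size{Q}$ and $|\Sigma|$, a 2ATA $\ATAA$ over the alphabet $\hat\Sigma$ such that for every ordered $\Sigma$-labeled binary tree $T$ we have $\AF{Q}(T)=\Yes$ if, and only if, $\ATAA$ accepts $\hat T$. Next I would apply Theorem~\ref{wata2nbuta} to $\ATAA$, which yields, within time $1$-fold exponential in $\size{\ATAA}$ and hence within time $1$-fold exponential in $\size{Q}$ and $|\Sigma|$, an NBTA $\Aut{B}$ over $\hat\Sigma$ with $\lang(\Aut{B})=\lang(\ATAA)$.

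It remains to turn $\Aut{B}$, which runs on $\hat\Sigma$-labeled trees, into an NBTA over $\Sigma$. Recall that $\hat\Sigma=\Sigma\times 2^{\{\Root,\,\Hnlc,\,\Hnrc,\,\Islc,\,\Isrc\}}$ and that the map $T\mapsto\hat T$ is injective, its image consisting exactly of those $\hat\Sigma$-labeled binary trees $T'$ that are \emph{consistent}, meaning that the set-component of every node's label correctly records whether that node is the root, has no left child, has no right child, is the left child of its parent, and is the right child of its parent. Since all of these are local properties --- they depend only on a node, on whether it is the root, and on its at most two children (which a bottom-up automaton inspects at the parent) --- there is an NBTA $\Aut{C}$ over $\hat\Sigma$ with a constant number of states, constructible in time $O(|\hat\Sigma|)=O(|\Sigma|)$, that accepts exactly the consistent trees. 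Using the intersection construction of Fact~\ref{fact:NBTA-constructions} I would form $\Aut{B}'\deff\Aut{B}\cap\Aut{C}$; this takes time polynomial in $\size{\Aut{B}}$ and $\size{\Aut{C}}$, hence again $1$-fold exponential in $\size{Q}$ and $|\Sigma|$, and $\lang(\Aut{B}')$ is precisely the set of trees $\hat T$ with $\AF{Q}(T)=\Yes$.

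Finally I would apply the projection construction of Fact~\ref{fact:NBTA-constructions} to $\Aut{B}'$ along the map $\Proj_\Sigma:\hat\Sigma\to\Sigma$, obtaining in polynomial time an NBTA $\Aut{A}^{\Yes}$ over $\Sigma$ with $\lang(\Aut{A}^{\Yes})=\{\,\Proj_\Sigma(T')\,:\,T'\in\lang(\Aut{B}')\,\}$. Since $\Proj_\Sigma(\hat T)=T$ for every ordered $\Sigma$-labeled binary tree $T$, and since every member of $\lang(\Aut{B}')$ has the form $\hat T$, this gives $\lang(\Aut{A}^{\Yes})=\{\,T\,:\,\AF{Q}(T)=\Yes\,\}$, as required; and the whole construction runs in time $1$-fold exponential in $\size{Q}$ and $|\Sigma|$ (the number of states of $\Aut{A}^{\Yes}$ being that of $\Aut{B}$, i.e.\ $2^{\poly(\size{Q},|\Sigma|)}$).

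The genuinely hard part --- the exponential passage from a two-way alternating automaton to an ordinary bottom-up tree automaton \emph{without} an intervening doubly-exponential blow-up --- is already encapsulated in Theorem~\ref{wata2nbuta} (together with the polynomial-time translation of $Q$ into a 2ATA provided by Proposition~\ref{query22wata}). The only point requiring care in the above is verifying that $\Aut{C}$ captures \emph{exactly} the image of $T\mapsto\hat T$, so that intersecting with $\Aut{C}$ and then projecting $\Aut{B}'$ neither discards trees $\hat T$ with $\AF{Q}(T)=\Yes$ nor spuriously adds $\Sigma$-labeled trees whose $\AF{Q}$-value is $\No$; this in turn relies only on the observation, already recorded before Lemma~\ref{lemma:unranked2binary} and in the definition of $\hat T$, that each of the five predicates in question is determined locally by the binary-tree structure.
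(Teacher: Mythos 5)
Your proposal is correct and follows essentially the same route as the paper's own proof: apply Proposition~\ref{query22wata} to get a 2ATA over $\hat\Sigma$, convert it to an NBTA via Theorem~\ref{wata2nbuta}, intersect with a small automaton recognising exactly the consistent $\hat\Sigma$-labeled trees (i.e.\ the image of $T\mapsto\hat T$), and project back to $\Sigma$. The only differences are cosmetic (naming of the intermediate automata), and your explicit remark that the consistency check must capture \emph{exactly} the image of $T\mapsto\hat T$ is the same point the paper relies on implicitly.
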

\begin{proof}
First, we use Proposition~\ref{query22wata} to construct, within
polynomial time, a 2ATA $\ATAA$ such that for all ordered binary
$\Sigma$-labeled trees $T$, the automaton $\hat{A}$ accepts the
$\hat{\Sigma}$-labeled tree $\hat{T}$ if, and only if,
$\AF{Q}(T)=\Yes$.

Now, we use Theorem~\ref{Thm:2ATAtoNBTA} to construct, within time
1-fold exponential in $\size{\ATAA}$ (i.e., 1-fold exponential in the
size of $Q$ and $\Sigma$), an NBTA $\Aut{A}$ with
$\lang(\Aut{A})=\lang(\ATAA)$. 

Note that $\Aut{A}$ operates on
$\hat{\Sigma}$-labeled trees, while we are looking for an NBTA $\Aut{A}^{\Yes}$
operating on $\Sigma$-labeled trees.
To obtain such an automaton, we proceed as follows:

Let $\Aut{B}$ be an NBTA of alphabet
$\hat{\Sigma}$ which accepts exactly those $\hat{\Sigma}$-labeled
trees $T'$ for which there exists a $\Sigma$-labeled tree $T$ such
that $T'=\hat{T}$ (building such an NBTA is straightforward: the
automaton just needs to check that the $\hat{\Sigma}$-labels correctly
identify the root node, the nodes that are left (right) children, and
the nodes that have no left (right) child).

Using the intersection-construction mentioned in
Fact~\ref{fact:NBTA-constructions}, we can build the intersection automaton
$\Aut{A}'$ of $\Aut{B}$ and $\Aut{A}$. I.e., $\Aut{A}'$ accepts a
$\hat{\Sigma}$-labeled tree $T'$ iff there exists a $\Sigma$-labeled
tree $T$ such that $T'=\hat{T}$, and $\hat{T}$ is accepted by
$\Aut{A}$.

Finally, we use the projection-construction described in
Fact~\ref{fact:NBTA-constructions} to obtain an NBTA $\Aut{A}^{\Yes}$ over
alphabet $\Sigma$, such that $\lang(\Aut{A}^{\Yes})=\setc{\Proj_{\Sigma}(T')}{T'\in\lang(\Aut{A}')}$.
Thus, $\Aut{A}^{\Yes}$ accepts a tree $T$ $\iff$ $\hat{T}$ is accepted by
$\Aut{A}$ $\iff$ $\AF{Q}(T)=\Yes$. 

Since the intersection- and projection-constructions
can be performed within time polynomial in the size of its input
NBTAs, the entire construction of $\Aut{A}^{\Yes}$ takes time at most
1-fold exponential in the size of $Q$ and $\Sigma$.
\end{proof}

\subsection{Step~\eqref{item:Step3}: Finishing the proof of Theorem~\ref{Thm:orderedInExptime}}

\begin{proofof}{Theorem~\ref{Thm:orderedInExptime}}
Our goal is to show that the QCP for unary
$\mDatalog(\tauGK^{\Child})$-queries on ordered trees belongs to
$\EXPTIME$.

Let $\Sigma$, $Q_1$, $Q_2$ be an input for the QCP.
Let $\Sigma'\deff\Sigma\times\set{0,1}$.
By using Proposition~\ref{prop:BinaryTreesSufficeForQCP-upperBound}
we obtain, within linear time, Boolean $\mDatalog(\taubAlph{\Sigma'})$-queries
$Q'_1$ and $Q'_2$ such that $Q_1\subseteq Q_2$ iff $Q'_1\subseteq
Q'_2$, and the programs of $Q'_1$ and $Q'_2$ are in TMNF.

By using Proposition~\ref{prop:mDatalog2Ayes},
we can construct, within time 1-fold exponential in the size of $Q'_1$
and $\Sigma'$, an NBTA $\Aut{A}_1^{\Yes}$, which accepts exactly those
$\Sigma'$-labeled binary trees $T$ where $\AF{Q'_1}(T)=\Yes$.

By using Proposition~\ref{query2notnbuta},
we can construct, within time 1-fold exponential in the size of $Q'_2$
and $\Sigma'$, an NBTA $\Aut{A}_2^{\No}$, which accepts exactly those
$\Sigma'$-labeled binary trees $T$ where $\AF{Q'_2}(T)=\No$.

Now, we use the intersection-construction mentioned in
Fact~\ref{fact:NBTA-constructions} to build the intersection-automaton
$\Aut{B}$ of $\Aut{A}_1^{\Yes}$ and $\Aut{A}_2^{\No}$.
Clearly, $\Aut{B}$ accepts a $\Sigma'$-labeled binary tree $T$ if, and
only if,
$\AF{Q'_1}(T)=\Yes$ and $\AF{Q'_2}(T)=\No$.

Finally, we use the emptiness-test provided by
Fact~\ref{fact:emptiness} to check whether $\lang(\Aut{B})=\emptyset$.
Clearly, this is the case if, and only if, $Q'_1\subseteq Q'_2$, wich
in turn is true iff $Q_1\subseteq Q_2$.

Since the intersection-construction and the emptiness test take only
time polynomial in the size of the input automata, the entire
algorithm for checking whether $Q_1\subseteq Q_2$ runs in time 1-fold
exponential in the size of $\Sigma$, $Q_1$, and $Q_2$. This completes
the proof of Theorem~\ref{Thm:orderedInExptime}
\end{proofof}

\clearpage
\section{Dealing with the descendant-axis: Proof of Theorem~\ref{Thm:usingDesc}}\label{appendix:GettingRidOfDesc}
The aim of this appendix is to prove the following:

\setcounter{restoreAppTheorem}{\value{theorem}}
\setcounter{theorem}{\value{Counter_Thm:usingDesc}}

\begin{Theorem} \textbf{(restated)}
  The QCP for unary $\mDatalog(\tau_u^{\Root,\Leaf,\Desc})$ on
  unordered trees and for unary $\mDatalog(\tauGK^{\Child,\Desc})$ can
  be solved in 2-fold exponential time.
\end{Theorem}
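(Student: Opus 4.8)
The plan is to reduce the problem, at the cost of an exponential blow-up of the query size, to the descendant-free case handled in Theorem~\ref{Thm:orderedInExptime} and Corollary~\ref{Cor:unorderedInExptime}. Given unary $\mDatalog(\tau_u^{\Root,\Leaf,\Desc})$-queries (resp.\ $\mDatalog(\tauGK^{\Child,\Desc})$-queries) $Q_1$ and $Q_2$, I would first rewrite each $Q_i$ into an \emph{equivalent} unary query $Q'_i$ over the schema $\tau_u^{\Root,\Leaf}$ (resp.\ $\tauGK^{\Child}$) that no longer mentions $\Desc$; since equivalence is preserved, $Q_1\subseteq Q_2$ iff $Q'_1\subseteq Q'_2$. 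I would then feed $Q'_1$ and $Q'_2$ into the $\EXPTIME$-algorithm of Corollary~\ref{Cor:unorderedInExptime} (resp.\ Theorem~\ref{Thm:orderedInExptime}). If the rewriting blows up the queries by at most a single exponential and leaves the alphabet $\Sigma$ essentially unchanged, the overall running time is 1-fold exponential in a 1-fold exponential quantity, hence 2-fold exponential, as claimed. By contrast, the direct route via the first approach in the proof of Theorem~\ref{Thm:orderedInExptime} --- translating $Q_1$ to an $\MSO$-sentence and then, through two exponentials, to the automaton $\Aut{A}_1^{\Yes}$ --- would only give a 3-fold exponential bound.

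The heart of the argument, and its main obstacle, is this $\Desc$-elimination. For it I would rely on the techniques of Gottlob and Koch~\cite{GottlobKoch} and of Gottlob, Koch and Schulz~\cite{GottlobKochSchulz}, which show how to transform a monadic datalog program over trees that may use the descendant axis into an equivalent one over the child axis (and, for ordered trees, the sibling axes) alone, at the price of an at most exponential increase in program size. The basic mechanism is that a body atom $\Desc(x,y)$ in which $y$ is reached ``from below'' can be simulated by a fresh unary intensional predicate $P^{\uparrow}$ together with the propagation rules $P^{\uparrow}(x)\leftarrow\Child(x,z),P(z)$ and $P^{\uparrow}(x)\leftarrow\Child(x,z),P^{\uparrow}(z)$ (symmetrically when the first argument of the $\Desc$-atom is the one reached from below, while global ``somewhere in the tree'' conditions are routed to all nodes through the $\Root$-predicate); correctness follows from the least-fixed-point semantics of $\T_\PP$. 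The delicate point is the treatment of rule bodies in which several $\Desc$-atoms share variables: these must first be brought into a suitable normal form by a case analysis along the lines of~\cite{GottlobKoch,GottlobKochSchulz}, and it is exactly this decomposition that can cost an exponential factor. Carrying out this transformation and bounding its blow-up is the technically demanding part of the proof.

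It then remains to check that the pieces compose to the claimed bound. The $\Desc$-elimination produces queries of size $2^{\poly(\size{Q_1}+\size{Q_2})}$ over the same alphabet $\Sigma$; applying Corollary~\ref{Cor:unorderedInExptime} (resp.\ Theorem~\ref{Thm:orderedInExptime}) to queries of this size takes time 1-fold exponential in it, i.e.\ $2^{2^{\poly(|\Sigma|+\size{Q_1}+\size{Q_2})}}$, which is 2-fold exponential. The ordered and the unordered cases are structurally identical; the only difference is the target schema of the $\Desc$-elimination, and in both cases the subsequent conversion to TMNF --- and, for ordered trees, the replacement of $\Child$ by $\Fc,\Ns$ --- is already carried out inside the algorithms of Theorem~\ref{Thm:orderedInExptime} and Corollary~\ref{Cor:unorderedInExptime}.
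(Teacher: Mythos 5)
Your proposal matches the paper's proof: the paper likewise eliminates $\Desc$ at the cost of a single exponential blow-up (Lemma~\ref{Lemma:GettingRidOfDesc}, built from an acyclicity normal form in the style of \cite{GottlobKochSchulz}, the rule decomposition of \cite{GottlobKoch}, and the two-rule $\Child$-simulation of $\Desc$-atoms you describe) and then invokes Theorem~\ref{Thm:orderedInExptime}, with the unordered case subsumed via $\tau_u^{\Root,\Leaf,\Desc}\subseteq\tauGK^{\Child,\Desc}$. The only detail you gloss over is that the cited rewriting of \cite{GottlobKochSchulz} must be extended to rule bodies that also contain the $\Fc$-predicate, which is exactly what the paper's Lemma~\ref{Lemma:GKS} supplies.
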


\setcounter{theorem}{\value{restoreAppTheorem}}

Note that $\tau_u^{\Root,\Leaf,\Desc}\subseteq\tauGK^{\Child,\Desc}$. 
Thus, to prove Theorem~\ref{Thm:usingDesc}, it suffices to provide a
2-fold exponential algorithm for the QCP for unary
\linebreak[4]
$\mDatalog(\tauGK^{\Child,\Desc})$-queries on ordered trees.

Upon input of two $\mDatalog(\tauGK^{\Child,\Desc})$-queries $Q_1$ and
$Q_2$, our algorithm proceeds as follows:
First, we transform $Q_1$ and $Q_2$ into equivalent queries $Q'_1$ and
$Q'_2$ that do \emph{not} contain the $\Desc$-predicate.
Afterwards, we use the algorithm provided by
Theorem~\ref{Thm:orderedInExptime} to decide whether $Q'_1\subseteq
Q'_2$.
Thus, Theorem~\ref{Thm:usingDesc} is an immediate consequence of  
Theorem~\ref{Thm:orderedInExptime} and the following Lemma~\ref{Lemma:GettingRidOfDesc}.

\begin{Lemma}\label{Lemma:GettingRidOfDesc}
For every $\mDatalog(\tauGK^{\Child,\Desc})$-query $Q$ there is an
equivalent \linebreak[4] $\mDatalog(\tauGK^{\Child})$-query $Q'$, which can be
computed in 1-fold exponential time.
\end{Lemma}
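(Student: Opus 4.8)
The plan is to remove the $\Desc$-atoms from the rule bodies of the program $\PP$ of $Q=(\PP,P)$ one at a time, introducing only new \emph{unary} intensional predicates, so that the outcome is again a monadic datalog query of schema $\tauGK^{\Child}$ with the same query predicate $P$. Since the rules of a program act independently and the rewriting of a single rule introduces no new extensional predicate, it suffices to show how to replace one rule $r:\ H(x_0)\leftarrow b_1,\ldots,b_m$ that uses $\Desc$ by a set of $\Desc$-free rules (over possibly new unary intensional predicates) so that replacing $r$ this way preserves the query. The reason this step costs an exponential blow-up --- in contrast to the linear blow-up of the rewriting of $\Child$ through $\Fc,\Ns$ in \cite{GottlobKoch} --- is that $\Desc$ is not functional: in a rule body, two variables that are forced by $\Desc$-atoms to share a common descendant, or a common ancestor, must be mapped onto one root-to-leaf path, but the program cannot recover which of the two sits higher up without a case distinction.

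Fix $r$ and set $Z=\Var(r)$. For any valuation $\beta$ satisfying the body of $r$ in some ordered tree, the relative positions of the nodes $\beta(z)$, $z\in Z$, are completely described by the following data: which variables are mapped to the same node; for each pair of variables, whether one is mapped to a strict ancestor of the other or the two are incomparable; and, for incomparable pairs among common siblings, the sibling order induced by $\Fc,\Ns,\Ls$. I would call a realisable such description a \emph{configuration} $\kappa$ of $r$; there are only singly exponentially many of them, and each can be enumerated and checked for realisability in some ordered tree in polynomial time. The point is that, once $\kappa$ is fixed, the body of $r$ together with $\kappa$ determines a single \emph{rigid} tree pattern on $Z$: some variables are identified; some ordered pairs are linked by a parent-child edge coming from a $\Child$-, $\Fc$- or $\Ns$-atom (distance exactly one); some are linked by a strict-descendant edge coming from a $\Desc$-atom or from $\kappa$'s ancestry information (distance at least one); and the remaining pairs are declared incomparable. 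If $\kappa$ is inconsistent with some body atom --- an endpoint equality where the body demands strict descent, a wrong orientation, a clash of sibling order, and the like --- the rule for $\kappa$ is discarded.

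For each surviving configuration $\kappa$ I would then produce polynomially many $\Desc$-free rules realising that rigid pattern. Cut the pattern along its strict-descendant edges; what remains is a disjoint collection of rigid $\tauGK^{\Child}$-patterns --- in each of them every edge is distance-exact, so one variable determines all the others --- joined only by strict-descendant links. Extract each such sub-pattern $\Pi$, rooted at a variable $y$, as a fresh unary intensional predicate by a rule $Y_\Pi(y)\leftarrow[\text{the atoms of }\Pi]$, which is itself a $\tauGK^{\Child}$-rule and can be turned into proper monadic datalog rules by the tree-marking normal form of \cite{GottlobKoch}; and implement each strict-descendant link by an auxiliary unary predicate defined in $\mDatalog(\tauGK^{\Child})$ through the standard recursions $Y^{\downarrow}(x)\leftarrow\Child(x,y),Y(y)$, $Y^{\downarrow}(x)\leftarrow\Child(x,y),Y^{\downarrow}(y)$, and dually $Y^{\uparrow}$ for links that are followed upward in the navigation. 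Letting $\PP'$ be $\PP$ with each $\Desc$-using rule replaced by the union, over all of its configurations, of the rules so obtained, we get the required query. Correctness holds because every valuation satisfying the body of $r$ in a tree induces exactly one configuration $\kappa$ and then satisfies the $\kappa$-rules, whereas conversely any derivation assembled from the new rules unfolds into one of $r$ --- here one uses the standard fact that the least-fixpoint semantics makes $Y^{\downarrow}$ and $Y^{\uparrow}$ compute precisely descendant- resp.\ ancestor-reachability in the final model, even when $Y$ is itself intensional and mutually recursive with them. The number of configurations, hence of new rules and new predicates, is $2^{\poly(\size{Q})}$, and the whole construction runs in time $2^{\poly(\size{Q})}$, which is 1-fold exponential as required.

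The step I expect to be the main obstacle is making the ``cutting along strict-descendant edges'' precise and, above all, proving its completeness: one must verify that, for a fixed configuration, the rigid pattern genuinely decomposes into one-variable sub-patterns joined only by strict-descendant links, so that the $Y^{\downarrow}/Y^{\uparrow}$-simulation faithfully reproduces \emph{all} satisfying valuations of $r$ and not merely some of them. This needs a detailed analysis of how $\Desc$-, $\Child$-, $\Fc$- and $\Ns$-atoms can interleave along a path and among siblings --- exactly the structural analysis of conjunctive queries over trees carried out in \cite{GottlobKochSchulz}, which is the second ingredient the construction draws on. Bounding the blow-up is by comparison routine: there are only singly exponentially many configurations and the tree-marking normal form invoked on each residual pattern is linear, so nothing compounds into a double exponential.
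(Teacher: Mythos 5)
Your overall strategy --- a singly exponential case analysis over the possible relative positions of a rule's variables in a tree, followed by the standard $\Child$-recursion that simulates each remaining strict-descendant constraint through a fresh unary predicate --- is at bottom the same idea the paper uses, but you organise it globally (enumerate complete configurations of a rule) where the paper works locally: it repeatedly picks a variable $z$ with two incoming body atoms $R(x,z)$, $S(y,z)$ and splits the rule according to the finitely many ways such a join can be realised in a tree (Lemma~\ref{Lemma:GKS}, extending \cite[Theorem~6.6]{GottlobKochSchulz}), until every rule graph is acyclic; then \cite[Lemma~5.8]{GottlobKoch} (Lemma~\ref{remark_2}) decomposes each acyclic rule into two-atom rules, and Fact~\ref{Fact:TMNF-Desc2Child} eliminates $\Desc$ exactly as in your $Y^{\downarrow}/Y^{\uparrow}$ recursions.

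As it stands, however, your argument has a genuine gap, and it is the one you yourself flag: you never establish that, for a fixed realisable configuration $\kappa$, the pattern really decomposes into $\Desc$-free pieces joined by strict-descendant links \emph{in a forest-like way}, which is what both soundness and completeness of the unary-predicate simulation require. This is not automatic: two pieces can be joined by more than one descendant link (e.g.\ a variable $z$ with $\Desc(z,x)$ and $\Desc(z,y)$ for siblings $x,y$ lying in the same piece), and one must argue that all but one of these links are implied by the rest and may be dropped --- equivalently, that the Hasse diagram of the ancestry order induced by $\kappa$ is a forest, using that in a tree the ancestors of any node form a chain. (Your side claim that within a piece ``one variable determines all the others'' also fails for $\Child$-atoms, since $\Child(x,y)$ does not determine which child $y$ is, though that particular point is harmless for extracting the piece as a unary predicate.) The paper closes exactly this gap with the explicit six-way case analysis on pairs $R,S\in\set{\Fc,\Ns,\Child,\Desc}$ in the proof of Lemma~\ref{Lemma:GKS}; so your plan is workable, but the combinatorial lemma you defer to \cite{GottlobKochSchulz} is the heart of the proof rather than a routine verification, and until it is carried out the argument is incomplete.
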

The remainder of Appendix~\ref{appendix:GettingRidOfDesc} is devoted
to the proof of Lemma~\ref{Lemma:GettingRidOfDesc}.
\\
The proof proceeds in three steps:
\begin{description}
 \item[Step 1:]
   Gottlob, Koch, and Schulz \cite[Theorem~6.6]{GottlobKochSchulz} showed that
   every conjunctive query using the axes $\Child$, $\Desc$,
   $\Ns$ can be rewritten, in 1-fold 
   exponential time, into an equivalent union of \emph{acyclic}
   conjunctive queries.
   We extend their result to monadic datalog rules that may also contain the
   $\Fc$-relation (see Lemma~\ref{Lemma:GKS} below).
 \item[Step 2:]
   Afterwards, we use a result of \cite{GottlobKoch} which shows that
   every \emph{acyclic} conjunctive query can be rewritten, in linear
   time, into a monadic datalog program that is ``almost'' in TMNF
   (see Lemma~\ref{remark_2} below).
 \item[Step 3:]
   Finally, observe that each TMNF-rule which uses the $\Desc$-relation can be
   replaced (in constant time) by two suitable rules using $\Child$ (see
   Fact~\ref{Fact:TMNF-Desc2Child}).
\end{description}

Step~3 is established by the following obvious fact:

\begin{Fact}\label{Fact:TMNF-Desc2Child} \ \\
Over trees, 
the rule
\ $
   X(x) \leftarrow \Desc(x,y),Y(y)
$ \ 
is equivalent to the rules
\begin{align*}
   X(x) &\leftarrow \Child(x,y),Y(y)\\  
   X(x) &\leftarrow \Child(x,y),X(y) . 
\end{align*} 
Similarly, 
the rule
\ $
   X(x) \leftarrow \Desc(y,x),Y(y)
$ \ 
is equivalent to the rules
\begin{align*}
   X(x) &\leftarrow \Child(y,x),Y(y)  \\  
   X(x) &\leftarrow \Child(y,x),X(y) . 
\end{align*} 
\end{Fact}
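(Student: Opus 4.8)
The plan is to prove Fact~\ref{Fact:TMNF-Desc2Child} by unwinding the least-fixpoint semantics of the rules involved; it is just the familiar datalog encoding of transitive closure, specialised to the observation that on a tree $T$ the extensional relation $\Desc^T$ is exactly the irreflexive transitive closure of $\Child^T$. First I would make the statement precise: when we say that $r\colon X(x)\leftarrow\Desc(x,y),Y(y)$ is \emph{equivalent} to the pair $r_1\colon X(x)\leftarrow\Child(x,y),Y(y)$ and $r_2\colon X(x)\leftarrow\Child(x,y),X(y)$, the intended reading is that in any program $\PP$ in which $r$ is the only rule with head predicate $X$, the program $\PP'$ obtained by replacing $r$ with $r_1$ and $r_2$ satisfies $\T_{\PP'}^\omega(\atoms(\S(T)))=\T_{\PP}^\omega(\atoms(\S(T)))$ for every tree $T$, so that the associated queries coincide. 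In the application in Appendix~\ref{appendix:GettingRidOfDesc} this hypothesis is met because in the programs produced by Step~2 each $\Desc$-rule is the unique rule for its head predicate; this is also the single point at which some care is needed, as I note at the end.

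Granting this reading, I would argue via the proof-tree characterisation of datalog semantics recalled just before Proposition~\ref{query22wata}: it suffices to show that a fact has a proof tree with respect to $\PP$ all of whose leaves lie in $\atoms(\S(T))$ if and only if it has such a proof tree with respect to $\PP'$. From $\PP$ to $\PP'$: wherever $r$ is used to derive $X(x)$, its two children in the proof tree are a leaf $\Desc(x,y)$ and the root of a subtree deriving $Y(y)$; choosing any child-path $x=z_0,z_1,\dots,z_k=y$ with $k\geq1$ that witnesses $(x,y)\in\Desc^T$, one replaces this single step by the chain that applies $r_1$ at $z_{k-1}$ (from the leaf $\Child(z_{k-1},z_k)$ and the reused subtree for $Y(z_k)$) and then $r_2$ at $z_{k-2},\dots,z_0$ in turn. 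From $\PP'$ to $\PP$: since $r_1$ and $r_2$ are now the only rules for $X$, every node of the proof tree labelled by an $X$-fact uses $r_1$ or $r_2$; following the $X$-child of each $r_2$-step down to the $r_1$-step that terminates the chain reads off a child-path $z_0,z_1,\dots,z_k$ together with a subtree deriving $Y(z_k)$, and since $(z_0,z_k)\in\Desc^T$ the fact $\Desc(z_0,z_k)$ is available as a leaf, so the whole chain collapses to a single application of $r$ deriving $X(z_0)$. Each rewrite is local, preserves the root label, and keeps all leaves inside $\atoms(\S(T))$; the dual statement for $X(x)\leftarrow\Desc(y,x),Y(y)$ is proved verbatim, reading ``parent'' for ``child'' and ``ancestor'' for ``descendant''.

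The main, and essentially only, obstacle is this precise hypothesis. Because $r_2$ propagates $X$ from a node to its parent, if $X$ were also derivable in $\PP$ by some rule other than $r$ --- say a base rule $X(x)\leftarrow\Label_a(x)$ --- then $\PP'$ would in addition mark every ancestor of an $X$-node, which $r$ alone never does, and the two programs would genuinely differ. I would therefore state the Fact with the proviso that $r$ is the unique rule defining $X$; equivalently, one may read it as an assertion about the relation that $r$, respectively $r_1$ and $r_2$ jointly, define for $X$ relative to a fixed interpretation of $Y$. Under that reading the contraction/expansion argument sketched above is entirely routine, which is why the Fact is fairly called ``obvious''.
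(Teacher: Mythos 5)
The paper offers no proof of Fact~\ref{Fact:TMNF-Desc2Child} at all --- it is introduced with the words ``the following obvious fact'' and left unargued --- so there is nothing to match your argument against; what you supply is strictly more than the paper does. Your proof-tree argument is correct: the contraction of an $r_2$-chain terminating in an $r_1$-step into a single $\Desc$-step, and the converse expansion along the (unique) child-path witnessing $\Desc(x,y)$, is exactly the standard transitive-closure unwinding, and it establishes the Fact under the reading you make explicit, namely that the two rule sets define the same relation for $X$ relative to a fixed interpretation of $Y$ (equivalently, that the least fixpoints agree when $r$ is the only rule with head $X$).

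Your caveat is not pedantry but a genuine observation that the paper glosses over. As literally stated, ``the rule is equivalent to the rules'' suggests a drop-in replacement inside an arbitrary program, and that is false: if $\PP$ also contains, say, $X(x)\leftarrow \Label_b(x),\Label_b(x)$, then the recursive rule $X(x)\leftarrow\Child(x,y),X(y)$ propagates $X$ to \emph{every} ancestor of a $b$-labelled node, which the $\Desc$-rule never does, so $\T_{\PP'}^\omega$ strictly exceeds $\T_{\PP}^\omega$ on suitable trees. The application in the proof of Lemma~\ref{Lemma:GettingRidOfDesc} performs exactly such a syntactic replacement on the program $\PP_2$ produced by Lemma~\ref{remark_2}, and nothing there guarantees that a $\Desc$-rule is the unique rule for its head predicate; the intended repair (route the recursion through a fresh auxiliary predicate $X_r$ and add $X(x)\leftarrow X_r(x),X_r(x)$) is easy and preserves all complexity bounds, but it is your proviso, not the paper's text, that makes the step sound. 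So: your proof is correct, it is the proof the paper should have given, and the hypothesis you isolate should be stated explicitly in the Fact or discharged by introducing fresh predicates where it is applied.
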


For Steps~1 and 2, let us recall the notion of
\emph{acyclic} queries considered in
\cite{GottlobKoch,GottlobKochSchulz}.
Let $\tau$ be a schema consisting of relations of arity at most 2.
Let $r$ be a rule of a monadic datalog query of schema $\tau$.
The \emph{directed rule graph} $G_r$ is the multigraph whose vertex
set is the set of variables of $r$, and where for each binary atom of
the form $R(x,y)$ occurring in the rule's body, there is a directed
edge $e_R$ from node $x$ to node $y$. 
The \emph{shadow} of $G_r$ is the undirected multigraph obtained from
$G_r$ by ignoring the edge directions.
We say that $r$ contains 
a \emph{directed cycle} if the multigraph $G_r$ contains a directed
cycle.
Accordingly, $r$ contains an \emph{undirected cycle} if the shadow of
$G_r$ contains a cycle. A rule is called \emph{acyclic} if it does not
contain an undirected cycle;
an $\mDatalog(\tau)$-program is \emph{acyclic} if all its rules are acyclic.

Step~2 of our agenda is provided by the following lemma.

\begin{Lemma}[{\cite[Lemma~5.8]{GottlobKoch}}]\label{remark_2}
 Let $r$ be an acyclic 
 monadic datalog rule over relations that are
 either unary or binary. Then, $r$ can be decomposed in linear time
 into a monadic datalog program in which each rule is one of the
 three forms 
 \[
   X(x) \leftarrow R(y,x),Y(y) 
   \qquad
   X(x) \leftarrow  R(x,y),Y(y) 
   \qquad
   X(x)\leftarrow Y(x), Z(z)
 \]  
   where $x$ (resp., $Y$) may but does not have to be different from
   $z$ (resp., $Z$).
\end{Lemma}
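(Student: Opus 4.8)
\begin{proofsketch}
The plan is to exploit acyclicity in the textbook fashion for acyclic conjunctive queries: regard the body of $r$ as a join forest and evaluate it bottom-up with the help of fresh unary predicates, so that every step of the evaluation turns into a rule of one of the three permitted shapes.

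First I would record that, since $r$ is acyclic, the shadow of its directed rule graph $G_r$ is a \emph{simple} forest (no cycle at all, hence no two parallel edges and no self-loop). Write $X_0(x)$ for the head of $r$; let $C_0$ be the connected component of $G_r$ containing the head variable $x$ (with $C_0\deff\set{x}$ if $x$ occurs in no binary body atom), and let $C_1,\dots,C_p$ be the remaining components. Every unary body atom $U(y)$ is reused as a unary predicate, and whenever several unary predicates $U_1(y),\dots,U_q(y)$ have to be conjoined at a variable $y$ I would add the chain $A_2(y)\leftarrow U_1(y),U_2(y)$, $A_3(y)\leftarrow A_2(y),U_3(y)$, \dots, $A_q(y)\leftarrow A_{q-1}(y),U_q(y)$, each of the third form with both body variables equal to $y$; a variable carrying no unary atom is padded with a unary predicate that holds at every node (in our labelled-tree setting obtainable from the $\Label_\alpha$, $\alpha\in\Sigma$). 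Thus at every variable the relevant unary constraint resides in a single predicate.

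Next I would root each component $C_j$ at $x$ (for $j=0$) or at an arbitrary vertex $w_j$ (for $j\geq 1$) and process it bottom-up. For a non-root vertex $y$ with parent $p$ there is exactly one binary atom between $y$ and $p$ (no parallel edges!), say $R(p,y)$ or $R(y,p)$, and I would introduce a fresh unary predicate $P_y$ meant to hold of a node $v$ iff the subtree rooted at $y$ has a satisfying assignment sending $y$ to $v$: it is the conjunction of the unary predicate at $y$ with the predicates produced, for each child $c$ of $y$ with connecting atom $R_c$, by a rule $R^{(c)}(y)\leftarrow R_c(y,c),P_c(c)$ (second form) or $R^{(c)}(y)\leftarrow R_c(c,y),P_c(c)$ (first form). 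Doing this at the roots yields predicates $P_{w_1},\dots,P_{w_p}$ and $P_x$, and I would close off with the chain $D_1(x)\leftarrow P_x(x),P_{w_1}(z_1)$, $D_2(x)\leftarrow D_1(x),P_{w_2}(z_2)$, \dots, $X_0(x)\leftarrow D_{p-1}(x),P_{w_p}(z_p)$ using pairwise distinct fresh variables $z_j\neq x$ (each of the third form), or simply $X_0(x)\leftarrow P_x(x),P_x(x)$ if $p=0$. Since each original atom spawns a constant number of new rules and the construction is a single traversal of $G_r$, it runs in linear time.

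Correctness would be a routine induction over the rooted forest, precisely the Yannakakis semijoin argument: for every finite $\tau$-structure and every node $v$, the fact $P_y(v)$ lies in the program's least fixpoint iff some valuation of the variables in $y$'s subtree maps $y$ to $v$ and satisfies all atoms of $r$ located there; instantiating this at the roots shows that $X_0(x)$ is derived exactly when all components can be satisfied simultaneously with $x$ mapped appropriately, i.e.\ exactly when $r$ fires. I expect the only real difficulty to be bookkeeping rather than mathematics: keeping every emitted rule within the three permitted shapes forces multi-way conjunctions to be unfolded into chains of binary rules, and forces the ``$\exists$'' over the components $C_1,\dots,C_p$ not containing $x$ to be realised by the third rule form with $z\neq x$ --- which is exactly the extra latitude the statement deliberately grants. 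Acyclicity is what makes this bottom-up scheme well defined, since it is what guarantees that $G_r$ is a forest.
\end{proofsketch}
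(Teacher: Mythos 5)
The paper offers no proof of this lemma at all: it is imported verbatim as \cite[Lemma~5.8]{GottlobKoch}, so there is nothing internal to compare against. Your construction is the standard one (and, as far as one can tell, the argument behind the cited result): read the acyclic body as a join forest, evaluate it bottom-up with one fresh unary predicate per vertex, chain multi-way conjunctions into binary steps of the third form, and discharge the components not containing the head variable via third-form rules with $z\neq x$. The correctness induction (the Yannakakis semijoin argument) and the linear-time bound are both sound. Two small points deserve a remark. First, your claim that the shadow of $G_r$ is a \emph{simple} forest relies on reading ``cycle'' in the multigraph sense, so that parallel edges and self-loops are already excluded by acyclicity; the paper's definition of the shadow as a multigraph supports this reading, and in the paper's use of the lemma (after Lemma~\ref{Lemma:GKS}) such configurations have been eliminated anyway. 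Second, padding a variable that carries no unary atom genuinely requires a universally true unary predicate, since every permitted rule form has a unary body atom; your appeal to the predicates $\Label_\alpha$ is the one place where the argument uses the labeled-tree setting rather than being purely syntactic over arbitrary unary/binary schemas, but this is harmless for every application of the lemma in this paper.
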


Finally, Step~1 of our agenda is established by the following
Lemma~\ref{Lemma:GKS}, which generalises a result by
Gottlob, Koch, and Schulz \cite[Theorem~6.6]{GottlobKochSchulz} to 
queries that may make use of the $\Fc$-predicate.

\begin{Lemma}\label{Lemma:GKS}
 Every unary $\mDatalog(\tauGK^{\Child,\Desc})$-query $Q$ can be
 rewritten, in 1-fold exponential time, into an equivalent
 $\mDatalog(\tauGK^{\Child,\Desc})$-query $Q'$ such that each rule in the program
 of $Q'$ is acyclic.
\end{Lemma}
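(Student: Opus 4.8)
The plan is to adapt the argument of Gottlob, Koch, and Schulz \cite[Theorem~6.6]{GottlobKochSchulz} so that it also copes with the first-child relation $\Fc$, and to apply it rule by rule. First I would observe that, since every intensional predicate of a monadic datalog program is unary, the only atoms in the body of a rule $r$ that contribute an edge to the directed rule graph $G_r$ are the binary atoms over $\{\Fc,\Ns,\Child,\Desc\}$; all unary atoms (be they $\Label_\alpha$, $\Root$, $\Leaf$, $\Ls$, or intensional) are irrelevant for (un)acyclicity and will simply be carried along unchanged. Hence it suffices to show: every rule $r$ of schema $\tauGK^{\Child,\Desc}$ can be replaced, in exponential time, by a finite set $R_r$ of \emph{acyclic} rules with the same head predicate such that, over all trees, a valuation satisfies $\mathrm{body}(r)$ if and only if it satisfies $\mathrm{body}(\rho)$ for some $\rho\in R_r$. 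Replacing each rule $r$ of the program of $Q$ by $R_r$ then yields $Q'$, and since I will bound $|R_r|$ exponentially in $\size{r}\leq\size{Q}$, the whole construction runs in $1$-fold exponential time.

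To build $R_r$, let $z_1,\dots,z_k$ be the variables of $r$. Call a \emph{configuration} a set $c$ of positive atoms over $\{\Fc,\Ns,\Child,\Desc\}$ on $\{z_1,\dots,z_k\}$ together with an equivalence relation on $\{z_1,\dots,z_k\}$ that is the full ``relative-position type'' of some $k$-tuple of nodes in some $\Sigma$-labeled tree, i.e.\ there are a tree $T$ and nodes $v_1,\dots,v_k$ of $T$ realizing exactly these atoms and exactly these coincidences. Using the rigidity of trees --- each node has a unique parent and at most one $\Fc$- and at most one $\Ns$-predecessor and -successor, any two common ancestors of a node are comparable, and $\Desc$ is the transitive closure of $\Child$ --- one sees that there are only exponentially many configurations in $k$, and that they can be enumerated, and each tested for joint satisfiability with $\mathrm{body}(r)$ (that is, whether every binary atom of $r$ lies in $c$), in exponential time. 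For each configuration $c$ that is consistent with $\mathrm{body}(r)$ in this sense I would put into $R_r$ the rule $r_c$ obtained by (i) identifying variables according to the equivalence relation of $c$ (sound, since by realizability $c$ never equates two variables joined by an irreflexive atom such as $\Child$ or $\Fc$), (ii) keeping all unary atoms of $r$, and (iii) adding a \emph{skeleton} $\mathrm{skel}(c)$ of binary atoms, namely the ``first-child/next-sibling pattern'' of $c$: for every variable $x$ and every maximal consecutive run $y_1,\dots,y_m$ of variables that are children of $x$ in $c$, put $\Fc(x,y_1)$ if $c$ forces $y_1$ to be the first child of $x$ and $\Child(x,y_1)$ otherwise, together with $\Ns(y_1,y_2),\dots,\Ns(y_{m-1},y_m)$; for sibling runs whose common parent is not among the variables, keep only the $\Ns$-chain; and for the remaining covering pairs of the ancestor order among the variables, add a single $\Desc$-edge (or $\Child$/$\Fc$ where $c$ already forces adjacency).

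The heart of the proof is then to verify three properties of $r_c$. \emph{Soundness} ($r_c$ implies $r$ over trees): $r$'s unary atoms are present verbatim, and every binary atom of $r$ lies in $c$ and is entailed over trees by $\mathrm{skel}(c)$, using transitivity of $\Desc$, the inclusions $\Child,\Fc\subseteq\Desc$, and the inference $\Child(x,y)\wedge\Ns(y,z)\Rightarrow\Child(x,z)$, so that the covering/first-child/next-sibling edges reconstruct every $\Child$-, $\Fc$- and $\Desc$-atom recorded by $c$. \emph{Completeness} ($r$ implies the disjunction of the $r_c$): a valuation $\beta$ satisfying $\mathrm{body}(r)$ over a tree realizes a unique configuration $c_\beta$, which is consistent with $\mathrm{body}(r)$, and $\beta$ satisfies $\mathrm{body}(r_{c_\beta})$ because every atom of $\mathrm{skel}(c_\beta)$ is true of $\beta$. \emph{Acyclicity}: $\mathrm{skel}(c)$, as an undirected multigraph on the identified variables, is a forest --- it is the transitive reduction of the ancestor order among the variables (a forest order, being the restriction of a tree order), with the redundant non-first parent edges into each $\Ns$-run replaced by that run, and attaching each sibling run to its parent by a single edge is precisely what avoids the triangles a naive encoding would create; a short case distinction, where realizability is used to exclude the degenerate ``parallel'' configurations, shows no cycle arises when these pieces are glued. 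I expect this last step --- namely fitting $\Fc$ into the GKS-style skeleton combinatorics, and checking that $\{\Fc,\Ns,\Child,\Desc\}$ genuinely suffices to express $\mathrm{skel}(c)$ (which it does, because the only binary atoms $\mathrm{skel}(c)$ must entail are those of $c$, themselves atoms over $\{\Fc,\Ns,\Child,\Desc\}$, so no ``following-sibling'' axis is ever needed) --- to be the main obstacle and the genuinely new content beyond \cite{GottlobKochSchulz}. Once the three properties hold for every rule, adding $\bigcup_r R_r$ to the program (the new rules share head predicates with the old ones, so the required disjunction is automatic in datalog) produces an equivalent program all of whose rules are acyclic, and the size and running-time bounds follow from the exponential bound on the number of configurations.
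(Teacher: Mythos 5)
Your route is genuinely different from the paper's. The paper proves Lemma~\ref{Lemma:GKS} by an \emph{iterative local repair}: while some rule has an undirected cycle, it either drops the rule (if the cycle is directed, hence unsatisfiable over trees) or picks a ``sink-most'' cycle variable $z$ with two incoming atoms $R(x,z)$, $S(y,z)$ and replaces this confluence by an equivalent disjunction of chain-shaped conjunctions, via a ten-way case analysis on $R,S\in\set{\Fc,\Ns,\Child,\Desc}$ (e.g.\ $\Desc(x,z)\und\Desc(y,z)$ becomes the three cases $x$ above $y$, $y$ above $x$, $y=z$); correctness of each local rewrite is immediate, and the nontrivial part --- termination within exponentially many steps --- is delegated to the potential-function argument of \cite[Lemma~6.4]{GottlobKochSchulz}. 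You instead do a \emph{one-shot global} construction: enumerate all realizable relative-position types of the rule's variables, and for each type consistent with the body emit one rule whose binary part is a canonical spanning skeleton (Hasse diagram of the ancestor order with sibling runs chained by $\Ns$). Your approach buys a transparent size bound ($2^{O(k^2)}$ types) and avoids the termination analysis entirely; the paper's approach buys locally obvious correctness and never has to reason about realizability of complete types or about a global skeleton.

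That said, as written your proposal has a real gap: the three properties of $r_c$ --- in particular acyclicity of the glued skeleton --- are exactly where the content lies, and you only assert that ``a short case distinction shows no cycle arises.'' This is not routine. For instance, if $u$ is a strict ancestor (not the parent) of every member of a sibling run $y_1,\ldots,y_m$ and no variable lies strictly between $u$ and the $y_i$, then \emph{every} pair $(u,y_i)$ is a covering pair of the ancestor order, and keeping all of them alongside the $\Ns$-chain creates the cycle $u$--$y_1$--$y_2$--$u$; you must prune to the single edge $(u,y_1)$ and then separately verify the entailment $\Desc(u,y_1)\und\Ns(y_1,y_2)\models\Desc(u,y_2)$ (true, but it needs the observation that a proper ancestor of a node is an ancestor of all its siblings) --- your description covers the analogous pruning only for parent edges. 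Likewise, ``enumerate the realizable types and test each in exponential time'' presupposes a complete finite axiomatisation of which atomic types over $\set{\Fc,\Ns,\Child,\Desc}$ are realized in trees (functionality of $\Fc$ and $\Ns$, comparability of ancestors, $\Child(x,z)$ excluding intermediate variables, etc.); you invoke ``rigidity of trees'' but give no such list, and your soundness and acyclicity arguments both lean on realizability to exclude degenerate cases. None of this looks unfixable, but the proposal defers precisely the steps that constitute the proof; the paper's ten-case local analysis is, in effect, the worked-out version of that deferred case distinction.
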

\begin{proof}
Let $\PP$ be the program of $Q$. We choose $Q'$ to have the same
query predicate as $Q$. The program $\PP'$ of $Q'$ is constructed as follows.

\medskip
\noindent

We initialise $\PP'$ to be equal to $\PP$. 
Then, while $\PP'$ is \emph{not} acyclic, do the following:
Let $r$ be a rule in $\PP'$ that is not acyclic. Remove $r$ from $\PP'$.

\medskip

\noindent
\emph{Case 1:} \ If $r$ contains a directed cycle, note that $r$ is
not~satisfiable (since the directed cycle is built from the axes
$\Fc$, $\Ns$, $\Child$, $\Desc$). 
\\
Thus, we simply drop $r$.

\medskip

\noindent
\emph{Case 2:} \ Otherwise, $r$ must contain an undirected cycle, but 
no directed cycle. 
Then, the directed query graph $G_r$ is a DAG, and 
there must exist a variable $z$ of $r$ which belongs to an undirected cycle,
such that $G_r$ contains no directed path
from $z$ to another variable that belongs to an undirected cycle. 
For this variable $z$, the rule's body must contain two atoms of the form
$R(x,z)$ and $S(y,z)$ (where $R,S\in\set{\Fc,\Ns,\Child,\Desc}$, and
$x,y$ are variables). We make the following case distinction:
\begin{enumerate}[(i)]
 \item
  In case that $R=\Fc$ and $S=\Ns$ (or vice versa), note that the rule
  is unsatisfiable, and hence we simply drop $r$.
  \smallskip
 \item\label{item:ii}
  In case that $R=S\in\set{\Fc,\Ns,\Child}$, note that $R(x,z)\und
  S(y,z)$ is equivalent to $R(x,z)\und y{=}x$. Thus, we let
  $\tilde{r}$ be the rule obtained from $r$ by omitting the atom
  $S(y,z)$ and replacing all occurrences of $y$ by $x$. We add
  $\tilde{r}$ to $\PP'$.
  \smallskip
 \item
  In case that $R=S=\Desc$, note that $R(x,z)\und S(y,z)$ is
  equivalent to $\varphi\deff$
  \[
     \big(\Desc(x,y)\und\Desc(y,z)\big)
     \oder
     \big(\Desc(y,x)\und\Desc(x,z)\big)
     \oder
     \big(\Desc(x,z)\und y{=}z\big).
  \]
  For each $i\in\set{1,2,3}$ we let $\tilde{r}_i$ be the rule obtained
  from $r$ by replacing ``$R(x,z),S(y,z)$'' with the $i$-th clause of
  $\varphi$. Concerning $\tilde{r}_3$, we furthermore delete the atom $y{=}z$ and replace all occurrences of $y$
  by $z$. We add $\tilde{r}_1$, $\tilde{r_2}$, and $\tilde{r}_3$ to $\PP'$.
  \smallskip
 \item
  In case that $R=\Fc$ and $S=\Child$ (or vice versa), note that 
  $R(x,z)\und S(y,z)$ is equivalent to $R(x,z)\und y{=}z$. Hence, we
  proceed in the same way as in case~\eqref{item:ii}.
  \smallskip
 \item
  In case that $R=\Ns$ and $S\in\set{\Child,\Desc}$ (or vice versa), note that
  $R(x,z)\und S(y,z)$ is equivalent to $R(x,z)\und S(y,x)$.
  We let $\tilde{r}$ be the rule obtained from $r$ by replacing the
  atom $S(y,z)$ with the atom $S(y,x)$, and we add $\tilde{r}$ to
  $\PP'$.
  \smallskip
 \item
  In case that $R\in\set{\Fc,\Child}$ and $S=\Desc$ (or vice versa),
  note that $R(x,z)\und S(y,z)$ is equivalent to $\varphi\deff$
  \[
    \big( R(x,z)\und \Desc(y,x) \big)
    \oder
    \big( R(x,z)\und y{=}z \big).
  \]
  For each $i\in\set{1,2}$ we let $\tilde{r}_i$ be the rule obtained
  from $r$ by replacing ``$R(x,z),S(y,z)$'' with the $i$-th clause of
  $\varphi$.
  Concerning $\tilde{r}_2$, we furthermore delete the atom $y{=}z$ and replace all occurrences of
  $y$ by $z$. We add $\tilde{r}_1$ and $\tilde{r}_2$ to $\PP'$.
\end{enumerate}

\noindent
Clearly, the obtained query $Q'$ is equivalent
to the original query $Q$. Furthermore, along the same lines as in the
proof of \cite[Lemma~6.4]{GottlobKochSchulz}, one can show that the
algorithm terminates after a number of steps that is at most 1-fold
exponential in the size of the input query $Q$. Of course, upon
termination the program $\PP'$ is acyclic.
Thus, the proof of Lemma~\ref{Lemma:GKS} is complete.
\end{proof}

Finally, we are ready for the proof of Lemma~\ref{Lemma:GettingRidOfDesc}.

\begin{proofof}{Lemma~\ref{Lemma:GettingRidOfDesc}}
Let $Q$ be the given $\mDatalog(\tauGK^{\Child,\Desc})$-query.
\\
Using Lemma~\ref{Lemma:GKS} we construct, within 1-fold exponential time,
an equivalent $\mDatalog(\tauGK^{\Child,\Desc})$-query $Q_1$
such that each rule in the program $\PP_1$ of $Q_1$ is acyclic.
By applying Lemma~\ref{remark_2} to each rule of $\PP_1$, we obtain an
equivalent $\mDatalog(\tauGK^{\Child,\Desc})$-query $Q_2$ such that 
each rule in the program $\PP_2$ of $Q_2$ is one of the following forms:
 \[
   X(x) \leftarrow R(y,x),Y(y) 
   \qquad
   X(x) \leftarrow R(x,y),Y(y)
   \qquad
   X(x)\leftarrow Y(x), Z(z)
 \]  
with $R\in\set{\Fc,\Ns,\Child,\Desc}$.
\\
Applying Fact~\ref{Fact:TMNF-Desc2Child}, we then replace every
rule of $\PP_2$ that contains the $\Desc$-relation by two rules that
use the $\Child$-relation. 
\\
This leads to an
$\mDatalog(\tauGK^{\Child})$-query $Q_3$ that is equivalent to
$Q$. Furthermore, $Q_3$ is computed in time 1-fold exponential in the
size of $Q$. 
\end{proofof}

\end{document}